\renewcommand{\vec}[1]{\boldsymbol{#1}}
\newcommand\vz{\vec{z}}
\newcommand\eps{\epsilon}
\newcommand\zket{|0\rangle}
\newcommand\oket{|1\rangle}
\newcommand{\ket}[1]{ \vert #1 \rangle }
\newcommand\eul{\mathrm{e}}
\newcommand\chiket[1]{|\chi_{#1}\rangle}
\newcommand{\bra}[1]{\langle #1 \vert}
\newcommand\braphi{\langle \Phi}
\newcommand{\abs}[1]{\left\vert #1\right\vert}
\newcommand\ZZ{\mathbb{Z}}
\renewcommand{\boxed}[1]{\text{\fboxsep=.2em\fbox{\m@th$\displaystyle#1$}}}
\newcommand*\bigcdot{\mathpalette\bigcdot@{.5}}
\newcommand*\bigcdot@[2]{\mathbin{\vcenter{\hbox{\scalebox{#2}{$\m@th#1\bullet$}}}}}
\newcommand\JFsq{J_{\Fsq}}
\newcommand\fsq{f_{\rm sqr}}
\newcommand\Fsq{F_{\rm sqr}}
\newcommand\fmred{f^m_{\rm sqr}}
\newcommand\tp{\otimes}
\newcommand\parder[2]{\frac{\partial {#1}}{\partial {#2}}}
\newcommand\bc[1]{\left({#1}\right)}
\newcommand\cbc[1]{\left\{{#1}\right\}}
\newcommand\brk[1]{\left[{#1}\right]}
\newcommand\CC{\mathbb{C}}
\newcommand{\whp}{w.h.p.}
\newtheorem{definition}{Definition}[section]
\newtheorem{claim}[definition]{Claim}
\newtheorem{example}[definition]{Example}
\newtheorem{remark}[definition]{Remark}
\newtheorem{theorem}[definition]{Theorem}
\newtheorem{lemma}[definition]{Lemma}
\newtheorem{proposition}[definition]{Proposition}
\newtheorem{corollary}[definition]{Corollary}
\newcommand\prodsat{\mathrm{PRODSAT}~}
\newcommand\qsat{\mathrm{QSAT}}
\newcommand\lr{\mathrm{LR}}
\newcommand\multideg{\operatorname{multideg}}
\newcommand\dimKerH{\dim{\ker H_F}}
\newcommand\dimProd{\dim{\rm PRODSAT}}
\newcommand\alphaprod{\alpha_{\rm PRODSAT}}
\newcommand\alphalr{\alpha_{\rm lr}}
\newcommand\alphadc{\alpha_{\rm dc}}
\title{The PRODSAT phase of random quantum satisfiability}
\author{Joon Lee $^1$ }
\address{1 Leiden University - Leiden Institute of Advanced Computer Science}
\author{Nicolas Macris $^2$}
\address{2 EPFL - School of Computer and Communications Sciences}
\author{Jean Bernoulli Ravelomanana $^2$}
\author{Perrine Vantalon $^2$}
\thanks{j.h.lee@liacs.leidenuniv.nl, nicolas.macris@epfl.ch, rjeanbernoulli@gmail.com, perrine.vantalon@epfl.ch}
\begin{document}

\maketitle

\begin{abstract}
	The $k$-QSAT problem is a quantum analog of the famous $k$-SAT constraint satisfaction problem. We must determine the zero energy ground states of a Hamiltonian of $N$ qubits consisting of a sum of $M$ random $k$-local rank-one projectors. It is known that product states of zero energy exist with high probability if and only if the underlying factor graph has a clause-covering dimer configuration. This means that the  threshold of the PRODSAT phase is a purely geometric quantity equal to the dimer covering threshold. We revisit and fully prove this result through a combination of complex analysis and algebraic methods based on Buchberger's algorithm for complex polynomial equations with random coefficients. We also discuss numerical experiments investigating the presence of entanglement in the PRODSAT phase in the sense that product states do not span the whole zero energy ground state space.
\end{abstract}

\section{Introduction}
The quantum version of the $k$-satisfiability problem ($k$-QSAT) was introduced by Bravyi \cite{Bravyi} as a natural quantum analog of the classical $k$-SAT constraint satisfaction problem which has been a focal point in classical computer science ever since it was recognized to be NP-complete by Cook and Levin \cite{Cook, LevinTranslated}. 
In classical $k$-SAT one must determine the satisfiability of a logical formula in conjunctive normal form. Boolean variables $(x_1, \dots, x_N)\in \{0,1\}^N$ must simultaneously satisfy $M$ constraints in the form of disjunctions  of $k$ literals,
$\chi_{m_1} \lor \dots \lor \chi_{m_k}$ where $(m_1, m_2, \dots , m_k)\in \{1,\dots, N\}^k$ and  $\chi_{m_i}\in \{x_{m_i}, \neg x_{m_i}\}$. We label the constraints as $m\in \{1, \dots, M\}$ and by a slight abuse of notation identify $ m\equiv (m_1, \dots, m_k)$.
A $k$-SAT formula is a conjunction of disjunctions $F = \land_{m=1}^M(\lor_{i=1}^k\chi_{m_i})$. The satisfiability of $F$ may be formulated as the study of the set of zero energy assignments for the classical Hamiltonian function
\begin{align}\label{classical_cost}
h_F(x_1, \dots, x_{N}) = \sum_{m=1}^M \bc{1-\mathbbm{1}(\chi_{m_1} \lor \dots \lor \chi_{m_k})} 
\end{align}
In the quantum analog $k$-QSAT, the Boolean variables are replaced by $N$ quantum bits (or qubits) labeled $\{1, \dots, N\}$ in a collective pure state vector
$\ket{\Psi}$ belonging to the Hilbert space $\mathbb{C}^2_1\otimes\dots\otimes\mathbb{C}^2_N$.\footnote{We note that in this quantum model the degrees of freedom are {\it distinguishable} and hence labeled.} The state of the qubits must simultaneously satisfy a set of $M$ quantum constraints labeled by $m\in \{1,\dots, M\}$. Each constraint ensures that $\vert\Psi\rangle$ is in the kernel of the projector 
\begin{align}\label{m_projector}
\Pi_m= \ket{\Phi^m}\bra{\Phi^m}\otimes I_{N-k}
\end{align}
where $\ket{\Phi^m}\in \mathbb{C}^2_{m_1}\otimes\dots\otimes \mathbb{C}^2_{m_k}$ is a pure state vector of the Hilbert space of the $k$ qubits $(m_1, \dots, m_k)$ involved in the constraint $m$. The matrix $I_{N-k}$ is the $2^{N-k}\times 2^{N-k}$ identity matrix acting trivially on the remaining $N-k$ qubits not involved in the constraint $m$. A $k$-QSAT ``formula" $F$
is defined by the collection of projectors 
$\{\Pi_m, m\equiv(m_1, \dots, m_k)\}$.
In $k$-QSAT, the Hamiltonian is given by the $2^N\times 2^N$ matrix 
\begin{align}\label{quantum_hamiltonian}
H_F= \sum_{m=1}^M \Pi_m .
\end{align}
We say that $F$ is satisfiable if the Hamiltonian has zero energy eigenstates, in other words if $\ker(H_F)$ (also called kernel or ground state space) contains a non-trivial vector.

This Hamiltonian represents a natural quantum generalization of the classical cost function (\ref{classical_cost}). First note that each classical disjunction excludes one among $2^k$ assignments of its $k$ Boolean variables. Analogously each projector \eqref{m_projector} excludes one direction in the $2^k$ dimensional Hilbert space of $k$ qubits. Furthermore, it is not difficult to see that when
$\vert \Phi^m\rangle$'s are tensor products of computational basis vectors of $\mathbb{C}^2$, the matrix Hamiltonian \eqref{quantum_hamiltonian} reduces to a diagonal matrix (in the computational basis) with diagonal given by values of \eqref{classical_cost} for all possible $2^N$ assignments. Of course this is not so anymore when $\vert\Phi^m\rangle$ is an arbitrary state of $\mathbb{C}^2_{m_1}\otimes \dots\otimes \mathbb{C}^2_{m_k}$, be it a tensor product or entangled state. Finally it is well known that $k$-SAT is NP-complete for $k \geq 3$. The analogous statement for $k$-QSAT is that it is QMA-complete for $k \geq 3$ (see \cite{Bravyi,gosset2016quantum} for a discussion). 

In this paper we look at the {\it average case} analysis of $k$-QSAT. In this formulation the Hamiltonian is taken at random from a set of instances and the problem is to determine the typical behavior of the kernel space. This is perfectly analogous to the random $k$-SAT problem which studies the typical behavior of the space of zero cost assignments of a random formula.
To formulate the problem more precisely we must define an ensemble of random Hamiltonians (or formulas). This is best done in the language of random factor graphs and is beneficial because it turns out that important typical properties are determined by typical geometric properties of these factor graphs.  
A factor graph is a bipartite graph with labeled ``variable nodes" $\{1, \dots, N\}$(associated to qubits or Boolean variables) and labeled ``constraint nodes"
$\{1, \dots, M\}$ (associated to projectors or disjunctions). For each constraint node $m$ we pick a $k$-tuple of variable nodes $(m_1, \dots , m_k)$ uniformly at random among all ${N}\choose{k}$ possible ones, and draw edges 
$(m, m_1),\dots, (m, m_k)$. This ensemble of random graphs is denoted $G^k_{N, M}$.
There is a further level of randomness. In the classical problem once a graph from $G^k_{N,M}$ is chosen, the variables in each disjunction are negated/non-negated with probability one-half (this information is usually encoded in the graph as a dashed/undashed edge). In the quantum problem, once a graph is chosen from $G^k_{N,M}$, one  samples the state $\vert\Phi^m\rangle$ uniformly at random in the (complex) Hilbert space of $k$ qubits (this time this information is not encoded in the graph structure). In practice $\vert\Phi^m\rangle$ is sampled by generating a $2^k$-dimensional complex Gaussian vector with i.i.d $\mathcal{C}\mathcal{N}(0, 1)$ components and then is normalized to make it a unit norm.\footnote{$\mathcal{C}\mathcal{N}(0, 1)$ means that real and imaginary parts are independent and distributed as $\mathcal{N}(0, 1/2)$.}

The parameter $\alpha= M/N$ is called the constraint density of the ensemble. There is a large literature on the phase diagram of random $k$-SAT in the thermodynamic limit $N, M\to +\infty$ with $M/N \to\alpha$ fixed. Structural and algorithmic phase transitions, as well as their interplay, are largely determined, although many questions remain unanswered (we refer to \cite{Montanari-Mezard, krzakala2007gibbs, SlyKsat} for more information). For random $k$-QSAT the current state of knowledge is more rudimentary and is summarized in figure \ref{fig:phase_diagramQSAT}.

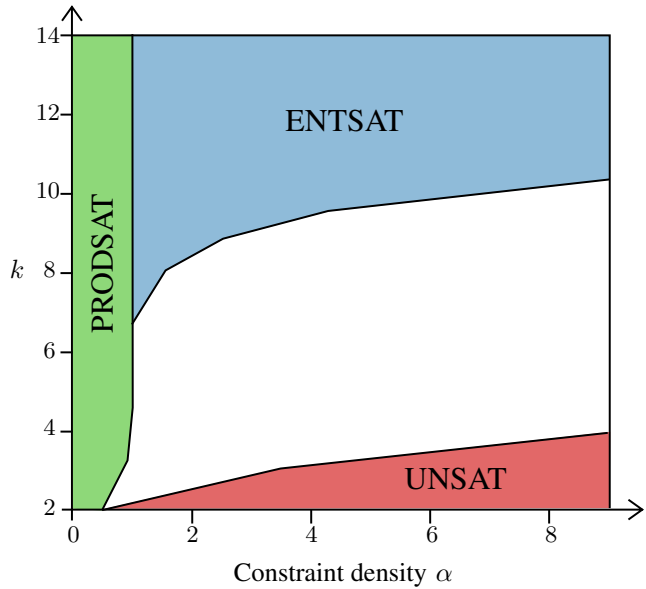
\begin{figure}
    \centering
    \tikzset{every picture/.style={line width=0.75pt}} 

\begin{tikzpicture}[x=0.75pt,y=0.75pt,yscale=-1,xscale=1]

\draw  [draw opacity=0][fill={rgb, 255:red, 214; green, 39; blue, 40 }  ,fill opacity=0.7 ] (165,248.75) -- (331.25,230.25) -- (331,268.75) -- (76.75,269) -- cycle ;
\draw  [draw opacity=0][fill={rgb, 255:red, 94; green, 200; blue, 64 }  ,fill opacity=0.7 ] (60.12,30.37) -- (90.26,29.76) -- (90.42,218.28) -- (87.81,244.93) -- (75,270) -- (60.12,269) -- cycle ;
\draw  [draw opacity=0][fill={rgb, 255:red, 31; green, 119; blue, 180 }  ,fill opacity=0.5 ] (331,30.37) -- (330.82,103.09) -- (189,119) -- (136,133) -- (107,149) -- (90.01,176.27) -- (90.26,29.76) -- cycle ;
\draw  (54.67,269.5) -- (347.44,269.5)(59.77,15.75) -- (59.77,274.98) (340.44,264.5) -- (347.44,269.5) -- (340.44,274.5) (54.77,22.75) -- (59.77,15.75) -- (64.77,22.75)  ;
\draw    (54,230.12) -- (60.12,230.12) ;
\draw    (54,190.12) -- (60.12,190.12) ;
\draw    (54,150.12) -- (60.12,150.12) ;
\draw    (54,110.12) -- (60.12,110.12) ;
\draw    (54,70.12) -- (60.12,70.12) ;
\draw    (54,30.12) -- (60.12,30.12) ;
\draw    (120.44,269.52) -- (120.44,276.52) ;
\draw    (180.44,269.52) -- (180.44,276.52) ;
\draw    (240.44,269.52) -- (240.44,276.52) ;
\draw    (300.44,269.52) -- (300.44,276.52) ;

\draw    (75,269.75) -- (87.81,244.68) -- (90.42,218.03) -- (90.26,29.51) ;
\draw    (75,269.75) -- (165,248.75) -- (330,230.75) ;
\draw    (90.01,176.02) -- (107,148.75) -- (136,132.75) -- (189,118.75) -- (330.82,102.84) ;
\draw   (60.12,30.12) -- (331,30.12) -- (331,268.75) ;

\draw (44,262.75) node [anchor=north west][inner sep=0.75pt]  [font=\scriptsize] [align=left] {$\displaystyle 2$};
\draw (44,222.75) node [anchor=north west][inner sep=0.75pt]  [font=\scriptsize] [align=left] {$\displaystyle 4$};
\draw (44,183.75) node [anchor=north west][inner sep=0.75pt]  [font=\scriptsize] [align=left] {$\displaystyle 6$};
\draw (44,142.75) node [anchor=north west][inner sep=0.75pt]  [font=\scriptsize] [align=left] {$\displaystyle 8$};
\draw (40,102.75) node [anchor=north west][inner sep=0.75pt]  [font=\scriptsize] [align=left] {$\displaystyle 10$};
\draw (40,63.75) node [anchor=north west][inner sep=0.75pt]  [font=\scriptsize] [align=left] {$\displaystyle 12$};
\draw (40,24.75) node [anchor=north west][inner sep=0.75pt]  [font=\scriptsize] [align=left] {$\displaystyle 14$};
\draw (56,276.75) node [anchor=north west][inner sep=0.75pt]  [font=\scriptsize] [align=left] {$\displaystyle 0$};
\draw (116,276.75) node [anchor=north west][inner sep=0.75pt]  [font=\scriptsize] [align=left] {$\displaystyle 2$};
\draw (176,276.75) node [anchor=north west][inner sep=0.75pt]  [font=\scriptsize] [align=left] {$\displaystyle 4$};
\draw (237,276.75) node [anchor=north west][inner sep=0.75pt]  [font=\scriptsize] [align=left] {$\displaystyle 6$};
\draw (297,276.75) node [anchor=north west][inner sep=0.75pt]  [font=\scriptsize] [align=left] {$\displaystyle 8$};
\draw (140,294.75) node [anchor=north west][inner sep=0.75pt]  [font=\footnotesize] [align=left] {{\fontfamily{ptm}\selectfont Constraint density} $\displaystyle \alpha $};
\draw (27,141.75) node [anchor=north west][inner sep=0.75pt]  [font=\footnotesize] [align=left] {$\displaystyle k$};
\draw (226,246.75) node [anchor=north west][inner sep=0.75pt]   [align=left] {{\fontfamily{ptm}\selectfont UNSAT}};
\draw (166,65.75) node [anchor=north west][inner sep=0.75pt]   [align=left] {{\fontfamily{ptm}\selectfont ENTSAT}};
\draw (68,181.75) node [anchor=north west][inner sep=0.75pt]  [rotate=-270] [align=left] {{\fontfamily{ptm}\selectfont PRODSAT}};

\end{tikzpicture}
    \caption{\small Phase diagram of QSAT as found in \cite{QSATShearerLaumann}. In the red region the problem is UNSAT and it is SAT in the green and blue regions. Furthermore it is PRODSAT in the green region and ENTSAT in the blue region (no satisfying product states but satisfying entangled states). The only line which is known to be exact is the one delimiting the green region: we have $\alphadc(k=3)\approx 0.92$ and for $k\geq 4$ this threshold monotonically approaches $1$ from below. The two lines delimiting the red and blue regions give respectively upper and lower bounds on the SAT-UNSAT threshold. The proof techniques of these bounds are adapted from corresponding proofs in the classical problem.}
    \label{fig:phase_diagramQSAT}
\end{figure}

The quantum phase diagram is richer than its classical counterpart already at the level of structural phase transitions, and almost nothing is known about algorithmic phase transitions. One may distinguish between PRODSAT, ENTSAT, and UNSAT phases. The UNSAT phase is simply the one where no zero energy eigenstates exist with high probability (w.h.p.). The SAT phase on the contrary is the one where zero energy eigenstates exist w.h.p. It can be decomposed into a PRODSAT phase for which there exist zero energy eigenstates which are fully factorized into a tensor product of single qubit states, and an ENTSAT phase where the zero energy eigenstates cannot be fully factorized into single qubit states (of course one could also envision more refined decompositions of the ENTSAT phase corresponding to partial ``non-single-qubit" factorizations). It is rigorously known that for $k=2$ the ENTSAT phase does not exist and that a sharp PRODSAT-UNSAT phase transition takes place with threshold $\alpha_{\rm c} =1/2$ \cite{QSATLaumann}. This is really a geometric transition, closely tied to the sudden proliferation of closed loops in the random factor graphs at this critical density. For $k\geq 3$ we only have loose upper and lower bounds for the various thresholds. In particular the existence of an ENTSAT phase is only proven for $k\geq 7$ and what happens for $3\leq k\leq 6$ is unclear. 

In a series of very interesting papers \cite{QSATLaumann, ProdSATLaumann},  it is shown that PRODSAT states exist if and only if the factor graph has a constraint-covering dimer configuration. A constraint-covering dimer configuration is a set of edges where all constraints are covered and no two edges meet at a constraint or at a variable node (but some variable nodes may not be covered). This is again a purely geometrical characterization of the PRODSAT phase. Using previous work on random graph theory \cite{KarpMatch, bordenave2013matchings}, this characterization allows to identify the maximum constraint density for which PRODSAT states exist w.h.p., as $ \alphaprod = \alphadc(k)$,  where $\alphadc(k)$ is the threshold corresponding to the existence of dimer coverings (in thermodynamic limit). However there remains the algorithmic question: can one efficiently find PRODSAT states and with what complexity? This question has been partly answered by looking at a purely geometric leaf-removal based algorithm which determines PRODSAT states in linear time $O(N)$ for $\alpha < \alphalr(k)< \alphadc(k)$. It has remained  unanswered for 
$\alphalr(k) < \alpha < \alphadc(k)$.

In this paper we concentrate on the PRODSAT phase and make the above picture fully rigorous.  

\begin{theorem}[Main Theorem]\label{mainthm}
Take a factor graph from the ensemble $G^k_{N,M}$. Given this graph take a set of $M$ projectors $\{\Pi_m, m=(m_1, \dots, m_k)\}$ uniformly at random. This defines a random formula or equivalently a Hamiltonian $H_F$ in \eqref{quantum_hamiltonian}. Let $\mathbb{P}_F$ be the probability with respect to this ensemble of random Hamiltonians. We have 
\begin{align*}
\lim_{N, M \to +\infty}\mathbb{P}_F\bigl(\exists \, \vert \Psi\rangle=\vert\varphi_1\rangle\otimes\dots\otimes \vert\varphi_N\rangle\in (\mathbb{C}^2)^{\otimes N}: H_F\vert \Psi\rangle=0\bigr) = 
\begin{cases} 
1, \,\,\, \alpha < \alphadc (k) \\
0, \, \,\, \alpha > \alphadc (k)
\end{cases}
\end{align*}
where the limit is such that $M/N \to \alpha$ fixed. 
\end{theorem}

\begin{remark}\label{dimerRemark}
    It is known \cite{KarpMatch, bordenave2013matchings} that the dimer covering threshold of the factor graph ensemble satisfies 
$\lim_{N, M \to +\infty}\mathbb{P}_{G^k_{N,M}}\bigl(\exists \emph{\, a dimer covering}\bigr) =  
1$ for $\alpha < \alphadc (k)$ and $0$ for $\alpha > \alphadc (k)$.
\end{remark}

The proof draws on ideas already present in \cite{QSATLaumann}. 
In section \ref{Main results} we first reformulate the problem on the $2$-core of the factor graph and explain the main strategy of proof.
For the direct statement we combine two arguments: one is purely analytical (section \ref{Analytical perturbative argument}) based on the implicit function theorem for functions of multiple complex variables, and the second is purely algebraic (section \ref{Algebraic non-perturbative argument}) based on Buchbergers's algorithm for solving algebraic polynomial equations. In the process we remark that the Buchberger algorithm can be used to provide w.h.p PRODSAT solutions for $\alphalr(k)<\alpha <\alphadc(k)$. As we will see randomness plays an important role in the analysis. The generic complexity of the algorithm is doubly exponential and it is an open problem to determine what is the real algorithmic complexity of finding these solutions in this range of densities. The proof of the converse statement for $\alpha > \alphadc(k)$ is presented in section \ref{Converse Statement}.

The nature of entanglement in this problem (beyond its mere existence for $k\geq 7$) is a largely open question. We make a few numerical observations for formulas with $N=M$ \textit{finite} and such that dimer coverings exist. These formulas have a finite number of PRODSAT states w.h.p., however our observations suggest that for a fraction of the formulas these states do not span the whole kernel space of the Hamiltonian. Therefore there exists a subspace of the kernel space which only contains entangled states. These observations are presented in section \ref{Simulations}.

For the remaining of this paper we say that an event $\mathcal{E}$ happens  w.h.p. if  $\lim_{N, M \to +\infty}\mathbb{P}\bigl(\mathcal{E}) = 1$ where the limit is such that $M/N\to\alpha$ and $\mathbb{P}$ is with respect to an ensemble that depends on the context. For example in Theorem \ref{mainthm} the ensemble corresponds to the random Hamiltonians (random factor graphs and projectors) whereas in remark \ref{dimerRemark} it simply corresponds to random factor graphs.

\section{Strategy of analysis and main results}\label{Main results}

We proceed with a two-stage process. First, given a graph from $G^k_{N,M}$ we simplify the constraint satisfaction problem to a problem where the numbers of constraints (or projectors) and qubits are equal. Second, this residual constraint satisfaction problem is reformulated as the study of solutions of a set of polynomial equations in complex variables.

\subsection{First step: $\alpha < \alphalr (k) $}

This step is accomplished using the leaf removal process, a Markov process in the space of factor graphs. Given an initial graph $G\in G^k_{N,M}$ one iteratively removes degree-one variable nodes together with its attached constraint node, until the residual graph has minimal variable-node degree at least two (the process then stops). This residual graph is called the core (equivalently, 2-core or hypercore). The theoretical analysis of this Markov process is well known and reviewed in Appendix \ref{Append_LR}. Lemma \ref{lemma_core} defines a threshold $\alphalr(k)$ such that for $\alpha<\alphalr (k)$ the core is empty w.h.p. and for $\alpha>\alphalr(k)$ the core is not empty w.h.p.. Let us denote by $G^\prime$ the residual graph and assume it contains $M^\prime\leq M$ constraint nodes and $N^\prime\leq N$ variable nodes. 

Given any {\it product} state $\ket{\Psi^\prime}\in (\mathbb{C}^2)^{\otimes N^\prime}$ for the qubits of the core (assuming it is non-empty), we construct a state 
\begin{align}
\ket{\Psi} =  \ket{\Psi^\prime}\otimes \prod_{i\in G\setminus G^\prime}\ket{\varphi_i}
\end{align}
where $\ket{\varphi_i}$ are single qubit states iteratively constructed thanks to Bravyi's transfer matrix by reversing the leaf removal steps. If $m$ is a previously deleted constraint node along with the deleted variable node $m_i$ where $(m_1,\dots, m_{i-1}, m_{i+1}, \dots, m_{k})$ are the remaining neighboring variable nodes, the transfer matrix is a linear map $T: (\mathbb{C}^2)^{\otimes k-1} \to \mathbb{C}^2$ such that 
\begin{align}
    \ket{\chi_{m_i}}:=T\bc{\ket{\chi_{m_1}}\otimes\dots \otimes \ket{\chi_{m_{i-1}}}\otimes \ket{\chi_{m_{i+1}}}\otimes\dots \otimes\ket{\chi_{m_{k}}}}
\end{align}
and the constraint \eqref{m_projector} is satisfied,
\begin{align}
    \langle\Phi^m\vert\chi_{m_1}\rangle\otimes \dots\otimes \ket{\chi_{m_{k}}} = 0
\end{align}
for any single qubit states $\ket{\chi_{m_1}},\dots, \ket{\chi_{m_{i-1}}}, \ket{\chi_{m_{i+1}}},\dots, \ket{\chi_{m_{k}}}$. That such a linear map $T$ exists and can be constructed explicitly is reviewed in Appendix \ref{Append_LR}.
Because we take a product state for $\vert\Psi^\prime\rangle$ we can apply this transfer matrix to the qubits involved in the last constraint removed and get the qubit state of the last variable node removed. We can then iterate this process following the reversed steps of leaf removal until all qubits are assigned a state thereby obtaining $\vert \Psi\rangle$.

When the core $G^\prime$ is empty, this construction yields a PRODSAT zero energy state simply by starting with an arbitrary tensor product of the $k-1$ qubits connected to the last removed leaf with its attached constraint. 

Thus, we have proved the intermediate result:

\begin{lemma}\label{lr-pre-lemma}
For $\alpha<\alphalr (k)$ there exist PRODSAT zero energy states w.h.p.. Furthermore their construction has time-complexity bounded by $O(N)$. 
\end{lemma}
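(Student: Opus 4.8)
The plan is to derive the lemma by combining the emptiness of the core (Lemma \ref{lemma_core}) with the reverse leaf‑removal reconstruction described above, and then to count the operations involved. First I would condition on the event that the core $G'$ of $G$ is empty, which by Lemma \ref{lemma_core} holds with high probability when $\alpha < \alphalr(k)$. On this event the leaf removal process terminates with the empty graph and, as a by‑product, outputs an ordered history of its steps: every step either deletes an isolated (degree‑zero) variable node, or deletes a degree‑one variable node $m_i$ together with its unique attached constraint $m$, in which case the $k-1$ other variable nodes $m_1,\dots,m_{i-1},m_{i+1},\dots,m_k$ of $m$ survive that step. Running this history backwards defines the assignment: process the steps in reverse order, giving $\ket{0}$ to each isolated node when its deletion step is reached, and, when the step deleting $(m,m_i)$ is reached, setting $\ket{\varphi_{m_i}} := T\bc{\ket{\varphi_{m_1}}\tp\cdots\tp\ket{\varphi_{m_{i-1}}}\tp\ket{\varphi_{m_{i+1}}}\tp\cdots\tp\ket{\varphi_{m_k}}}$, with $T$ the transfer matrix attached to the projector $\ket{\Phi^m}\bra{\Phi^m}$ as recalled in Appendix \ref{Append_LR}.

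The step I expect to require the most care is checking that this reverse pass is consistent. Since every variable node is deleted exactly once, each $\ket{\varphi_i}$ is assigned exactly once. For any deleted constraint $m$, its $k-1$ non‑leaf variables survive the deletion of $m$, hence are deleted strictly later in the forward process, hence are processed strictly earlier in the reverse pass; so when we come to $m$ the arguments of $T$ are already fixed. Conversely, $m_i$ has degree one at its deletion, so it is contained in no constraint deleted after $m$, and therefore assigning $\ket{\varphi_{m_i}}$ at this moment cannot invalidate any constraint that was already "closed off" earlier in the reverse pass. One also has to invoke the statement of Appendix \ref{Append_LR} that the output of $T$ is nonzero, hence normalizable to a legitimate qubit state, along the whole reconstruction — a genericity property of the random vectors $\ket{\Phi^m}$ that holds with high probability. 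All the probabilistic content of the lemma is confined to these two inputs (Lemma \ref{lemma_core} and the non‑degeneracy of $T$); the rest is deterministic.

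Granting the above, $\ket{\Psi} = \bigotimes_i \ket{\varphi_i}$ is a product state, and since $\langle \Phi^m \vert \varphi_{m_1}\tp\cdots\tp\varphi_{m_k}\rangle = 0$ for every constraint $m$ by construction while the Hamiltonian is a sum of projectors and hence positive semidefinite, $\ket{\Psi}$ is a zero energy ground state, i.e.\ a PRODSAT zero energy state. It remains to bound the cost: leaf removal runs in time $O(N+M)$ by keeping a worklist of degree‑$\le 1$ variable nodes and spending $O(k)$ per deletion to update degrees, and the reverse pass performs at most $M$ applications of the fixed $2\times 2^{k-1}$ matrix $T$ together with $O(N)$ trivial assignments, each costing $O(1)$ for $k$ fixed; since $M = \alpha N$, the entire construction is $O(N)$. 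Apart from the combinatorial bookkeeping flagged above, every step here is routine.
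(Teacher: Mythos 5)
Your argument follows the same route as the paper's proof in Appendix~\ref{Append_LR}: invoke Lemma~\ref{lemma_core} to get an empty core w.h.p.\ for $\alpha<\alphalr(k)$, then run leaf removal backwards assigning states via Bravyi's transfer matrix $T$, and finally count operations. Your consistency check is exactly the content of the paper's Claim~\ref{claim_leaf}, namely that the leaf $m_i$ is not yet assigned when its clause $m$ is processed, and that the other neighbors of $m$ are already fixed by then. The complexity accounting also matches (a constant $O(k)$ work per clause, $M=\alpha N$ clauses, hence $O(N)$).

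One small remark on the normalizability point you raise. You say one must invoke a genericity property ensuring $T$'s output is nonzero; there is in fact no such statement in the appendix, and the paper simply glosses over this. However, no genericity is actually needed: if $T$ applied to the fixed input is the zero vector, this means both coefficients in equation~\eqref{prelin} vanish, so the constraint $\langle\Phi^m\vert\chi_{m_1}\rangle\otimes\cdots\otimes\ket{\chi_{m_k}}=0$ holds for \emph{every} choice of $\ket{\chi_{m_k}}$; one can then pick an arbitrary nonzero qubit state and proceed. Your genericity argument is not wrong (it does hold with probability one), but it is stronger than what is required, and framing it as an essential probabilistic input slightly overstates the role of randomness in this step. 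Apart from this cosmetic difference the proposal is correct and matches the paper's proof.
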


\subsection{Second step: $ \alphalr (k) <\alpha < \alphadc (k) $}

From now on we assume the core $G^\prime$ is non-empty. In order to prove Theorem \ref{mainthm} we must solve a constraint satisfaction problem on this residual graph. 
More precisely we must show that there exists $\vert\Psi^\prime\rangle$ of {\it product} form such that for all $m\in G^\prime$
\begin{align}\label{constraint-on-core}
    \vert\Phi^m\rangle\langle \Phi^m\vert\otimes I_{N^\prime -k}\vert\Psi^\prime\rangle = 0
\end{align}
Let us relabel the variable and constraint nodes of $G^\prime$ as $\{1, \dots, N^\prime\}=[N^\prime]$ and $\{1, \dots, M^\prime\}=[M^\prime]$. Without loss of generality we can use the parameterizations
\begin{align}\label{tensor-product-state-param}
    \vert \Psi^\prime\rangle = \frac{\vert 0\rangle + z_1 \vert 1\rangle}{1+ \vert z_1\vert^2}\otimes \dots \otimes \frac{\vert 0\rangle + z_{N^\prime}\vert 1\rangle}{1+\vert z_{N^\prime}\vert^2} 
\end{align}
and \footnote{Here $\vert i_1\rangle_{m_1}= \vert 0\rangle_{m_1}, \vert 1\rangle_{m_1}$ are the computational basis states of the space $\mathbb{C}^2_{m_1}$ of qubit $m_1$, and similarly for the other qubits $m_2, \dots, m_k$ involved in $m$.} 
\begin{align}
    \vert \Phi^m\rangle = \sum_{(i_1,\dots, i_k)\in \{0, 1\}^k} \phi^m_{i_1\dots i_k} \vert i_1\rangle_{m_1}\otimes\dots\otimes \vert i_k\rangle_{m_k}
\end{align}
where $z_1, \dots, z_{N^\prime}$ and $\phi^m_{i_1\dots i_k}$ are complex numbers and 
$\sum_{(i_1,\dots, i_k)\in \{0, 1\}^k} \vert\phi^m_{i_1\dots i_k}\vert^2 = 1$. It is not difficult to see that the constraints \eqref{constraint-on-core} then become
\begin{align}\label{alg-equations}
    \sum_{(i_1,\dots, i_k)\in \{0, 1\}^k} \phi^m_{i_1\dots i_k} z_{m_1}^{i_1}\dots z_{m_k}^{i_k} = 0, \qquad m\in [M^\prime]
\end{align}
Note that the normalization constraint for the coefficients $\phi^m_{i_1\dots i_k}$ can be dropped as we can always multiply each equation by an arbitrary real positive number. Hence the problem is reduced to solving a set of polynomial equations in the ring $\mathbb{C}[z_1, \dots, z_{N^\prime}]$. 

An important idea introduced in \cite{ProdSATLaumann} is that the existence of solutions for the system \eqref{alg-equations} is controlled by the presence of constraint-covering dimer configurations of the factor graph (as defined in the introduction).

\begin{proposition}\label{direct-prop}
    Let $G^\prime$ have a constraint-covering dimer configuration. Then the system
    \eqref{alg-equations} has a solution for almost all choices of complex coefficients $\{\phi^m_{i_1\dots i_k}, m\in [M^\prime], (i_1\dots i_k)\in \{0,1\}^k\}$.
\end{proposition}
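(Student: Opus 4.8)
The plan is to use a constraint-covering dimer configuration to assign to each constraint $m$ a distinguished variable node $\sigma(m)$ (its matched partner), and then solve the system \eqref{alg-equations} one equation at a time, treating $z_{\sigma(m)}$ as the unknown of the $m$-th equation. Since the dimer configuration is a matching that covers every constraint node, the map $\sigma:[M']\to[N']$ is injective, so distinct constraints get distinct ``own'' variables; the remaining $N'-M'$ variables (the unmatched ones) can be set to arbitrary fixed generic values at the outset. Writing \eqref{alg-equations} for constraint $m$ and isolating the powers of $z_{\sigma(m)}$, the equation takes the form $a_m + b_m\, z_{\sigma(m)} = 0$, where $a_m$ and $b_m$ are polynomials in the other $z$'s (those are either unmatched variables, already fixed, or variables $\sigma(m')$ matched to other constraints, which we order appropriately). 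Provided $b_m\neq 0$ we simply set $z_{\sigma(m)} = -a_m/b_m$.

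To make this into an induction, I would order the constraints $m_1,\dots,m_{M'}$ so that we can solve for $z_{\sigma(m_1)}, z_{\sigma(m_2)},\dots$ in turn; a priori equation $m$ involves other matched variables $z_{\sigma(m')}$, but each $z_{\sigma(m')}$ appears only linearly (degree $\le 1$ in each variable), so after substitution we still get a linear equation in $z_{\sigma(m)}$. Concretely, process the constraints in any fixed order and, when handling $m_j$, substitute the already-determined rational expressions for $z_{\sigma(m_1)},\dots,z_{\sigma(m_{j-1})}$ into the coefficients $a_{m_j}, b_{m_j}$; these become rational functions of the free variables. The solvability at step $j$ requires only that the denominator accumulated so far is not identically zero and that the coefficient $b_{m_j}$ (as a rational function of the $\phi$'s and the free $z$'s) is not identically zero.

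The main point — and the only real obstacle — is therefore to check that the relevant coefficients are \emph{generically} nonzero, i.e.\ nonzero polynomials in the $\phi^m_{i_1\dots i_k}$, so that their vanishing locus is a proper algebraic subset of the parameter space, hence of measure zero, and a finite union of such sets (over the $M'$ steps) is still measure zero. For this one exhibits, for each $m$, one monomial in $\{\phi^m_{i_1\dots i_k}\}$ that survives in $b_m$: the coefficient of the top power of $z_{\sigma(m)}$ in \eqref{alg-equations} is $\sum \phi^m_{i_1\dots i_k} z_{m_1}^{i_1}\cdots z_{m_k}^{i_k}$ restricted to $i_{\sigma(m)}=1$, which contains the monomial $\phi^m_{0\ldots0 1_{\sigma(m)}0\ldots0}$ with coefficient a product of powers of the free/previously-solved variables; since the $\phi$'s of distinct constraints are algebraically independent, this monomial cannot be cancelled by the substitutions coming from other constraints, so $b_m$ is a nonzero rational function. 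One also argues the denominators introduced never vanish identically by the same independence of the $\phi$-blocks. Assembling: off a measure-zero set of coefficient choices, all $M'$ steps succeed, producing an explicit solution of \eqref{alg-equations}, which via \eqref{tensor-product-state-param} gives the desired product state and proves the proposition.
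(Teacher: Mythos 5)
Your approach is genuinely different from the paper's, which first specializes to the degenerate parameter point where every constant term $\phi^m_{0\dots0}$ vanishes (so $\vec z = \vec 0$ solves the reduced square system obtained by setting the unmatched variables to zero), uses the dimer covering to show the Jacobian at $\vec 0$ is generically nonsingular (Lemma \ref{lemma_howjac_nonzero}), invokes a local inverse/perturbation theorem from multivariate complex analysis to continue this solution to small nonzero constant terms (Proposition \ref{lemma_existsol}), and finally extends to generic constant terms through a Gr\"obner basis/Buchberger argument (Proposition \ref{prop:extension}). Your proposal tries instead to solve the system directly by a greedy one-variable-at-a-time elimination.

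There is, however, a genuine gap at the central step. You claim that after substituting the already-solved $z_{\sigma(m_1)},\dots,z_{\sigma(m_{j-1})}$ the $j$-th equation is ``still a linear equation in $z_{\sigma(m_j)}$'', but this fails. Solving the $m_1$ equation as $z_{\sigma(m_1)} = -a_{m_1}/b_{m_1}$, both $a_{m_1}$ and $b_{m_1}$ are multilinear in the other variables neighboring constraint $m_1$, and in a nonempty core (every variable node has degree at least $2$) these can include $z_{\sigma(m_2)}$. Substituting this rational expression into equation $m_2 = a_{m_2} + b_{m_2}\, z_{\sigma(m_2)} = 0$ and clearing the denominator $b_{m_1}$ multiplies the $z_{\sigma(m_2)}$-linear term by $b_{m_1}$, which is itself linear in $z_{\sigma(m_2)}$, so the cleared equation is quadratic in $z_{\sigma(m_2)}$. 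With further eliminations the degree keeps growing and the intermediate solutions become algebraic rather than rational, so the genericity check you propose---exhibiting the single distinguished monomial $\phi^m_{0\dots 1 \dots 0}$ inside a multilinear $b_m$---no longer identifies the leading coefficient you actually need to control. Relatedly, the ``appropriate ordering'' you hope to choose does not exist: a triangular elimination that never re-introduces an already-solved variable would require the non-dimer part of the bipartite incidence graph to be acyclic, but a short edge-vs-node count (using $\deg(v)\geq 2$ for variables and $\deg(c)=k$ for constraints) shows a nonempty core is never a forest, so such cycles are unavoidable. Turning your sketch into a proof would require a full resultant/elimination-theory argument controlling degree growth, denominator vanishing, and the introduction of spurious roots, which is precisely the difficulty the paper's two-stage perturbative-plus-Gr\"obner argument is designed to bypass.
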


In \cite{ProdSATLaumann} the starting point is the observation that Proposition \ref{direct-prop} is easy to check if the projector $\vert\Phi^m\rangle \langle \Phi^m  \vert = \vert \varphi^{m_1}\rangle\langle  \varphi^{m_1} \vert \otimes\dots\otimes \vert \varphi^{m_k}\rangle \langle  \varphi^{m_k} \vert $ has product form, i.e., 
$\phi^m_{i_1\dots i_k} = \varphi^{m_1}_{i_1} \dots \varphi^{m_k}_{i_k}$
because it suffices then to solve 
$\varphi^{m_*}_{0} + z_{m_*}\varphi^{m_*}_{1} = 0$, where $m_*$ is the unique variable node belonging to the dimer that covers clause $m$. This equation has a solution for almost all 
$\varphi^{m_*}_{0}, \varphi^{m_*}_{1}\in \mathbb{C}$. Then this result is extended to non-product states through a perturbative argument combined with abstract algebraic geometry theorems. The proof is non-constructive. 

Here we will proceed differently with a constructive proof which is the object of section \ref{Analytical perturbative argument} and \ref{Algebraic non-perturbative argument}. We first note that when $\{\phi^m_{0\dots0} = 0, m\in [M^\prime]\}$ the polynomial equations have the trivial solution $z_1=\dots = z_{N^\prime} =0$. In section \ref{Analytical perturbative argument} we show through complex analysis arguments that one can construct (with probability one) a unique solution for $\{\phi^m_{0\dots0} \neq 0, m\in [M^\prime]\}\in U$ a small enough open neighborhood (depending on the instance) of the origin in $\mathbb{C}^{M^\prime}$. It is also possible to give an explicit series expansion formula for the solution. Then, in section \ref{Algebraic non-perturbative argument} we extend the existence of this solution (again with probability one) through an analysis of Buchberger's algorithm for solving polynomial equations in the ring $\mathbb{C}[z_1,\dots, z_{N^\prime}]$. 

In section \ref{Converse Statement} we show a converse statement.

\begin{proposition}\label{converse-prop}
    Let $G^\prime$ have no clause-covering dimer configuration. Then for almost all choices of complex coefficients $\{\phi^m_{i_1\dots i_k}, m\in [M^\prime], (i_1\dots i_k)\in \{0,1\}^k\}$ the system of equations 
    \eqref{alg-equations} has no solution.
\end{proposition}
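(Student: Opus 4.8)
The plan is to deduce from the combinatorial hypothesis an \emph{overdetermined} subsystem of \eqref{alg-equations} and then to run a dimension count on the associated incidence variety. A constraint-covering dimer configuration is nothing but a matching of the bipartite factor graph $G^\prime$ that saturates every constraint node, so by Hall's theorem its absence is equivalent to the existence of a set of constraint nodes $S\subseteq[M^\prime]$ whose variable-neighborhood $W:=\{v\in[N^\prime]:v\sim m\text{ for some }m\in S\}$ satisfies $|W|<|S|$. I fix such an $S$ and write $p:=|S|$, $q:=|W|$, so that $p>q$. By construction every polynomial $f_m$ with $m\in S$ in \eqref{alg-equations} involves only the variables $\{z_v:v\in W\}$, so it suffices to show that this subsystem $\{f_m=0:m\in S\}$ already has no solution — indeed no product-state solution at all — for almost every choice of the coefficient block $\{\phi^m_{i_1\dots i_k}:m\in S\}$: any solution of the full system \eqref{alg-equations} restricts to one of the subsystem, and a null set in the $S$-coefficients pulls back (by Fubini) to a null set in the full coefficient space.

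Since the parametrization \eqref{tensor-product-state-param} misses product states having a qubit in the state $\ket 1$ (the point $z=\infty$), I would first pass to the projective picture: a single-qubit state of qubit $v$ is a point $[a_v:b_v]\in\PP^1$, and constraint $m$ becomes the multilinear equation $F_m:=\sum_{(i_1,\dots,i_k)}\phi^m_{i_1\dots i_k}\,c^{(m_1)}_{i_1}\cdots c^{(m_k)}_{i_k}=0$ with $(c^{(v)}_0,c^{(v)}_1)=(a_v,b_v)$, homogeneous of degree one in each pair $(a_{m_j},b_{m_j})$ and reducing to $f_m$ on the affine chart $a_v\equiv 1$. Let $C\cong\CC^{\,p\,2^k}$ be the affine space of coefficient blocks $(\phi^m)_{m\in S}$ and form the incidence variety
\[
Z:=\bigl\{(x,\phi)\in(\PP^1)^W\times C:\ F_m(x;\phi)=0\ \text{for all }m\in S\bigr\},
\]
a closed subvariety, with projections $\rho:Z\to(\PP^1)^W$ and $\pi:Z\to C$. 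The key observation is that, for a fixed $x=([a_v:b_v])_{v\in W}$, each equation $F_m(x;\,\cdot\,)=0$ is a \emph{linear} condition on the block $\phi^m$, and a \emph{nonzero} one: choosing for each qubit $m_j$ an index $i_j$ with $c^{(m_j)}_{i_j}\neq 0$ (possible since homogeneous coordinates never both vanish) exhibits a monomial $c^{(m_1)}_{i_1}\cdots c^{(m_k)}_{i_k}\neq 0$, so the coefficient of $\phi^m_{i_1\dots i_k}$ in $F_m(x;\,\cdot\,)$ is nonzero. As the blocks attached to distinct $m\in S$ are disjoint, $\rho^{-1}(x)$ is, for \emph{every} $x$, a linear subspace of $C$ of codimension exactly $p$. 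By the fiber-dimension theorem, $\dim Z=q+(p\,2^k-p)=\dim C-(p-q)<\dim C$.

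Because $(\PP^1)^W$ is complete, $\pi$ is a closed map, so $\pi(Z)$ is Zariski-closed in $C$ with $\dim\pi(Z)\le\dim Z<\dim C$; hence $\pi(Z)\subsetneq C$ is a proper algebraic subset, in particular Lebesgue-null. For every coefficient block $(\phi^m)_{m\in S}\notin\pi(Z)$ the subsystem $\{F_m=0:m\in S\}$ has no solution in $(\PP^1)^W$, so by the first-paragraph reduction the system \eqref{alg-equations} has no solution for almost all choices of the complex coefficients. (An equivalent, more computational route would eliminate the $q$ variables $\{z_v\}_{v\in W}$ one at a time from the $p>q$ multilinear equations using multihomogeneous resultants, ending with at least one nontrivial polynomial relation among the coefficients which a generic choice violates; the dimension count above is the coordinate-free shadow of this.)

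The ingredients used — Hall's theorem, the fiber-dimension theorem, and closedness of the projection off a complete variety — are classical, and the dimension arithmetic is immediate once set up. The two points that genuinely need care, and that I expect to be the crux, are: \emph{(i)} pinning down that it is the \emph{constraint} side of the factor graph that one subselects, so that the chosen polynomials truly depend only on the $q$ variables of $W$ while their coefficient blocks remain free and mutually disjoint; and \emph{(ii)} setting the whole argument over $(\PP^1)^W$ rather than over $\CC^W$, both to cover product states with some qubit in $\ket 1$ and to make $\pi$ a closed map. With those in hand, the heart of the matter is the elementary remark that, evaluated at any fixed product state, each constraint equation is a nonzero linear form in its own coefficients and therefore strictly cuts down the admissible coefficient locus.
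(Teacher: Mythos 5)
Your proof is correct. The structure — invoke Hall's (or equivalently K\H{o}nig's) theorem to extract a set $S$ of clauses with $|N(S)|<|S|$, observe that the corresponding subsystem depends only on the $q=|N(S)|$ variables while the $p=|S|$ coefficient blocks remain free and mutually disjoint, pass to $(\PP^1)^W$ so the projection is closed, and run a fiber-dimension count on the incidence variety to show that the coefficients admitting a common zero form a proper Zariski-closed (hence Lebesgue-null) subset of the coefficient space — is essentially the same dimension-count strategy the paper uses for the converse, and is the standard route (going back to Laumann et al.). Two details you handled that are genuinely necessary and easy to botch: establishing that each $F_m(x;\cdot)$ is a \emph{nonzero} linear form in $\phi^m$ at every projective point $x$ (so the fiber codimension is exactly $p$ uniformly, with no bad locus to excise), and compactifying to $(\PP^1)^W$ so that the image $\pi(Z)$ is closed rather than merely constructible. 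Your remark that the null set in the $S$-block of coefficients propagates by Fubini to a null set in the full coefficient space correctly closes the reduction from the subsystem to the full system.
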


Putting together Lemma \ref{lr-pre-lemma}, Propositions \ref{direct-prop} and \ref{converse-prop}  we obtain Theorem \ref{mainthm}.

\begin{proof}[Proof of Theorem \ref{mainthm}]
Take an instance of a factor graph $G\in G_{N,M}^k$. Note that the instance has a dimer covering if and only if the residual core $G^\prime$ also has a dimer covering. This is in fact so at each intermediate step of leaf removal, because constraint nodes that are not removed remain degree $k$, and thus their dimer is left untouched. As a result for $\alpha<\alphadc$ w.h.p. the core $G^\prime$ has a dimer covering, and Proposition \ref{direct-prop} implies that w.h.p. there exist tensor product states \ref{tensor-product-state-param} with zero energy. Conversely for $\alpha>\alphadc$ the graph $G^\prime$ w.h.p. has no dimer covering and Proposition \ref{converse-prop} implies that w.h.p. there are no such product states of zero energy.
\end{proof}
\section{Analytical perturbative argument}\label{Analytical perturbative argument}
In this section we prove that when all the constant terms $\{\phi^m_{0,\ldots,0}, m\in [M^\prime]\}$ are small enough, the system of equations \eqref{alg-equations} (restricted to the core $G^\prime$) has a solution. 
Recall that when $\phi^m_{0,\ldots,0}=0$ for all $m\in [M^\prime]$ this is obvious as $\vec z = \bc{z_1, \ldots,z_{N'}} = \bc{0,\ldots,0}$ is a trivial solution. 
We will show that this trivial solution can be extended to a solution when not all $\{\phi_{0,\ldots,0}^{m}, m\in[M']\}$ are $0$ but small enough:

\begin{proposition} \label{lemma_existsol} 
Suppose that $G^\prime$ admits a clause-covering dimer configuration.
Then there exists $\eps > 0$ such that if $\abs{\phi_{0,\ldots,0}^{m}}<\eps$ for all $m \in [M^\prime]$ then there exists  a solution $\bc{z^*_1, \ldots, z^*_{N'}}$  to the system \eqref{alg-equations} and a corresponding tensor product state 
$\vert\Psi^\prime\rangle$ satisfying \eqref{constraint-on-core}. Note that $\epsilon$ depends on $G^\prime$ and 
$\{\phi^m_{i_1\ldots i_k}, m\in [M^\prime], (i_i,\ldots, i_k)\in \{0, 1\}^k\setminus (0,\ldots,0)\}$.
\end{proposition}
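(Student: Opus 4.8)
The plan is to turn the system \eqref{alg-equations} into a \emph{square} system of $M'$ equations in $M'$ unknowns and then extend the trivial solution by the holomorphic implicit function theorem (equivalently, by building a convergent power series). First I would fix a clause-covering dimer configuration of $G'$: it gives an injection $\sigma\colon[M']\to[N']$ with $\sigma(m)\in\{m_1,\dots,m_k\}$ for every clause $m$, and I write $\ell(m)$ for the slot with $m_{\ell(m)}=\sigma(m)$ and put $S:=\sigma([M'])$, so that $|S|=M'\le N'$. I would then search for solutions of \eqref{alg-equations} in which every unmatched variable keeps its trivial value, $z_j=0$ for $j\notin S$. Substituting this, \eqref{alg-equations} reduces to a polynomial system $F_m(z_S;c)=0$, $m\in[M']$, in the unknowns $z_S=(z_j)_{j\in S}$, where $c=(\phi^m_{0,\ldots,0})_{m\in[M']}$ is the small parameter and $F_m=\phi^m_{0,\ldots,0}+(\text{terms of degree}\ge 1\text{ in }z_S)$; by construction $F(0;0)=0$ reproduces the trivial solution $z_1=\dots=z_{N'}=0$.

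The crux is the invertibility of the Jacobian $J$ of $F$ in the matched variables, evaluated at $z_S=0$, $c=0$. Because each variable occurs in \eqref{alg-equations} with exponent at most one, only the linear monomials of $F_m$ survive differentiation at $z_S=0$: the entry $J_{m,j}$ is the coefficient of the monomial $z_j$ in the $m$-th equation when $j$ is a neighbour of $m$ lying in $S$, and is $0$ otherwise; in particular the ``$\sigma$-diagonal'' entry $J_{m,\sigma(m)}$ is the coefficient of $z_{\sigma(m)}$, i.e.\ the $\phi^m$ whose multi-index carries a single $1$, in slot $\ell(m)$. Expanding $\det J$ as a signed sum over permutations $\rho$ of $[M']$, the term $\rho=\mathrm{id}$ contributes the monomial $\prod_m(\text{coefficient of }z_{\sigma(m)}\text{ in equation }m)$ in the algebraically independent entries $\{\phi^m_{i_1\dots i_k}\}$, while for any $\rho\ne\mathrm{id}$ some row $m_0$ contributes either $0$ or the coefficient of $z_j$ for a neighbour $j\ne\sigma(m_0)$ of $m_0$, which is a \emph{different} $\phi^{m_0}$; hence that product misses the variable contributed by row $m_0$ to the identity term and cannot cancel it. Thus $\det J$ is a nonzero polynomial in the coefficients, so $J$ is invertible for all choices of the non-constant coefficients $\{\phi^m_{i_1\dots i_k}:(i_1\dots i_k)\ne(0\dots0)\}$ outside a set of Lebesgue measure zero — equivalently, with probability one for the Haar-random projectors, which is the sense in which the proposition is meant. \textbf{I expect this generic invertibility of $J$, read off from the dimer matching, to be the one substantive step; everything after it is routine.}

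With $J$ invertible, $F$ is a polynomial (hence holomorphic) map with $F(0;0)=0$ whose differential in the matched variables at the origin is invertible, so the holomorphic implicit function theorem supplies an $\eps>0$ and a unique holomorphic map $c\mapsto z_S^{*}(c)$ on $\{c:|\phi^m_{0,\ldots,0}|<\eps\ \forall m\}$ with $z_S^{*}(0)=0$ and $F(z_S^{*}(c);c)=0$; since $\eps$ depends only on $F$, it depends only on $G'$ and on the fixed non-constant coefficients, exactly as stated. Moreover $z_S^{*}$ is given by a convergent multivariate Taylor series about $c=0$, whose coefficients can also be obtained directly by solving \eqref{alg-equations} order by order and dominating the outcome by the method of majorants, which is the announced explicit series expansion. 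Finally, padding by $z_j^{*}:=0$ for $j\notin S$ yields a solution $(z_1^{*},\dots,z_{N'}^{*})$ of the full system \eqref{alg-equations} whenever $|\phi^m_{0,\ldots,0}|<\eps$ for all $m$, and inserting it into \eqref{tensor-product-state-param} produces a tensor product state $\ket{\Psi'}$ satisfying \eqref{constraint-on-core}, completing the argument.
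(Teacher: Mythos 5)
Your proof is correct, but it takes a genuinely different route from the paper in two places, and I think both of your choices are cleaner. For the Jacobian nonvanishing, the paper's Lemma \ref{lemma_howjac_nonzero} argues by Gaussian elimination with diagonal pivots, invoking the fact that the quantities appearing at each step are polynomials in the $\phi$'s and hence vanish only on a measure-zero set; you instead expand $\det \JFsq(\vec 0)$ over permutations and observe that the identity permutation produces the monomial $\prod_m \phi^m_{0\ldots 010\ldots 0}$ (with the $1$ in the dimer slot), and that any $\rho\neq\mathrm{id}$ necessarily picks a different $\phi^{m_0}$-coefficient in some row $m_0$, so the identity monomial appears with coefficient exactly one in $\det \JFsq(\vec 0)$. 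Your argument makes the a.e.\ nonvanishing more transparent and avoids tracking what Gaussian elimination does to the entries, though it proves only the direction of Lemma \ref{lemma_howjac_nonzero} actually needed here (the paper's lemma is an iff and also shows full rank implies existence of a dimer covering, which it re-uses elsewhere). For the extension step, the paper goes via Theorem \ref{localinverse} (local biholomorphism $\Rightarrow$ isolated zero), Proposition \ref{prop:juzakov} (simple zeros under small perturbation) and Proposition \ref{Juz2} (multiplicity one $\Rightarrow$ uniqueness); you instead regard $F(z_S;c)=0$ as a joint holomorphic system with $c=(\phi^m_{0\ldots 0})_m$ as parameters and apply the holomorphic implicit function theorem at $(0;0)$, which gives existence, local uniqueness and holomorphic dependence on $c$ in one stroke, including the series expansion the paper alludes to. Both approaches establish the same conclusion with the same meaning of $\eps$; yours substitutes one standard theorem for the Juzakov machinery.
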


Before proving Proposition \ref{lemma_existsol} we make a {\it reduction} of the system \eqref{alg-equations} to a {\it square} system. For $\vec{z}=\bc{z_1, \ldots, z_{N'}} \in \CC^{N'}$ and $m\in [M^\prime]$ let
\begin{align} \label{eq_f}
	f^{m}\bc{\vz}= \sum_{\substack{i_1 \ldots i_k\in\{0,1\}^k\\ \phi_{0\ldots 0}^m = 0}}  \ \phi^{m}_{i_1\ldots i_k} z_{m_1}^{i_1} \ldots z_{m_k}^{i_k}, 
\end{align}
These functions are simply the polynomials in \eqref{alg-equations} without the constant terms
and $\bc{0,\ldots,0}$ is a common zero. 
The existence of the dimer covering (assumed in Proposition \ref{lemma_existsol}) guarantees that there is one variable node $m_\star$ in the neighborhood of constraint $m$ that matches it (i.e, $(m_\star, m)$ is a dimer). 
We {\it reduce} the system \eqref{eq_f} of $M'$ equations and  $N'$ variables to  a {\it square} system of $M'$ equations and $M'$ variables by assigning $z=0$ for all the variable nodes that are not in the dimer covering. 
The polynomials of the reduced system will be denoted as $\fmred$.  Note that $\fmred$ might contain fewer than $k$ variables for some $m$ (this happens when $m$ has neighboring variables not covered by the chosen dimer covering). 
The following relabelling of complex variables turns out to be useful: given the labelling $\{1, \ldots, M'\}$ of constraint nodes we relabel the complex variables associated to nodes in the dimer covering by $\cbc{z_{1_\star}, \ldots, z_{M'_\star}}$.
Finally we define the multivariate complex map 
\begin{align}
	\Fsq: \CC^{M'} &\rightarrow \CC^{M'} \\
	\vz=(z_{1_\star}, \ldots, z_{M'_\star}) & \mapsto (\fsq^1(\vz), \ldots, \fsq^{M'}(\vz))
\end{align} 
where $\vz= (z_{1_\star}, \ldots, z_{M'_\star})$ is the set composed only of variables in the dimer covering.
The Jacobian matrix of $\Fsq$ is the $M'\times M'$ matrix
\begin{align}\label{Jacobian_rectangle}
		\JFsq:=\bc{\parder{\fmred}{z_{j_\star}}}_{m\in [M'],j\in [M']}.
\end{align} 

A crucial remark is the following: with probability one $\phi^m_{i_1\ldots i_k}\neq 0$, thus for any $m\in [M^\prime]$ there are monomials in $\fmred$ containing the variable $z_{m_\star}$; in particular it is guaranteed that a linear monomial of the form $\phi^m_{0\ldots 010\ldots 0} z_{m_\star}$ is always present.  
Thus, with probability one, all the diagonal elements of the Jacobian $\parder{\fmred}{z_{m_\star}}$  are polynomials with a constant term $\phi^m_{0\ldots 010\ldots 0}$ and in particular,
\begin{align}
\parder{\fmred}{z_{m_\star}}\Bigg\vert_{\vz=\vec{0}} =  \phi^m_{0\ldots 010\ldots 0} \neq 0.
\end{align}

\begin{lemma}\label{lemma_howjac_nonzero}
	$G^\prime$ has a dimer covering if and only if $\JFsq(\vec{0})$ is a full rank matrix with ${\det\JFsq(\vec{0})\neq 0}$.
\end{lemma}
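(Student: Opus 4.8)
The plan is to read the zero/nonzero pattern of $\JFsq(\vec{0})$ off the monomial content of the reduced polynomials, and then to treat $\det\JFsq(\vec{0})$ as a polynomial in the random coefficients $\{\phi^m_{i_1\ldots i_k}\}$, exhibiting one monomial that survives the Leibniz expansion; the converse direction will come from the elementary dictionary between nonzero terms of a maximal minor and perfect matchings of the constraint-variable incidence bipartite graph of $G^\prime$. (As throughout this section, the equivalence is meant for almost all choices of the complex coefficients, equivalently with probability one.) I would begin by spelling out the entries of $\JFsq(\vec{0})$, having fixed a clause-covering dimer configuration as in Proposition \ref{lemma_existsol}, so that the relabelling $\{z_{1_\star},\ldots,z_{M^\prime_\star}\}$ and hence $\Fsq$ are defined. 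Since each monomial of $\fmred$ is multilinear in its variables, the entry $\parder{\fmred}{z_{j_\star}}\big\vert_{\vz=\vec{0}}$ is exactly the coefficient of the linear monomial $z_{j_\star}$ in $\fmred$, and this equals the coefficient $\phi^m_{0\ldots 010\ldots 0}$ with the $1$ in the slot of $j_\star$ whenever $z_{j_\star}$ occurs in $f^m$, i.e.\ whenever $j_\star$ is a variable neighbour of the constraint $m$ in $G^\prime$, and it is $0$ otherwise; a linear monomial is never destroyed by the reduction since it involves no variable outside the dimer covering. In particular the diagonal entry of row $m$ is $\phi^m_{0\ldots 010\ldots 0}$ with the $1$ in the slot of $m_\star$, nonzero with probability one, as already observed. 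Thus, up to generically nonzero factors, $\JFsq(\vec{0})$ is the $M^\prime\times M^\prime$ submatrix of the constraint-variable incidence matrix of $G^\prime$ induced by the dimer-covered variable nodes, with the dimer edges sitting exactly on the diagonal, and distinct nonzero positions carrying distinct coefficients.

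For the direction ``$G^\prime$ has a dimer covering $\Rightarrow$ $\JFsq(\vec{0})$ has full rank'' I would expand $\det\JFsq(\vec{0})=\sum_{\sigma\in S_{M^\prime}}\operatorname{sgn}(\sigma)\prod_{m\in[M^\prime]}\big(\JFsq(\vec{0})\big)_{m,\sigma(m)}$ as a polynomial in the coefficients. Every entry of $\JFsq(\vec{0})$ is either $0$ or a single coefficient of degree one, and the coefficient $\phi^m_{0\ldots 010\ldots 0}$ with the $1$ in the slot of $m_\star$ appears in the whole matrix only in position $(m,m)$: it carries superscript $m$, so it can appear only in row $m$, and within row $m$ every off-diagonal entry carries its $1$ in a slot $j_\star\neq m_\star$ because the dimer covering is a matching. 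Hence the monomial $\prod_{m\in[M^\prime]}\phi^m_{0\ldots 010\ldots 0}$ read off the diagonal and produced by $\sigma=\mathrm{id}$ is produced by no other permutation, so it survives in $\det\JFsq(\vec{0})$ with coefficient $+1$. Therefore $\det\JFsq(\vec{0})$ is not the zero polynomial, its zero locus is a measure-zero subset of the space of coefficient values, and with probability one $\det\JFsq(\vec{0})\neq 0$, i.e.\ $\JFsq(\vec{0})$ is full rank.

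For the direction ``$\JFsq(\vec{0})$ full rank $\Rightarrow$ $G^\prime$ has a dimer covering'' I would argue by contraposition. If $G^\prime$ has no clause-covering dimer configuration then there is no injective assignment $m\mapsto m_\star$ with $m_\star$ a variable neighbour of $m$, so the square reduction underlying $\Fsq$ cannot even be performed; and allowing an arbitrary square reduction does not help, for if $J_F(\vec{0})$ denotes the rectangular $M^\prime\times N^\prime$ Jacobian at $\vec{0}$ of the full system $\{f^m\}$ — whose support, by the computation above, is the entire constraint-variable incidence matrix of $G^\prime$ — then for any $M^\prime\times M^\prime$ submatrix $C$ and any $\sigma$ the product $\prod_m C_{m,\sigma(m)}$ is the zero polynomial, since a nonzero product would exhibit a perfect matching saturating all $M^\prime$ constraints, which does not exist. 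Hence every maximal minor of $J_F(\vec{0})$ vanishes identically, $\operatorname{rank}J_F(\vec{0})<M^\prime$ for all coefficient values, and no square reduction can yield a full-rank $\JFsq(\vec{0})$. Combined with the previous direction, which exhibits such a matrix whenever a dimer covering exists, this establishes the equivalence.

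The step I expect to be the main obstacle is keeping this converse honest: because $\JFsq$ is only defined once a dimer covering has been chosen, one must reformulate ``no dimer covering $\Rightarrow$ rank deficiency'' at the level of the rectangular Jacobian $J_F(\vec{0})$, which forces one to make the dictionary between nonzero Leibniz terms and perfect matchings explicit. Everything else — the structure of $\JFsq(\vec{0})$ and the forward direction — is a routine surviving-monomial argument once the multilinearity of the $\fmred$ has been used.
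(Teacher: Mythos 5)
Your proof is correct, but it takes a genuinely different route from the paper on both directions of the equivalence.

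For the forward implication, the paper runs Gaussian elimination on $\JFsq(\vec0)$ with the diagonal as pivots and appeals to the fact that the successive pivots are holomorphic functions of the $\phi$'s with measure-zero zero locus. You instead expand $\det\JFsq(\vec0)$ by the Leibniz formula and observe that the diagonal monomial $\prod_m \phi^m_{0\ldots010\ldots0}$ survives with coefficient $+1$ because the dimer covering, being a matching, places each such coefficient in a unique matrix slot. Your version is somewhat tighter: the Gaussian-elimination argument implicitly requires that the intermediate pivot polynomials are not identically zero, and asserting this by holomorphy alone glosses over that point, whereas exhibiting one surviving monomial settles the question directly and makes the conclusion ``$\det\JFsq(\vec0)$ is a nonzero polynomial in the $\phi$'s'' explicit.

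For the converse, the paper argues directly: if $\JFsq(\vec0)$ is full rank, one can greedily associate to each row a distinct column index carrying a nonzero entry, and this injective map yields a dimer covering. You instead argue by contraposition at the level of the rectangular $M'\times N'$ Jacobian $J_F(\vec0)$ of the unreduced system, observing that its support is exactly the constraint--variable incidence matrix of $G'$, so that if no dimer covering exists then every $M'\times M'$ minor is identically zero (each Leibniz term would otherwise exhibit a matching saturating all constraints). This is a cleaner handling of the logical awkwardness you correctly flag: $\JFsq$ is only defined once a dimer covering has been chosen, so the converse is best phrased for an arbitrary square reduction and worked at the rectangular level, where the dictionary between nonzero Leibniz terms and bipartite matchings (essentially the structural-rank / K\"onig picture) does all the work. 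Your formulation also sidesteps the paper's greedy step, which as written appears to conflate failure of a particular greedy assignment with nonexistence of any matching. In short: same lemma, two different proofs, with yours erring on the side of the explicit matching dictionary and the paper's on informal linear-algebra manipulations.
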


\begin{proof} 
Consider the Jacobian matrix at $\vz=\vec{0}$. A row $m\in [M']$ always contains {\it at most} $k$ non-vanishing elements among  $\phi^m_{10\ldots0}, \phi^m_{010\ldots0}, \ldots, \phi^m_{0\ldots01}$ corresponding to the neighboring nodes belonging to some dimer covering. 

We first prove the direct statement. Suppose $G^\prime$ has a dimer covering. Then as remarked above, with probability one each row contains {\it at least} one non-vanishing element and especially one on the main diagonal.
We run Gaussian elimination on $\JFsq\bc{\vec{0}}$ with the elements on the main diagonal as the pivot to obtain the matrix in row echelon form. At each step of the algorithm we linearly combine rows, and the new terms we get on the diagonal can only be polynomial functions of the $\phi$'s. Since these polynomials are holomorphic multivariate functions (of the $\phi$'s) they have zero locus of measure zero \cite{gunning1965analytic}.
Eventually we get an upper triangular matrix with non-zero terms on the main diagonal.  
This proves that $\JFsq\bc{\vec{0}}$ is full row-rank, and since it is a square matrix $\det{\JFsq\bc{\vec{0}}}\neq 0$.

For the converse statement we must show that if $\JFsq\bc{\vec{0}}$ is full row-rank, then we can associate to each row $m\in [M']$ a column $j_m\in [M']$ such that the matrix element $(m, j_m)$ is non-zero and $j_m\neq j_{m'}$ for $m\neq m'$. The injective mapping $m\mapsto j_m$ provides the dimer covering.
Assume that no such injective mapping exists. Then, as we go down the rows, at a certain point, say for the $\widebar{m}$-th row, we cannot come up with $j_{\widebar m}$ such that $j_{\widebar m} \neq j_m$ for all $m<{\widebar m}$. This means all 
non-zero elements of the $\widebar m$-th row belong among the previously chosen columns $\{j_m | m<{\widebar m}\}$. Therefore the $\widebar m$-th row is a vector in the span of the previous rows $\{m < {\widebar m}\}$. This contradicts the full row-rank assumption.
\end{proof}

In the rest of this section we use results from multivariate complex analysis reviewed in Appendix \ref{appcomplex}. 

\begin{lemma}\label{cor:Jacnonzero}
Suppose $\det{J_{\Fsq}(\vec{0})} \neq 0$. Then there exist $\eps>0$ such that $\vec{0}$ is the only zero of the map $\Fsq$ in the open ball $B(\vec0,\eps)$. In other words $\vec{0}$ is an isolated zero of the map $F_{\rm sqr}$. 
\end{lemma}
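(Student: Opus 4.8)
The plan is to read Lemma~\ref{cor:Jacnonzero} as a direct application of the holomorphic inverse function theorem. First I would observe that $\Fsq = (\fsq^1, \ldots, \fsq^{M'}) : \CC^{M'} \to \CC^{M'}$ is a polynomial map, hence holomorphic on all of $\CC^{M'}$, and that $\Fsq(\vec 0) = \vec 0$: indeed each component $\fmred$ is obtained from $f^m$ in \eqref{eq_f}, which carries no constant term, so every monomial of $\fmred$ vanishes at the origin. By the hypothesis $\det \JFsq(\vec 0) \neq 0$, the complex differential $d\Fsq$ at $\vec 0$, whose matrix is $\JFsq(\vec 0)$, is a $\CC$-linear automorphism of $\CC^{M'}$.

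Next I would invoke the holomorphic inverse function theorem (reviewed in Appendix~\ref{appcomplex}): since $\Fsq$ is holomorphic near $\vec 0$ with invertible differential there, there exist open neighborhoods $U \ni \vec 0$ and $V \ni \vec 0$ in $\CC^{M'}$ such that $\Fsq$ restricts to a biholomorphism $U \to V$. (Equivalently, one may identify $\CC^{M'} \cong \mathbb{R}^{2M'}$ and apply the real inverse function theorem, using the standard fact that the real Jacobian determinant of a holomorphic map equals $\abs{\det \JFsq(\vec 0)}^2 > 0$.) In particular $\Fsq$ is injective on $U$, so $\vz = \vec 0$ is the unique solution of $\Fsq(\vz) = \vec 0$ in $U$.

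To conclude, since $U$ is open it contains an open ball $B(\vec 0, \eps)$ for some $\eps > 0$; then $\vec 0$ is the only zero of $\Fsq$ in $B(\vec 0, \eps)$, which is precisely the asserted statement, and by definition this says $\vec 0$ is an isolated zero of $\Fsq$.

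I do not expect any genuine obstacle here. The only points requiring a little care are checking the hypotheses of the inverse function theorem — holomorphy of $\Fsq$, which is immediate since it is polynomial, and the identification of the non-vanishing of $\det\JFsq(\vec 0)$ with invertibility of the differential, which is immediate — and being explicit that injectivity on a full neighborhood of $\vec 0$ is exactly what rules out other nearby zeros. The substantive content of the section lies upstream in Lemma~\ref{lemma_howjac_nonzero}, which ties $\det\JFsq(\vec 0)\neq 0$ to the existence of a clause-covering dimer configuration, and downstream in upgrading this isolated-zero statement to the existence of a solution of the perturbed system with small but nonzero constant terms $\phi^m_{0\ldots 0}$, as needed for Proposition~\ref{lemma_existsol}.
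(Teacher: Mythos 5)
Your proof is correct and matches the paper's argument: both check holomorphy of $\Fsq$, verify $\Fsq(\vec 0)=\vec 0$, and apply the holomorphic inverse function theorem (Theorem~\ref{localinverse}) to obtain a biholomorphism and hence injectivity on a neighborhood of $\vec 0$, which gives the isolated zero. The parenthetical about passing through $\mathbb{R}^{2M'}$ is a harmless alternative the paper does not take; otherwise the two proofs are essentially identical.
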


\begin{proof}
By construction $\Fsq(\vec{0})=\vec 0$. Since each polynomial $\fmred$ is an holomorphic multivariate function we can use Theorem \ref{localinverse} to directly deduce the existence of $B(\vec0,\eps) \subset \CC^{M'} $  such that $\Fsq$ is biholomorphic in $B(\vec0,\eps)$. In particular $\Fsq\vert_{B(\vec0,\eps)}$ is a bijection from $B(\vec0,\eps)$ to $\Fsq(B(\vec0,\eps))$ so $\vec{0}$ is the only solution of $\Fsq(\vz)=\vec 0$ in $B(\vec0, \eps)$. This means $\vec{0}$ is an isolated zero. 
\end{proof}

We now turn to the proof of the main result of this section:

\begin{proof}[Proof of Proposition \ref{lemma_existsol}] 
Since the graph has a dimer covering, Lemma \ref{lemma_howjac_nonzero} implies that $\det{J_{\Fsq}(\vec{0})}$ is nonzero. So by Lemma \ref{cor:Jacnonzero}, $\vec{0}$ is an isolated zero in the open ball $B(\vec{0}, \epsilon)$. Choose $0<\epsilon' < \epsilon$ so that $\vec0$ is the only zero of $\Fsq$ in the closure of $B(\vec0,\epsilon')$.
Proposition \ref{prop:juzakov} then states that we can find 
$\varphi >0$ small enough such that the system of equations 
\begin{align}\label{eventual}
\fmred(\vz) + \phi^m_{0\ldots 0} = 0, \qquad m\in [M']
\end{align}
has simple zeros for almost all values of the constant terms in the set $\{\vert \phi^m_{0\ldots 0}\vert <\varphi, m\in [M']\}$. Moreover because $\vec{0}$ itself is a simple zero of the $F_{\rm sqr}$ (i.e., it is isolated and $\det{J_{\Fsq}(\vec{0})} \neq 0$) we deduce from Proposition \ref{Juz2} that the solution of equations \eqref{eventual} is unique for small enough constant terms. This implies the existence of a solution $\bc{z^*_1, \ldots, z^*_{N'}}$ for the full system \eqref{alg-equations} for small enough constant terms. The solution we have constructed here consists of $z^*_j=0$ if $j$ does not belongs to the dimer covering (recall the reduction step above) and $z^*_j$ the unique solution of \eqref{eventual} if $j$ belongs to the dimer covering. We note that while this solution for \eqref{eventual} is unique (for small enough constant terms) it is not unique for 
\eqref{alg-equations}. Indeed we could have done a similar construction by setting the $z_j$ variables of nodes $j$ not in the dimer covering to non-zero values.
\end{proof}
\section{Algebraic non-perturbative argument}\label{Algebraic non-perturbative argument}

In the previous section, we proved Lemma \ref{lemma_existsol} stating that if there exists a dimer covering of the interaction graph, then instances with small constant terms have a $\prodsat$solution w.h.p.. To extend this result to all possible instances, we will use Buchberger algorithm and Gröbner basis. These are powerful tools to solve systems of complex multivariate polynomial equations and hence also give a method to directly find the zeros of the system \eqref{alg-equations} of constraints. For the description of Buchberger algorithm, we refer to Appendix \ref{App:buchberger} and \cite{ideals}.  

\begin{definition}
    A polynomial is called generic if it is a polynomial of the form Eq. \ref{alg-equations} such that the coefficients of each monomial are taken uniformly at random on the unit sphere $\left(\CC^2\right)^{\otimes k}$.
\end{definition}

In Appendix \ref{App:buchberger}, we review the following corollary of Hilbert's Nullstellensatz.

\begin{corollary}\label{cor:null2}
	A set of polynomials in an algebraically closed field has no common zeros  if and only if the reduced Gröbner basis is $\{1\}$.
\end{corollary}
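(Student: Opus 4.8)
The plan is to realize the statement as the combination of two ingredients: the weak form of Hilbert's Nullstellensatz, which relates the emptiness of a variety to membership of $1$ in the associated ideal, and the elementary fact that $1$ lies in an ideal precisely when its reduced Gröbner basis is the singleton $\{1\}$. So I would begin by fixing a monomial order on $k[x_1,\ldots,x_n]$ (any order will do, e.g.\ lexicographic) and setting $I=\langle f_1,\ldots,f_s\rangle$ for the given polynomials. The common zeros of $\{f_1,\ldots,f_s\}$ are exactly the points of the affine variety $V(I)$, so the claim becomes: $V(I)=\emptyset$ if and only if the reduced Gröbner basis of $I$ equals $\{1\}$.

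For the forward direction I would invoke the weak Nullstellensatz: since the field is algebraically closed, $V(I)=\emptyset$ forces $I=k[x_1,\ldots,x_n]$, hence $1\in I$. Now let $G$ be any Gröbner basis of $I$ (produced by Buchberger's algorithm, cf.\ Appendix \ref{App:buchberger}). Since $1\in I$, dividing $1$ by $G$ leaves remainder $0$, which by the defining property of a Gröbner basis means that some leading term $\mathrm{LT}(g)$, $g\in G$, divides the monomial $1$; thus $g$ is a nonzero constant. Rescaling $g$ to $1$ and discarding every other element of $G$ (each of whose leading term is divisible by $1$), the normalization procedure that produces the reduced Gröbner basis terminates in $\{1\}$, and uniqueness of the reduced Gröbner basis pins it down to exactly $\{1\}$. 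Conversely, if the reduced Gröbner basis of $I$ is $\{1\}$, then $1\in I$, so $I$ is the whole ring and $V(I)=\emptyset$; equivalently the $f_i$ have no common zero. This converse direction requires no algebraic closure.

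The only genuinely nontrivial input is the weak Nullstellensatz itself; everything about Gröbner bases here is bookkeeping (the membership test via remainder, and the normalization step defining ``reduced''). I would therefore treat the Nullstellensatz as a cited black box — it is the standard statement over algebraically closed fields, and in our application $k=\CC$ — and write out the Gröbner-basis translation in detail, since that is the part actually exploited by Buchberger's algorithm in the sequel. One subtlety worth a sentence is that the equivalence ``$1\in I \iff$ reduced Gröbner basis $=\{1\}$'' does not depend on the chosen monomial order, even though the reduced Gröbner basis itself does; this is immediate because $1\in I$ is an order-free condition, so the corollary is well posed regardless of how the basis is computed.
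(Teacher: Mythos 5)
Your proof is correct and follows essentially the same route as the paper: apply the weak Nullstellensatz to reduce "no common zeros" to "$1\in I$", then observe that $1\in I$ is detected by the reduced Gr\"obner basis being $\{1\}$. Your write-up is in fact a bit more careful than the paper's terse one-sentence justification of the Gr\"obner step (you spell out why some generator must be a nonzero constant and how the reduction/normalization collapses $G$ to $\{1\}$), but the decomposition and the key ingredient are the same.
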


\begin{proposition} \label{prop:extension}
	Let $\mathcal{F}= (f_1,f_2,\dots,f_m ) $ be a set of generic polynomial equations in $K[X_0,\dots, X_n]$ that has a common solution, then for any given $a \in K$, $\mathcal{F}_a:=(f_1 + a,f_2,\dots,f_m)$ also have a common solution with probability 1 with respect to the distribution of the constituent coefficients of $\mathcal{F}$.
\end{proposition}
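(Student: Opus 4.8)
The plan is to argue via Gröbner bases and Corollary \ref{cor:null2}, tracking how the reduced Gröbner basis depends on the single perturbed coefficient $a$ together with the random coefficients of $\mathcal{F}$. Suppose for contradiction that $\mathcal{F}_a = (f_1 + a, f_2, \dots, f_m)$ has no common solution for $a$ in a set of positive measure. By Corollary \ref{cor:null2}, this means the reduced Gröbner basis of the ideal $\langle f_1 + a, f_2, \dots, f_m\rangle$ equals $\{1\}$ on that positive-measure set. First I would fix a monomial order and run Buchberger's algorithm \emph{symbolically}, treating the coefficients of $\mathcal{F}$ and the parameter $a$ as indeterminates. The key structural observation is that every arithmetic operation performed by Buchberger's algorithm (computing $S$-polynomials, reducing modulo the current basis, clearing leading terms) is a \emph{rational} operation in the coefficients. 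Consequently there is a finite stratification of coefficient-plus-$a$ space into locally closed sets on each of which the combinatorial structure of the algorithm — the sequence of leading monomials chosen, which reductions vanish, when the algorithm terminates — is constant, and on each stratum the output basis elements are rational functions of the inputs.

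Next I would exploit genericity. On the generic stratum (the one of full dimension, where no "accidental" cancellation of leading coefficients occurs), the reduced Gröbner basis is given by fixed rational functions $g_1(\text{coeffs}, a), \dots, g_r(\text{coeffs}, a)$ of the coefficients and of $a$. Since $\mathcal{F}$ itself (the case $a = 0$) is generic and by hypothesis has a common zero, Corollary \ref{cor:null2} tells us its reduced Gröbner basis is \emph{not} $\{1\}$; hence the generic stratum produces a nontrivial basis, i.e. at $a=0$ the algorithm does not collapse to $\{1\}$. Now the event "$\mathcal{F}_a$ has no common zero" forces the Buchberger computation on input $(f_1 + a, f_2, \dots, f_m)$ to land outside the generic stratum, or to land on it but have the rational-function output degenerate to the constant $1$. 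The first possibility is confined to the zero set of finitely many polynomials in $(\text{coeffs}, a)$ that are not identically zero (they are nonzero at $a = 0$ for generic coefficients); for almost every choice of coefficients, this zero set meets the $a$-line in finitely many points, hence has measure zero in $a$. The second possibility would require an identity of rational functions that fails at $a = 0$, so it too is confined to a proper subvariety, again measure zero. Combining, for almost all coefficient choices of $\mathcal{F}$, the bad set of $a$ is a proper algebraic subset of $K$, hence $\mathcal{F}_a$ has a common solution for almost all $a$; and since the statement is "for any given $a$", we conclude that for each fixed $a$ the bad event has probability $0$ over the coefficients of $\mathcal{F}$, because the coefficient-bad-set is a proper subvariety (it does not contain the generic point, as witnessed by $a = 0$ lying outside it after swapping the roles of the parameters).

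The main obstacle I anticipate is making the "stratification by Buchberger's combinatorial behaviour" rigorous and controlling it \emph{uniformly} as $a$ varies, rather than for a fixed instance. Buchberger's algorithm branches on whether leading coefficients vanish, so the stratification can in principle be intricate, and one must ensure that the polynomials cutting out the non-generic strata are not identically zero along the relevant $a$-line — this is exactly where the hypothesis that $\mathcal{F}$ (at $a=0$) already has a solution, combined with genericity of its coefficients, does the work: it certifies that the generic stratum is "reached" at $a = 0$, so the defining polynomials of the bad locus are nonzero there and thus nonzero generically. A secondary technical point is handling the homogeneity/grading in $K[X_0, \dots, X_n]$ consistently (the $a$ perturbation breaks homogeneity of $f_1$), which I would address by working in the affine chart or by a standard homogenization–dehomogenization bookkeeping, noting that adding a constant to one generator is a tame, well-understood deformation of the ideal. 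Once these points are in place, the measure-zero conclusion follows from the elementary fact that a nonzero polynomial in one variable $a$ (with coefficients that are generically nonzero in the $\phi$'s) has only finitely many roots.
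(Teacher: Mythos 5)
Your proposal is built on the same pillars as the paper's proof — Corollary \ref{cor:null2}, the finiteness of Buchberger's algorithm, and the observation that the ``bad'' events are cut out by algebraic conditions on the random coefficients — but it packages them differently. The paper runs Buchberger's algorithm on $\mathcal{F}$ and on $\mathcal{F}_a$ \emph{side by side}, step by step, and argues that with probability one the two runs produce the same list of monomials at every stage (S-polynomial computation, reductions), because any discrepancy would require a coefficient to cancel, which is a measure-zero algebraic event over the constituent coefficients. You instead propose a global stratification of the joint $(\text{coeffs},a)$-space (essentially a comprehensive Gröbner basis viewpoint), identify a generic stratum, and argue the bad locus is confined to lower strata. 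The two viewpoints are morally equivalent, but the paper's side-by-side comparison has a practical advantage here: the quantities arising in the $\mathcal{F}_a$-run are visible perturbations of the analogous quantities in the $\mathcal{F}$-run, which makes the non-triviality of the algebraic constraints much more transparent.

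There is, however, a real gap in your final step. For most of the argument you treat $a$ as the variable being randomized: you establish that, for almost every choice of coefficients, the bad set of $a$-values is a proper algebraic subset of $K$. But the proposition fixes $a$ and asks for probability one over the \emph{coefficients}. To close this you write that the coefficient-bad-set at a fixed $a$ ``does not contain the generic point, as witnessed by $a=0$ lying outside it after swapping the roles of the parameters.'' This does not follow. The bad locus $B$ in the joint space can perfectly well be a proper subvariety of $(\text{coeffs},a)$-space while still containing an entire fiber $\{a = a_0\}$ for some particular $a_0$ (compare $\{y = 1\}\subset\mathbb{C}^2$). The hypothesis that the $a = 0$ fiber is almost entirely good tells you nothing about other fibers. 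What is actually needed — and what the paper's formulation makes plausible, though it does not spell it out either — is that the defining polynomials of the non-generic strata, \emph{restricted to a fixed $a$}, remain non-identically-zero as polynomials in the random coefficients. In the paper's step-by-step comparison this is credible because the constraint at each step is obtained from a generically nonzero quantity in the $\mathcal{F}$-run by a perturbation involving $a$ and the random constant term $\phi^1_{0\ldots0}$, and the latter already saturates a one-dimensional direction; in your stratification framing that information has been discarded by the time you reach the conclusion, so the argument does not close.
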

\begin{proof}
	If $\mathcal{F}=(f_1,\dots,f_m)$  have a common zero then by Corollary \ref{cor:null2}, there exists a Gröbner basis not reduced to 1 for $\mathcal{F}$. We want to show that $\mathcal{F}_a=(f_1 + a, f_2, \dots, f_m)$ will also have a Gröbner basis that is not reduced to 1, and therefore admitting a common solution.  This will be achieved by convincing ourselves that the Buchberger's algorithm applied on $\mathcal{F}$ and $\mathcal{F}_a$ produce the same steps in the sense that the monomials involved in each step for $\mathcal{F}$ and for $\mathcal{F}_a$ are identical with probability 1. 
	Let us analyze each step of Buchberger's algorithm. 
    
	\textit{Computation of the $S$-polynomial.} When we compute $S_{i,j}$ in $\mathcal{F}$ and $S^a_{i,j}$ in $\mathcal{F}_{a}$ (see step 6 in Algorithm \ref{alg:div}), the two lists of monomials in $S_{i,j}$ and $S^a_{i,j}$ will be the same with probability 1. Indeed, it could happen that the coefficients of the monomials in $S^a_{i,j}$ vanish but this puts algebraic constraint on the constituent coefficients of $\mathcal{F}$. With respect to the distribution of the constituent coefficients, the constraint is satisfied with probability 0.
	
	\textit{Reduction through multivariate division algorithm \ref{alg:div}.} 
     We should also check that in the steps  6 to 8 of Algorithm \ref{alg:div}, the monomials produced starting from $\mathcal{F}$ and those produced starting from $\mathcal{F}_a$ will be the same with probability 1. The only way that at some steps the monomials differ is that the coefficients of the monomials produced by $\mathcal{F}_a$ vanish. As previously, this puts an algebraic constraint on the constituent coefficients of $\mathcal{F}$ that would be satisfied with vanishing probability.

    We note that the number of steps in the algorithm is finite and therefore the number of algebraic constraints that stem from the application of Buchberger's algorithm to $\mathcal{F}_a$ is finite.  Thus such algebraic constraints make up a set of measure 0 of coefficients for $\mathcal{F}$.
\end{proof}

To illustrate this proof, we detail the steps of Buchberger's algorithm using an example of $2$-$\qsat$ on 3 variables.
\begin{align*}
	f_1 &= a_0 + a_1z_1 + a_2z_2 + a_{12}z_1z_2\\
	f_2 &= b_0 + b_2z_2 + b_3z_3 + b_{23}z_2z_3\\
	f_3 &= c_0 + c_1z_1 + b_3z_3 + c_{13}z_1z_3.
\end{align*}
\begin{example}[Computation of the $S$-polynomial]
	\begin{align*}
		LCM(f_1,f_2) &=  a_{12}b_{23}z_1z_2z_3\\
		S(f_1,f_2) &= b_{23}z_3 \cdot f_1 - a_{12}z_1 \cdot f_2\\
		& = - a_{12}b_0z_1  + a_0b_{23}z_3 - a_{12}b_2z_1z_2 + ( a_1b_{23}- a_{12}b_3 )z_1z_3 + a_2b_{23}z_2z_3 \\
		LCM(f_1 + a ,f_2) &=  a_{12}b_{23}z_1z_2z_3\\
		S(f_1+ a,f_2) &= b_{23}z_3 \cdot (f_1 + a)- a_{12}z_1 \cdot f_2\\
		& = - a_{12}b_0z_1  + (a_0 + a)b_{23}z_3 - a_{12}b_2z_1z_2 + ( a_1b_{23}- a_{12}b_3 )z_1z_3 + a_2b_{23}z_2z_3 
	\end{align*}
	In this example, we must avoid the event $a + a_0 = 0$ which would delete the monomial $z_3$. This event has probability $0$.
\end{example}

\begin{example}[Reduction through multivariate division algorithm]
	
	$S(f_1,f_2)$ (resp. $S(f_1 + a ,f_2)$) is successively reducible by $f_2,f_3$ and $f_1$ (resp. $f_2,f_3$ and $f_1+ a$). Set $A = (a_1b_{23}- a_{12}b_3)/c_{13}$. We have 
	\begin{align*}
		p_1 &= S(f_1,f_2)  - a_2\cdot f_2\\
		&= -a_2b_0 - a_{12}b_0z_1 - a_2b_2z_2 + (a_0b_{23}-a_2b_3)z_3 - a_{12}b_2z_1z_2 + ( a_1b_{23}- a_{12}b_3 )z_1z_3 \\
		p_2 &= S(f_1,f_2)  - a_2\cdot f_2 - \frac{a_1b_{23}- a_{12}b_3 }{c_{13}} \cdot f_3 = S(f_1,f_2)  - a_2\cdot f_2 - A \cdot f_3\\
		&= -(a_2b_0 +c_0A) - (a_{12}b_0+ c_1A )z_1 -a_2b_2z_2 + (a_0b_{23}-a_2b_3 - b_3A)z_3 - a_{12}b_2z_1z_2  \\
		p_3 &= S(f_1,f_2)  - a_2\cdot f_2 - A \cdot f_3 + b_2 \cdot f_1\\
		&= (a_0b_2 - a_2b_0 - c_0A ) + (a_1b_2 - a_{12}b_0 - c_1A )z_1 + (a_0b_{23}-a_2b_3 - b_3A)z_3 \\
	\end{align*}
	\begin{align*}
		p_1^a &= S(f_1 +a,f_2)  - a_2\cdot f_2\\
		&= -a_2b_0 - a_{12}b_0z_1 - a_2b_2z_2 + ((a_0+a)b_{23}-a_2b_3)z_3 - a_{12}b_2z_1z_2 + ( a_1b_{23}- a_{12}b_3 )z_1z_3 \\
		p_2^\alpha &= S(f_1+a,f_2)  - a_2\cdot f_2 - A \cdot f_3\\
		&= -(a_2b_0 +c_0A) - (a_{12}b_0+ c_1A )z_1 -a_2b_2z_2 + ((a_0+a)b_{23}-a_2b_3 - b_3A)z_3 - a_{12}b_2z_1z_2  \\
		p_3^a &= S(f_1 + a,f_2)  - a_2\cdot f_2 - A \cdot f_3 + b_2\cdot (f_1 +a)\\
		&= ((a_0+a) b_2 - a_2b_0 - c_0A ) + (a_1b_2 - a_{12}b_0 - c_1A )z_1 + ((a_0+a)b_{23}-a_2b_3- b_3A)z_3  
	\end{align*}
	The algebraic constraints in this example are  $a_2b_3 - (a_0 + a)b_{23} = 0 $ at step 1, $(a_2b_3 - b_3A) - (a_0 + a)b_{23} = 0$ at step 2 and 3 and $(a_2b_0 + c_0A) - (a_0 + a)b_2 = 0$ at step 3. These occur with probability $0$.
\end{example}

The proof of Proposition \ref{direct-prop} follows then directly from Propositions \ref{lemma_existsol}, \ref{prop:extension}. Thus we have proved that if there exists a dimer covering then there exists a $\prodsat$solution w.h.p.. 

\section{Converse Statement}\label{Converse Statement}

We prove Proposition \ref{converse-prop}. 
The proof relies on Hall's marriage theorem \cite{hall1935matchings} stated below and on the Macaulay resultant of a system of polynomials \cite{Macaulay1902SomeFI}. For a system of homogeneous polynomial equations of the same number of equations and variables with coefficients in an algebraically closed field (here $\mathbb{C})$, the Macaulay resultant is a polynomial of the coefficients which vanishes if and only if the system of equations has a common non-zero solution. For more details on the resultant and its property we refer to \cite[Chap 3. \S 2]{Cox2004algebraic}.

\begin{theorem}[Hall's marriage theorem]
    \label{th:hall}
	For a bipartite graph $(V, E) = (A \cup B, E)$, the following conditions are equivalent.
	\begin{itemize}
		\item There is a perfect matching of $A$ into $B$.
		\item For each $S \subseteq A$, the inequality $|S| \leq |N(S)|$ holds where $N(S)$ denotes the neighboring nodes of $S$ in $B$.
	\end{itemize}
\end{theorem}

\begin{remark}
    A perfect matching `of $A$ into $B$' is a dimer configuration which covers all nodes in $A$ (but not necessarily all nodes of $B$) such that no two edges have common nodes. 
\end{remark}

\begin{proof}[Proof of Proposition \ref{converse-prop}]
We apply Theorem \ref{th:hall} to the factor graph $G^\prime$ with $A=[M^\prime]$ the set of constraint nodes and $B = [N^\prime]$ the set of variable nodes. Thus there exists a constraint-covering dimer configuration if and only if for any subset $S\subseteq [M^\prime]$ the number of variables appearing in those constraints satisfies $\vert S\vert \leq \vert N(S) \vert  $. Taking the contrapositive, if $G^\prime$ has no  constraint-covering dimer configuration, there must exist a subset $S \subset [M^\prime]$ with $\vert N(S)  \vert<\vert S\vert$. One can find a subset $S' \subseteq S$ with $\vert S'\vert=\vert N(S) \vert +1$ constraints which contains all the  variables of $N(S)$. This set $ S'$ corresponds to a system of $\vert N(S)\vert +1$ polynomial equations of the form \ref{alg-equations} with $\vert N(S) \vert $ variables. Now we show that this overdetermined system of equations does not admit a solution which implies that the full system cannot admit a solution.

Take the polynomials corresponding to $S^\prime$, with variables relabeled as $z_1, \dots, z_{\vert N(S) \vert}$,
and make them homogeneous by introducing an additional variable $z_0$, as follows
\begin{align}
    z_0^k \sum_{(i_1,\dots, i_k)\in \{0, 1\}^k} \phi^m_{i_1\dots i_k} \bc{\frac{z_{m_1}}{z_0}}^{i_1}\dots \bc{\frac{z_{m_k}}{z_0}}^{i_k}, \qquad m\in S^\prime
\end{align}
Suppose now that the system of original equations has a common solution $(z_1^*, \dots, z_{\vert N(S) \vert }^*)$. Then the system of homogeneous equations also has a common solution $(z_0=1, z_1^*, \dots, z_{\vert N(S)\vert }^*)$ and this solution is not the zero solution (since $z_0=1$). Therefore the Macaulay resultant of the homogeneous system must vanish. However this resultant itself is a polynomial in the variables $\{\phi^m_{m_1, \dots, m_k}, m\in S^\prime\}$ and is an holomorphic function. The zero locus of an holomorphic function has measure $0$ \cite{gunning1965analytic} and therefore the Macaulay resultant does not vanish with probability $1$. Hence with probability $1$ the system of equations cannot have a common solution.
\end{proof}
\section{Simulations}\label{Simulations}

In this section, we investigate two issues in order to better understand the nature of the PRODSAT phase and its possible transition towards the ENTSAT phase. For $k\geq 8$ it is established that the ENTSAT phase exists but this is still open for lower $k$. The discussion in this section applies to any $k$ but we run simulations for $k=3$, as they are too costly in practice for higher values.

\subsection{Dimer coverings and dimension of solution space}

Theorem \ref{mainthm} establishes a precise connection between the presence of a dimer covering and the existence of a PRODSAT solution. It is therefore of interest to further investigate if the {\it structure} of the interaction graph can provide insights about the dimension of the solution space of $H_F$. 

We first gather a few observations about $\dimKerH$.
There are two sources of randomness in k-QSAT: the interaction graph and the choice of the projectors. For a fixed interaction graph, let us consider the corresponding Hamiltonian where the coefficients of the projectors are to be thought as \emph{indeterminates}. The $2^N\times 2^N$ Hamiltonian matrix $H_F$ (\ref{quantum_hamiltonian}) is sparse when it is represented in the computational basis since the projectors are $k$-local. Then the determinants of the $s \times s$ submatrices are polynomials in the indeterminates. Two situations may arise. These polynomials may be equal to a trivially `null polynomial' or to a bona fide non-trivial polynomial. 
Let $S$ be the largest $s$ such that there exists an $s \times s$ submatrix $M_{H_F}$ whose determinant is a non-trivial polynomial. Over the choices of the random projectors, the determinant of $M_{H_F}$ will vanish on a set of measure $0$. Thus the rank of $H_F$ will take the value $S$ with probability 1. Therefore, for any fixed interaction graph, we have $\dimKerH =2^N - S$ with probability 1 over the choices of the random projectors.\footnote{This is nothing other than the content of the geometrization theorem \cite{QSATLaumann}.} 
Note also that by the definition of $S$, for any given instance of the random projectors, ${\rm rank} H_F \leq S$ necessarily so 
that  $2^N - S \leq \dimKerH$.

We would like to compute the `generic' value of $\dimKerH$ which is $2^N-S$. This is not easy in general. Nevertheless, by the above remarks, it is certainly upper bounded by $\dimKerH$ for {\it separable projectors} (i.e., $\vert \Phi^m\rangle$ is a product state). This is interesting because it is easier to compute $\dimKerH$ for separable projectors, at least for a few simple graphs. It is not clear a priori when this upper bound is an equality because separable projectors form a set of measure zero in the space of all projectors, but numerical simulations suggest that this is so for the graphs reviewed. Figure \ref{fig:pattern} and Table \ref{tab:pattern} show a set of graphs and corresponding recurrence relations for $\dimKerH$ (denoted $r_m$) and also for the number of dimer coverings (denoted $d_m$). 

\begin{figure}
    \centering
    \input{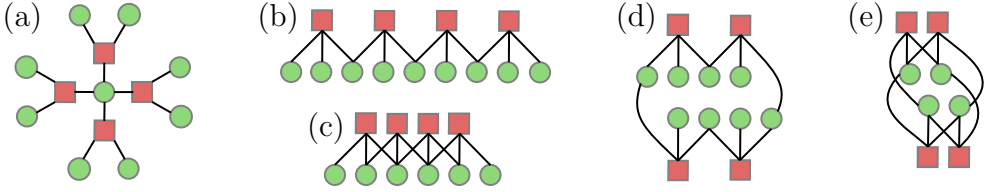}
    \caption{ Different patterns for $k= 3, m = 4$.
    (a) Sunflower (b) Loose chain (c) Strong chain (d) Loose cycle (e) Strong cycle. For the strong chain and the strong cycle, each qubit is connected to $k$ clauses, except for the boundary qubits. The sunflower is constructed around one central qubit.}
    \label{fig:pattern}
\end{figure}

\begin{table}
    \centering
    
{\scriptsize
\begin{tabular}{|l|cc|cc|}
\multicolumn{1}{l}{}& \multicolumn{2}{c}{Dimension of the kernel}         & \multicolumn{2}{c}{Dimer covering}                \\ \hline
Graph            & Initial values       & $r_m $             & Initial values             & $d_m $             \\ \hline
Sunflower$^\dag$ & $r_1 = 7$            & $3r_{m-1}+3^{m-1}$  & $d_1 = 3$                  & $2^{m-1}(m + 2)$     \\
Loose chain*     & $r_1 = 7, r_2 = 24$  & $4r_{m-1}-2r_{m-2}$ & $d_1 = 3, d_2 = 8$         & $3d_{m-1}-d_{m-2}$   \\
Loose cycle*     & $r_2 = 12, r_3 = 40$ & $4r_{m-1}-2r_{m-2}$ & $d_2 = 3, d_3 = 18 $       & $3d_{m-1}-d_{m-2}$   \\
Strong chain**    & $r_1 = 7, r_2 = 12$  & $r_{m-1}+r_{m-2}+1$ & $d_1 = 3, d_2 = 7, d_3=14$  & $2d_{m-1}-d_{m-3}+1$ \\
Strong cycle**   & $r_4 = 8, r_5 = 12$  & $r_{m-1}+r_{m-2}-1$ & $d_4 = 9, d_5 = 13, d_6=20$ & $2d_{m-1}-d_{m-3}$   \\ \hline
\end{tabular}\\
\medskip
}\label{tab:pattern}
    \caption{Recurrence relations for patterns in Fig. \ref{fig:pattern} with $k=3$ and $m$ the number of clauses.
    Regarding the dimension of the null space: $\dag$ the recurrence relation is proved in \cite{bounds} for all projectors (including non-separable ones); $*$ the two recurrence relations are proved in Appendix \ref{app:pattern} for separable projectors - numerical simulations give the dimension of the null space equal to this upper bound;
    $**$ are deduced by numerical simulation.}
\end{table}

These intriguing relations unfortunately do not seem to clearly demonstrate a general link between the number of dimer coverings and the dimension of the null space. Within this limited set of graphs, for a given graph type, we observe either $r_m \geq d_m$ or $r_m \leq d_m$ for all $m$. We have not found a universal relation between $r_m$ and $d_m$ beyond these inequalities. For example, we have $r_m = d_m - 1$ in the case of the strong cycle and $r_m = d_m+m+3$ in the case of the strong chain.

For $k=2$, the only satisfiable graphs are the tree and the cycle (it is easy to check that two intersecting cycles are not satisfiable \cite{QSATLaumann}). In that case, it is known that there is a gap between $\dimKerH = N+1 $ of a tree and $\dimKerH = 2$ of a cycle. However this linear growth of the gap does not seem to persist for $k=3$. Indeed, in solving the recurrence relations, we observe that the dimension of the null space for all patterns grows exponentially in $m$ (this exponential growth just follows from the order $2$ relations).
\subsection{PRODSAT basis}\label{PRODSATbasis}
We now wish to discuss \emph{how much entanglement is present} in the PRODSAT phase by comparing the dimension of the space generated by the product solutions with that of the full solution space.

A k-QSAT instance is PRODSAT if it is satisfied by a product state. However, this does not imply that all the solutions to the problem are product states. Indeed, the (normalized) sum of two different product states is still a solution to the problem and is likely to be entangled. For a given instance of random $k$-QSAT, the space generated by all the PRODSAT solutions is referred to as the {\it PRODSAT space}. A basis of the full solution space, $\ker H_F$, is said to be a {\it fully PRODSAT basis}, if all the vectors of the basis are product states. Let $\dimProd$ denote the dimension of the PRODSAT space.

An interesting question is the following: 
Is it true that the kernel space admits a fully PRODSAT basis?
While we do not directly study an ENTSAT phase in this paper, this question is clearly motivated by the harder issue of how a PRODSAT phase potentially disappears in favor of an ENTSAT phase.

Here we address this question in the following restricted setting of finite sizes with $N=M$ and $k = 3$.  Note that although $M/N = 1$, we are dealing here with finite size, so there exist instances with dimer coverings which are therefore PRODSAT. In particular, for $M=N=5,6$ it is known that {\it all} graphs have dimer coverings.

Even in the restricted setting $N=M$ and $k = 3$, it is not easy to compare $\dimProd$ and $\dimKerH$, and here this is done only for moderate sizes up to $N=M=11$. Indeed, to obtain $\dimKerH$ we use exact diagonalization to count the zero eigenvalues of the Hamiltonian which costs roughly $O(2^{3N})$ operations. At the same time, the computation of the PRODSAT solutions can be achieved through Buchberger's algorithm which requires a substantial runtime even for moderate sizes. Instead, we will rely on the BKK theorem (\ref{thm:bkk}) to obtain only the number of PRODSAT solutions of Eq. \ref{alg-equations}. Before stating the theorem, we need to recall the following: 

\begin{definition}
    The Newton polytope of a polynomial $f = \sum_{\alpha \in \Gamma} c_\alpha x^\alpha, \Gamma \subset \ZZ^n$ is the polytope formed by the convex hull of the set of all $\alpha \in \Gamma$. For polytopes $P_1, \dots, P_n$, the Mixed Volume $MV_n(P_1, \dots, P_n)$ is the coefficient of the monomial $\lambda_1 \dots\lambda_n$ in the polynomial $f(\lambda_1,\dots, \lambda_n) = Vol_n(\lambda_1P_1+ \dots + \lambda_n P_n)$ where the $+$ represents the Minkowski sum. Figure \ref{fig:mixed_volume} is an example of these definitions. For a k-QSAT instance, we denote by $MV$ the mixed volume of the polytopes associated with the polynomials in Eq. \ref{alg-equations}.
\end{definition}
\begin{figure}[H]
    \centering
    \tikzset{every picture/.style={line width=0.75pt}} 

\begin{tikzpicture}[x=0.75pt,y=0.75pt,yscale=-1,xscale=1]

\draw  [draw opacity=0][fill={rgb, 255:red, 74; green, 144; blue, 226 }  ,fill opacity=0.8 ] (328,132) -- (378,132) -- (378,182) -- (328,182) -- cycle ;
\draw  [draw opacity=0][fill={rgb, 255:red, 80; green, 64; blue, 209 }  ,fill opacity=0.77 ] (378,72.5) -- (407.5,132) -- (378,132) -- cycle ;
\draw  [draw opacity=0][fill={rgb, 255:red, 94; green, 200; blue, 64 }  ,fill opacity=0.7 ] (328,72.75) -- (378,72.75) -- (378,132) -- (328,132) -- cycle ;
\draw  [draw opacity=0][fill={rgb, 255:red, 94; green, 200; blue, 64 }  ,fill opacity=0.7 ] (378,132) -- (407.5,132) -- (407.5,182) -- (378,182) -- cycle ;
\draw   (328.5,182.08) .. controls (328.5,186.75) and (330.83,189.08) .. (335.5,189.08) -- (342.94,189.08) .. controls (349.61,189.08) and (352.94,191.41) .. (352.94,196.08) .. controls (352.94,191.41) and (356.27,189.08) .. (362.94,189.08)(359.94,189.08) -- (371,189.08) .. controls (375.67,189.08) and (378,186.75) .. (378,182.08) ;
\draw   (378,182.08) .. controls (377.94,186.06) and (379.9,188.08) .. (383.89,188.14) -- (383.89,188.14) .. controls (389.57,188.22) and (392.38,190.25) .. (392.33,194.23) .. controls (392.38,190.25) and (395.25,188.3) .. (400.94,188.38)(398.39,188.35) -- (400.94,188.38) .. controls (404.92,188.44) and (406.94,186.48) .. (407,182.5) ;
\draw   (328.25,132.08) .. controls (323.58,132.08) and (321.25,134.41) .. (321.25,139.08) -- (321.25,146.52) .. controls (321.25,153.19) and (318.92,156.52) .. (314.25,156.52) .. controls (318.92,156.52) and (321.25,159.85) .. (321.25,166.52)(321.25,163.52) -- (321.25,174.58) .. controls (321.25,179.25) and (323.58,181.58) .. (328.25,181.58) ;
\draw   (328,73.33) .. controls (323.33,73.33) and (321,75.66) .. (321,80.33) -- (321,92.21) .. controls (321,98.88) and (318.67,102.21) .. (314,102.21) .. controls (318.67,102.21) and (321,105.54) .. (321,112.21)(321,109.21) -- (321,125.08) .. controls (321,129.75) and (323.33,132.08) .. (328,132.08) ;
\draw  [draw opacity=0][fill={rgb, 255:red, 74; green, 144; blue, 226 }  ,fill opacity=0.8 ] (50,110.25) -- (90.21,110.25) -- (90.21,150.46) -- (50,150.46) -- cycle ;
\draw  [draw opacity=0][fill={rgb, 255:red, 80; green, 64; blue, 209 }  ,fill opacity=0.77 ] (162.21,90.25) -- (202.21,170) -- (162.21,170) -- cycle ;
\draw  [fill={rgb, 255:red, 0; green, 0; blue, 0 }  ,fill opacity=1 ] (49,110.63) .. controls (49,109.73) and (49.73,109) .. (50.63,109) .. controls (51.52,109) and (52.25,109.73) .. (52.25,110.63) .. controls (52.25,111.52) and (51.52,112.25) .. (50.63,112.25) .. controls (49.73,112.25) and (49,111.52) .. (49,110.63) -- cycle ;
\draw  [fill={rgb, 255:red, 0; green, 0; blue, 0 }  ,fill opacity=1 ] (88.58,110.25) .. controls (88.58,109.35) and (89.31,108.63) .. (90.21,108.63) .. controls (91.1,108.63) and (91.83,109.35) .. (91.83,110.25) .. controls (91.83,111.15) and (91.1,111.88) .. (90.21,111.88) .. controls (89.31,111.88) and (88.58,111.15) .. (88.58,110.25) -- cycle ;
\draw  [fill={rgb, 255:red, 0; green, 0; blue, 0 }  ,fill opacity=1 ] (49,150.63) .. controls (49,149.73) and (49.73,149) .. (50.63,149) .. controls (51.52,149) and (52.25,149.73) .. (52.25,150.63) .. controls (52.25,151.52) and (51.52,152.25) .. (50.63,152.25) .. controls (49.73,152.25) and (49,151.52) .. (49,150.63) -- cycle ;
\draw  [fill={rgb, 255:red, 0; green, 0; blue, 0 }  ,fill opacity=1 ] (88,150.63) .. controls (88,149.73) and (88.73,149) .. (89.63,149) .. controls (90.52,149) and (91.25,149.73) .. (91.25,150.63) .. controls (91.25,151.52) and (90.52,152.25) .. (89.63,152.25) .. controls (88.73,152.25) and (88,151.52) .. (88,150.63) -- cycle ;
\draw  [fill={rgb, 255:red, 0; green, 0; blue, 0 }  ,fill opacity=1 ] (161,169.63) .. controls (161,168.73) and (161.73,168) .. (162.63,168) .. controls (163.52,168) and (164.25,168.73) .. (164.25,169.63) .. controls (164.25,170.52) and (163.52,171.25) .. (162.63,171.25) .. controls (161.73,171.25) and (161,170.52) .. (161,169.63) -- cycle ;
\draw  [fill={rgb, 255:red, 0; green, 0; blue, 0 }  ,fill opacity=1 ] (200,169.63) .. controls (200,168.73) and (200.73,168) .. (201.63,168) .. controls (202.52,168) and (203.25,168.73) .. (203.25,169.63) .. controls (203.25,170.52) and (202.52,171.25) .. (201.63,171.25) .. controls (200.73,171.25) and (200,170.52) .. (200,169.63) -- cycle ;
\draw  [fill={rgb, 255:red, 0; green, 0; blue, 0 }  ,fill opacity=1 ] (160.58,90.25) .. controls (160.58,89.35) and (161.31,88.63) .. (162.21,88.63) .. controls (163.1,88.63) and (163.83,89.35) .. (163.83,90.25) .. controls (163.83,91.15) and (163.1,91.88) .. (162.21,91.88) .. controls (161.31,91.88) and (160.58,91.15) .. (160.58,90.25) -- cycle ;
\draw  [fill={rgb, 255:red, 0; green, 0; blue, 0 }  ,fill opacity=1 ] (160.75,130.13) .. controls (160.75,129.23) and (161.48,128.5) .. (162.38,128.5) .. controls (163.27,128.5) and (164,129.23) .. (164,130.13) .. controls (164,131.02) and (163.27,131.75) .. (162.38,131.75) .. controls (161.48,131.75) and (160.75,131.02) .. (160.75,130.13) -- cycle ;

\draw (384.5,115.25) node [anchor=north west][inner sep=0.75pt]  [font=\scriptsize] [align=left] {$\displaystyle \lambda _{2}^{\ 2}$};
\draw (338,99.25) node [anchor=north west][inner sep=0.75pt]  [font=\scriptsize] [align=left] {$\displaystyle 2\lambda _{1} \lambda _{2}$};
\draw (379.5,150.25) node [anchor=north west][inner sep=0.75pt]  [font=\scriptsize] [align=left] {$\displaystyle \lambda _{1} \lambda _{2}$};
\draw (348.25,196.63) node [anchor=north west][inner sep=0.75pt]  [font=\scriptsize] [align=left] {$\displaystyle \lambda _{1}$};
\draw (388.5,196.63) node [anchor=north west][inner sep=0.75pt]  [font=\scriptsize] [align=left] {$\displaystyle \lambda _{2}$};
\draw (346,150.25) node [anchor=north west][inner sep=0.75pt]  [font=\scriptsize] [align=left] {$\displaystyle \lambda _{1}^{\ 2}$};
\draw (297.25,151.63) node [anchor=north west][inner sep=0.75pt]  [font=\scriptsize] [align=left] {$\displaystyle \lambda _{1}$};
\draw (295.5,97.63) node [anchor=north west][inner sep=0.75pt]  [font=\scriptsize] [align=left] {$\displaystyle 2\lambda _{2}$};
\draw (28.75,152) node [anchor=north west][inner sep=0.75pt]  [font=\tiny] [align=left] {$\displaystyle ( 0,0)$};
\draw (69.75,153.25) node [anchor=north west][inner sep=0.75pt]  [font=\tiny] [align=left] {$\displaystyle ( 1,0)$};
\draw (27.25,112.75) node [anchor=north west][inner sep=0.75pt]  [font=\tiny] [align=left] {$\displaystyle ( 0,1)$};
\draw (69.5,114) node [anchor=north west][inner sep=0.75pt]  [font=\tiny] [align=left] {$\displaystyle ( 1,1)$};
\draw (140.75,172.5) node [anchor=north west][inner sep=0.75pt]  [font=\tiny] [align=left] {$\displaystyle ( 0,0)$};
\draw (180,172.25) node [anchor=north west][inner sep=0.75pt]  [font=\tiny] [align=left] {$\displaystyle ( 1,0)$};
\draw (140.5,92) node [anchor=north west][inner sep=0.75pt]  [font=\tiny] [align=left] {$\displaystyle ( 0,2)$};
\draw (92.5,96.75) node [anchor=north west][inner sep=0.75pt]  [font=\scriptsize] [align=left] {$\displaystyle A$};
\draw (92.5,137.75) node [anchor=north west][inner sep=0.75pt]  [font=\scriptsize] [align=left] {$\displaystyle B$};
\draw (53.5,137.75) node [anchor=north west][inner sep=0.75pt]  [font=\scriptsize] [align=left] {$\displaystyle D$};
\draw (53,96.75) node [anchor=north west][inner sep=0.75pt]  [font=\scriptsize] [align=left] {$\displaystyle C$};
\draw (165,76.25) node [anchor=north west][inner sep=0.75pt]  [font=\scriptsize] [align=left] {$\displaystyle E$};
\draw (203.25,156.75) node [anchor=north west][inner sep=0.75pt]  [font=\scriptsize] [align=left] {$\displaystyle F$};
\draw (164.5,155.5) node [anchor=north west][inner sep=0.75pt]  [font=\scriptsize] [align=left] {$\displaystyle G$};
\draw (61.5,45.08) node [anchor=north west][inner sep=0.75pt]   [align=left] {$\displaystyle P_{1}$};
\draw (172.25,45.08) node [anchor=north west][inner sep=0.75pt]   [align=left] {$\displaystyle P_{2}$};
\draw (313.83,45.08) node [anchor=north west][inner sep=0.75pt]   [align=left] {$\displaystyle \lambda _{1} P_{1} +\lambda _{2} P_{2}$};

\end{tikzpicture}
    \caption{Example of Mixed Volume Computation. The Newton polytope $P_1$ of the polynomial $f_1 = A xy + B x + Cy + D$ is the square with vertices $\{(1,1), (1,0), (0,1), (0,0)\}$. For $f_2 = Ey^2 + Fx + G$, it is a triangle with vertices $\{(0,2), (1,0), (0,0)\}$. The decomposition of $Vol_n(\lambda_1 P_1  +\lambda_2 P_2) = \lambda_1^2 + 3 \lambda_1\lambda_2 + \lambda_2^2$ is represented on the figure. Then the mixed volume is 3. }
    \label{fig:mixed_volume}
\end{figure}
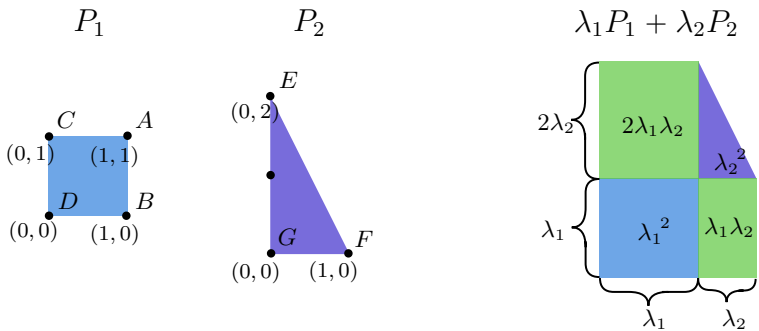

\begin{theorem}[BKK theorem \cite{Bernshtein1975TheNO}] \label{thm:bkk}
    Let $f_1,\dots, f_n$ be Laurent polynomials over $\CC$,  
    \begin{equation}
        f_i = \sum_{\alpha \in \Gamma_i} c_\alpha x^\alpha \qquad c_\alpha \in \CC, \quad \Gamma_i \subset \ZZ^k
    \end{equation}
    with finitely many common zeroes in $(\CC^\ast)^n$.
    Let $P_i$ be the Newton polytope of $f_i$. Then the number of common zeroes of the $f_i$ in $(\CC^\ast)^n$ is upper bounded by the mixed volume $MV_n(P_1, \dots, P_n)$. For generic choices of coefficients in $f_i$'s, the number of common solutions equals $MV_n(P_1, \dots, P_n)$.
\end{theorem}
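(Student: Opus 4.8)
\textbf{A proof sketch for the BKK theorem (Theorem \ref{thm:bkk}).}
The plan is to establish the exact count for \emph{generic} coefficients first, and then to derive the upper bound for an arbitrary system by semicontinuity. I would take toric intersection theory as the conceptual engine, while also indicating the equivalent polyhedral--homotopy route, which is closer to the constructive style used elsewhere in this paper. First I would fix a smooth complete fan $\Sigma$ in $\mathbb{R}^n$ refining the common refinement of the inner normal fans of $P_1,\dots,P_n$, with associated smooth projective toric variety $X_\Sigma$ containing the dense open torus $(\CC^\ast)^n$. Each $P_i$ then determines a globally generated torus-invariant Cartier divisor $D_i$ on $X_\Sigma$ whose sections are spanned by the monomials $x^\alpha$, $\alpha\in P_i\cap\ZZ^n$, so $f_i$ defines a section of $\mathcal{O}_{X_\Sigma}(D_i)$ whose zero divisor $Z_i$ restricts to $\{f_i=0\}$ on the torus.

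The heart of the argument is the classical identity $D_1\cdots D_n = MV_n(P_1,\dots,P_n)$, which I would prove by expanding the self-intersection $(t_1 D_1+\cdots+t_n D_n)^n$ and matching it with $n!\,\mathrm{Vol}_n(t_1P_1+\cdots+t_nP_n)$ via asymptotic Riemann--Roch on toric varieties --- this polynomial identity is exactly the definition of the mixed volume. It then remains to show that for generic coefficients $Z_1,\dots,Z_n$ meet transversally in finitely many points, all lying in the torus. Transversality and finiteness follow from a Bertini-type argument applied to the linear systems $|D_i|$. The crucial ``no escape to the boundary'' claim I would check orbit by orbit: on a boundary orbit $O_\tau$, the restriction of $f_i$ is the facial Laurent polynomial supported on the face of $P_i$ cut out by $\tau$, and a dimension count shows that for generic coefficients these facial subsystems have no common solution in $O_\tau$. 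Hence every intersection point lies in $(\CC^\ast)^n$ and is simple, giving exactly $MV_n(P_1,\dots,P_n)$ solutions.

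For a constructive version I would instead set up a polyhedral homotopy: pick a generic lifting $\omega_i:\Gamma_i\to\mathbb{R}$, whose lower hulls induce a fine mixed subdivision of $P_1+\cdots+P_n$ whose mixed cells $C=C_1+\cdots+C_n$ (each $C_i$ an edge, the $C_i$ jointly spanning $\mathbb{R}^n$) account additively for the mixed volume. Deforming to $f_i(x,t)=\sum_\alpha c_\alpha t^{\omega_i(\alpha)}x^\alpha$ and performing, cell by cell, the monomial substitution dictated by the inner normal of that cell, one finds that as $t\to 0$ each mixed cell yields a binomial start system whose number of solutions in $(\CC^\ast)^n$ equals the normalized lattice volume of the cell (computed from the Smith normal form of its exponent matrix). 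Summing over mixed cells recovers $MV_n(P_1,\dots,P_n)$, and the implicit function theorem continues each start solution to a solution at $t=1$, while a properness argument shows these paths are complete, i.e.\ capture every torus solution.

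Finally, for arbitrary coefficients with finitely many torus solutions, I would deform them to a generic configuration and invoke lower semicontinuity of the number of isolated solutions in $(\CC^\ast)^n$ (isolated solutions persist or merge but are never destroyed under specialization), which yields the upper bound $MV_n(P_1,\dots,P_n)$. I expect the main obstacle to be the ``no boundary solutions'' step --- equivalently, the completeness of the homotopy paths: proving that for generic coefficients the relevant incidence variety has a fiber that is simultaneously finite and contained in the torus, rather than acquiring excess-dimensional components on the toric boundary. This is what forces the careful choice of the compactification $X_\Sigma$ and the facial-genericity analysis; everything else is the standard toric dictionary plus a semicontinuity argument.
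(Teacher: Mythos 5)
The paper does not prove Theorem \ref{thm:bkk}; it is stated as an external result with a citation to Bernshtein's 1975 paper, and it is used only in the numerical/experimental Section \ref{Simulations} to estimate the number of PRODSAT solutions of the polynomial system \eqref{alg-equations}. So there is no internal proof for your sketch to be compared against.

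That said, as a self-contained sketch of the BKK theorem your proposal follows the two standard lines of attack faithfully and coherently. The toric route (express each $P_i$ as a globally generated torus-invariant divisor $D_i$ on a smooth toric compactification $X_\Sigma$, identify $D_1\cdots D_n$ with the mixed volume via asymptotic Riemann--Roch and polarization, then use Bertini plus facial genericity to show the intersection lies transversally inside the open torus) is exactly the modern textbook argument, and the polyhedral-homotopy route (generic lifting, fine mixed subdivision, binomial start systems read off mixed cells, Smith normal form for their root counts, path continuation) is the constructive alternative due to Huber--Sturmfels; they indeed give the same count. The final semicontinuity step --- small perturbation to generic coefficients cannot destroy isolated torus zeros, so the generic number is an upper bound for any system with finitely many torus solutions --- is the right way to go from ``equality for generic coefficients'' to the stated inequality. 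You also correctly flag the genuine difficulty: establishing that for generic coefficients no common zeros occur on the toric boundary (equivalently, that homotopy paths do not escape to infinity). One presentational remark: the dimension count on a boundary orbit $O_\tau$ needs slightly more care than you let on, since some facial restrictions may degenerate to monomials, which are nonvanishing on the torus and thus immediately rule out solutions; the argument works, but the case analysis (which faces of each $P_i$ are cut out by $\tau$, and whether they jointly drop dimension) is where most of the technical work of the theorem lives. As a blind reconstruction of a cited theorem this is a competent sketch, but the paper neither needs nor supplies such a proof.
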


\begin{figure}[H]
\centering
\begin{minipage}{.49\textwidth}
  \centering
  \includegraphics[trim={0 0 {4em} 0},clip,width=\linewidth]{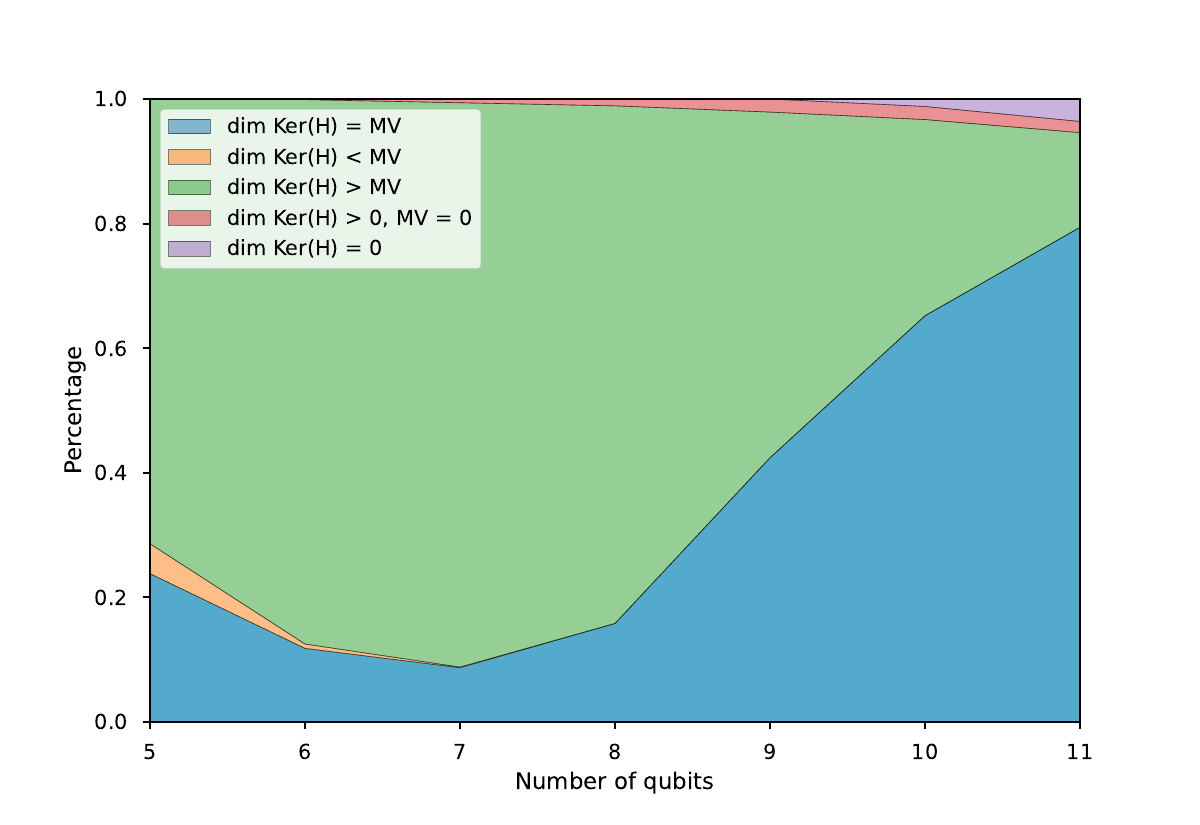}
\end{minipage}
\begin{minipage}{.49\textwidth}
  \centering
  \includegraphics[trim={0 0 {4em} 0},clip,width=\linewidth]{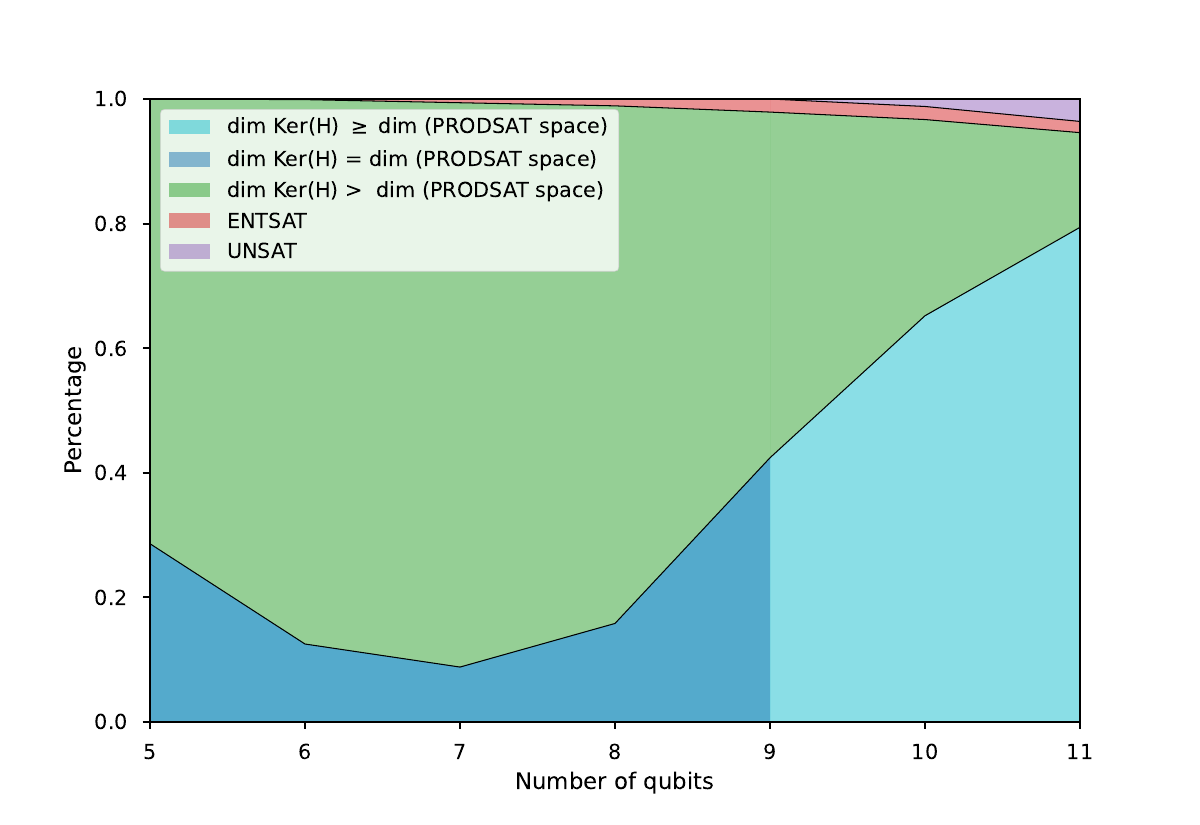}
\end{minipage}
\caption{Comparison among $\dimKerH, \dimProd$ and $MV$. 
On the left, $\dimKerH$ and  $MV$ are compared. On the right, we reinterpret the results in terms of PRODSAT space, UNSAT, and ENTSAT instances.
For blue and green, we use $\dimProd\leq MV$. Red and purple follow from the definition of the phases. Orange results are joined with blue since we always have $\dimKerH \geq \dimProd$.
For $N=5,6$ all the instances with $M=N$ are computed (resp. 252 and 38500). For $7\leq N \leq 10$, 5000 instances are sampled uniformly. For $N=11$, only 500 instances are used.}
\label{fig:prodsat_basis}
\end{figure}

\begin{remark}
For a given set of equations of the form \ref{alg-equations}, the corresponding mixed volume does not depend on the coefficients of projectors, but only on the monomials. Hence, for $k$-QSAT, the mixed volume only depends on the interaction graph. 
Regarding the complexity of computing the mixed volumes of polytopes, it is at least \#P-hard \cite{MV1998}.
\end{remark}

Since the product states obtained by substituting the $\{z_i\}$ solutions of Eq. \ref{alg-equations} in Eq. \ref{tensor-product-state-param} could be linearly dependent, the mixed volume only gives an upper bound on $\dimProd$,
\begin{equation}
\dimProd\leq MV.
\end{equation}
When we can compute the $\{z_i\}$ with Buchberger's algorithm, we can check whether this inequality is tight or not by looking for linear dependencies between PRODSAT solutions. Three situations can arise: 
\begin{itemize}
    \item $MV < \dimKerH$. Then $\dimProd \leq MV < \dimKerH$ so the basis is not fully PRODSAT. \item $MV = \dimKerH$. Then  $\dimProd \leq MV = \dimKerH$ so we cannot conclude if the basis is fully PRODSAT or not. 
    \item $MV > \dimKerH$. Then $\dimProd\leq \dimKerH < MV$ so there must be linear dependencies among PRODSAT solutions. We cannot conclude if the basis is fully PRODSAT or not.
\end{itemize}

Figure \ref{fig:prodsat_basis} shows the percentage of instances for which these scenarios occur for $M=N$ between $5$ and $11$. For increasing $N$, we observe an increase in the proportion of instances with $MV = \dimKerH$, which is somewhat unexpected (blue region). In particular, up to $N\leq 9$, we can check that $\dimProd=\dimKerH$ so the basis is indeed fully PRODSAT. Unfortunately, it is difficult to assess if this is still true for $N=10, 11$ but the trend in the figure suggests this might be so. This finding may seem rather surprising as one might have expected that the trend of the share of the green region increasing, observed for $N=5, 6, 7$, would continue with fully PRODSAT basis becoming rarer. 
We also find that for $N\geq 7$, there appear a small fraction of instances for which $\dimKerH>0$ and $MV=0$.  This corresponds to the existence of ENTSAT instances. For $N\geq 9$, there also appear a fraction of UNSAT instances. These results may point towards a picture of coexisting fractions of fully PRODSAT and non-fully PRODSAT instances in the large size limit $N, M\to +\infty$, $M/N=1$ for a random ensemble of instances {\it conditioned} on the existence of dimer coverings. 
  
\section{Conclusion}\label{Conclusion}

In this work, we have provided a comprehensive analysis of the PRODSAT phase with zero-energy eigenstates of product-form which are present for clause densities $\alpha < \alpha_{dc}(k)$, and disappear for $\alpha > \alpha_{dc}(k)$. This is a geometric transition driven by the existence versus non-existence of dimer coverings for the graph ensemble. When the clause density exceeds $\alpha_{dc}(k)$, the emergence of zero-energy solutions, if they exist, necessarily implies the presence of an entangled satisfiable (ENTSAT) phase. While such a phase is known to exist for $k \geq 7$, and is absent  for $k=2$, the situation for $3 \leq k \leq 6$ remains unresolved from the mathematical standpoint. Note that the numerical simulations of \cite{ClustQSATLaumann} points toward a direct transition from PRODSAT to UNSAT for $k=3$.  

This work points to several open directions for further research. 

It is unclear whether the transition between the ENTSAT and UNSAT phases would also be driven by the geometric properties of the underlying graph ensemble, and if this would be true for all values of $k$ displaying such a transition. Establishing a connection between graph-theoretic characteristics and the entanglement structure of zero-energy states could help to understand the emergence of the ENTSAT phase. A related question is the one investigated in section \ref{PRODSATbasis}, namely does the ENTSAT phase somehow already emerge within the PRODSAT one. The explicit construction of entangled satisfying assignments (beyond numerical Hamiltonian diagonalization) presents a major challenge, and developing systematic methods or algorithms for their generation would significantly enhance our ability to probe the ENTSAT regime. 

It is easy to see that for $k$-QSAT instances whose projectors have product-form, when $\alpha < \alpha_{dc}$ there exists a zero-energy eigenstate. Numerical observations seem to indicate that more is true, namely if the degree of variable nodes is less than $k$, the dimension of the zero-energy eigenspace is the same whether we sample only within product-form projectors or general projectors. Establishing this fact rigorously for graph ensembles with fixed (appropriate) variable node degree is an open problem.

\FloatBarrier

\begin{appendix}
    \section{Leaf Removal Process ($\lr$)}\label{Append_LR}
In this appendix, we give the proof of Lemma \ref{lr-pre-lemma} for completeness. The proof is algorithmic and based on two ingredients, namely a leaf removal process on factor graphs and Bravyi's transfer matrix. 

We begin with a brief review of the leaf removal (LR) process on a factor graph. We first delete isolated (degree-zero) vertices.  Next, we choose a unary (degree-one) variable $v_1$ at random and delete it along with its sole neighbor $a\in \partial v_1$.  By doing so, we also remove the other $k-1$ edges of $a$ connected to $v_2,\dots ,v_k\in\partial a$.  Such removal possibly makes some or all of $v_2,\dots, v_k$ isolated or unary variables.  Delete the isolated variables once again and then start again at an unary variable to delete further on as described before.  We iterate this process until we cannot find any more isolated or unary variables.  The process concludes with a subgraph where each variable is connected to at least two checks while all the checks are still connected to $k$ variables.  This (possibly empty) subgraph is referred to as the $2$-core of the hypergraph $G$ (equivalently, core or hypercore). 

The study of the $2$-core was done in \cite{Molloy2005} where the existence of a threshold  $\alpha_{ \rm lr}(k)$ is proven, below which the 2-core is empty and above which it is not empty w.h.p.. More precisely, let $G_{N,	p= \alpha/N^{k-1}}^k$ be the underlying $k$-uniform hypergraph where each $ \binom{n}{k} $ possible edges appear with probability $p$,  we have the following lemma:
 
\begin{lemma}\cite[Theorem 1]{Molloy2005} \label{lemma_core} 
Define 
	$$\alphalr(k)=\min_{x>0}\frac{(k-1)!x }{ \bc{1-\eul^{-x}}^{k-1} }.$$
	\begin{enumerate}
		\item 	For any $\alpha<\alphalr$,  $G_{N,	p= \alpha/N^{k-1}}^k$ has no non-empty $2$-core \whp 
		\item For $\alpha>\alphalr$, $G_{N,	p= \alpha/N^{k-1}}^k$ has a $2$-core of size $\beta(\alpha) N + o(N)$ \whp, with
		 $ \beta(\alpha) = 1-\eul^{-x}-\eul^{-x} x, $
		where $x$ is the greatest solution of
		$$\alpha=\frac{(k-1)!x }{ \bc{1-\eul^{-x}}^{k-1}}.$$
	\end{enumerate}
\end{lemma}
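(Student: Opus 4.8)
The plan is to read off the answer from the local (Galton--Watson) structure of the random hypergraph and then promote that heuristic to a w.h.p.\ statement by a concentration argument. First I would record the local structure of $G^k_{N,p}$ with $p=\alpha/N^{k-1}$: the degree of a fixed variable node is $\mathrm{Binomial}\bc{\binom{N-1}{k-1},p}$, which converges to $\mathrm{Poisson}(\lambda)$ with $\lambda=\alpha/(k-1)!$, and a routine first/second-moment computation shows that the ball of any fixed radius around a uniformly chosen variable converges in the local-weak sense to the Galton--Watson tree in which each variable has $\mathrm{Poisson}(\lambda)$ incident hyperedges and each hyperedge carries $k-1$ further variables. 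Because this tree has no cycles, the leaf-removal (peeling) dynamics can be analysed exactly on it.

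Second, I would run the pruning from the leaves inward on this tree and encode it by a one-dimensional fixed-point equation. Propagating messages along directed edges: a hyperedge survives the pruning of the subtree below one of its variables iff each of its other $k-1$ variables survives there, and a variable survives the pruning below one of its hyperedges iff at least one of its other hyperedges survives. Writing $u$ for the probability that a given incident hyperedge of a variable survives and $x=\lambda u$ for the expected number of surviving hyperedges at a variable, self-consistency forces
\begin{align}\label{eq:fpcore}
x \;=\; \frac{\alpha}{(k-1)!}\,\bc{1-\eul^{-x}}^{k-1}
\qquad\Longleftrightarrow\qquad
\alpha \;=\; \frac{(k-1)!\,x}{\bc{1-\eul^{-x}}^{k-1}}.
\end{align}
A variable lies in the $2$-core iff it has at least two surviving incident hyperedges, an event of probability $\Prob(\mathrm{Poisson}(x)\ge 2)=1-\eul^{-x}-x\eul^{-x}$. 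The map in \eqref{eq:fpcore} always fixes $x=0$; elementary analysis of $h(x)=(k-1)!\,x/(1-\eul^{-x})^{k-1}$ (for $k\ge 3$ it tends to $+\infty$ at both ends of $(0,\infty)$ and has a unique interior minimum; for $k=2$ it is strictly increasing with infimum $1$ at $x\to0^+$) shows that a strictly positive solution of \eqref{eq:fpcore} exists iff $\alpha\ge\min_{x>0}h(x)=:\alphalr(k)$, and that for $\alpha>\alphalr(k)$ the largest such root $x^\ast$ is the relevant one, yielding $\beta(\alpha)=1-\eul^{-x^\ast}-x^\ast\eul^{-x^\ast}$.

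Third, I would transfer these tree computations to $G^k_{N,p}$ itself. For an upper bound valid for every $\alpha$, truncate the pruning at a fixed depth $r$: the depth-$r$ pruned set has size $(1+o(1))\,N\,p_r$ w.h.p.\ by local convergence plus concentration of the associated local functional, and $p_r\downarrow\Prob(\mathrm{Poisson}(x^\ast)\ge 2)$ as $r\to\infty$, giving $|\text{core}|\le(\beta(\alpha)+o(1))N$ \whp. For the matching lower bound when $\alpha>\alphalr(k)$, run leaf removal on the configuration model with the (essentially Poisson) degree sequence of $G^k_{N,p}$ and track the empirical variable-degree profile by Wormald's differential-equation method; the limiting ODE is the continuous counterpart of the recursion behind \eqref{eq:fpcore}, its trajectory is followed within $o(N)$ until the density of degree-$\le 1$ variables first vanishes, and at that moment the number of surviving variables is $(\beta(\alpha)+o(1))N$. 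For $\alpha<\alphalr(k)$ the same trajectory empties the graph, so the core has $o(N)$ vertices; to upgrade this to ``the core is empty \whp'' I would add a first-moment union bound ruling out any non-empty sub-hypergraph in which every variable has degree $\ge 2$.

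The main obstacle is the concentration step near criticality: the drift of the leaf-removal ODE degenerates as the density of degree-$\le1$ variables approaches zero, so the differential-equation method must be used with an explicitly chosen stopping rule (as in Pittel--Spencer--Wormald and in Molloy), and for $k\ge 3$ one must deal with the discontinuous (first-order) appearance of the positive fixed point $x^\ast$; by contrast, reconciling the depth-$r$ upper bound with the process-based lower bound and the first-moment argument for subcritical emptiness are comparatively routine. Since the statement is exactly \cite[Theorem 1]{Molloy2005}, one may alternatively just invoke that reference.
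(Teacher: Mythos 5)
The paper does not prove Lemma~\ref{lemma_core}; it is stated as a citation to \cite[Theorem~1]{Molloy2005} and used as a black box, so there is no in-paper argument to compare against. Your outline is a correct account of the standard route to such peeling/$k$-core results and, as far as I can tell, captures the main ideas of the cited proof. In particular, the message-passing fixed point you derive on the Poisson$(\lambda)$ Galton--Watson local limit, $x=\tfrac{\alpha}{(k-1)!}(1-\eul^{-x})^{k-1}$ with $x=\lambda u$ and $\lambda=\alpha/(k-1)!$, is exactly the equation in the statement, and the core density $\beta(\alpha)=\Prob(\mathrm{Poisson}(x)\ge 2)=1-\eul^{-x}-x\eul^{-x}$ follows correctly once one observes that a variable enters the $2$-core iff at least two of its incident hyperedges survive. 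Your two-sided program (depth-$r$ truncation plus concentration for the upper bound; Wormald's differential-equation method on the configuration model for the lower bound; a first-moment union bound to upgrade ``core is $o(N)$'' to ``core is empty'' below threshold) is precisely the architecture used in this literature, and you correctly single out the real technical hazards: the degenerating drift of the DE near criticality and, for $k\ge3$, the discontinuous birth of the positive root $x^\ast$ at $\alpha=\alphalr(k)$. Your closing remark that one may simply invoke the reference is exactly what the paper does, so there is no disagreement; your proposal just makes explicit the argument the paper outsources.
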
	

We note that the construction of $G_{N,p= \alpha/N^{k-1}}^k$ is a bit different from $G_{N,M}^k$ but the two random hypergraph models are mutually contiguous meaning that any events that happen \whp \ in $G_{N,p= \alpha/N^{k-1}}^k$ also happen \whp \ in $G_{N,M}^k$ and vice versa. 

The second ingredient needed is Bravyi's transfer matrix \cite{Bravyi}. As described above, we remove certain vertices and edges according to LR.  An inherent reason for removing them is that the removed constraints should be easily satisfied by the removed variables.  We show how to implement this idea here (see \cite{Bravyi} for $k=2$).  

\begin{lemma}\cite[for $k=2$]{Bravyi}\label{lemma_T}
	 For all projectors $\vert \Phi^m\rangle\langle \Phi^m\vert$ and any selected variable node $m_i$ involved in constraint $m\equiv \{m_1,\dots, m_k\}$, we can construct a transfer matrix $T$ of size $2 \times 2^{k-1}$ such for that given any product state $\chiket{m_1} \otimes \dots \otimes \chiket{m_{i-1}} \otimes \chiket{m_{i+1}}\dots \otimes \chiket{m_k}$, the constraint is satisfied
  $$
  \langle \Phi^m  \chiket{m_1} \otimes \dots \otimes \chiket{m_k} =0
  $$
  for 
  $$
  \chiket{m_i} \propto T \chiket{m_1}\otimes \dots \otimes \chiket{m_{i-1}}\otimes\chiket{m_{i+1}}\otimes\dots \chiket{m_k}
  $$
  The proportionality sign indicates that the state still has to be normalized resulting in a non-linear relation.
\end{lemma}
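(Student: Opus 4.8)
The plan is to produce the transfer matrix $T$ via an explicit closed formula and then verify its two defining properties — that it genuinely is a linear $2\times 2^{k-1}$ map and that the state it outputs solves the constraint — by a direct computation; no algebraic-geometry or analysis input is needed here. After relabelling the qubits of the constraint $m$ so that the selected variable node is the last one (so $i=k$), I would split off the tensor factor carried by qubit $m_k$ and write
\[
\ket{\Phi^m}=\ket{v_0}\tp\ket{0}_{m_k}+\ket{v_1}\tp\ket{1}_{m_k},
\]
where $\ket{v_0},\ket{v_1}\in(\CC^2)^{\tp(k-1)}$ are the two ``slices'' of the coefficient tensor, $\ket{v_a}=\sum_{i_1,\dots,i_{k-1}}\phi^m_{i_1\dots i_{k-1}a}\,\ket{i_1\cdots i_{k-1}}$. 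For any product state $\ket{\psi}=\chiket{m_1}\tp\cdots\tp\chiket{m_{k-1}}$ on the remaining qubits and any single-qubit state $\chiket{m_k}=c_0\ket{0}+c_1\ket{1}$, the constraint unfolds into
\[
\bra{\Phi^m}\bc{\ket{\psi}\tp\chiket{m_k}}=c_0\,\langle v_0|\psi\rangle+c_1\,\langle v_1|\psi\rangle=0,
\]
a single homogeneous linear equation for $(c_0,c_1)$, which always admits a nonzero solution.

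The one point worth stressing is that this solution can be chosen \emph{linearly} in $\ket{\psi}$, not merely pointwise: I would take
\[
T:=\ket{1}\bra{v_0}-\ket{0}\bra{v_1}\;:\;(\CC^2)^{\tp(k-1)}\longrightarrow\CC^2,
\]
which is a bona fide $2\times 2^{k-1}$ matrix, and set $\chiket{m_k}\propto T\ket{\psi}$. Since $\ket{\psi}$ is a product state, the assignment $\ket{\psi}\mapsto T\ket{\psi}$ is multilinear in $\chiket{m_1},\dots,\chiket{m_{k-1}}$, exactly as required. Substituting $\chiket{m_k}=T\ket{\psi}$ back, and using $\bra{0}T\ket{\psi}=-\langle v_1|\psi\rangle$ and $\bra{1}T\ket{\psi}=\langle v_0|\psi\rangle$, gives
\[
\bra{\Phi^m}\bc{\ket{\psi}\tp T\ket{\psi}}=\langle v_0|\psi\rangle\,\bra{0}T\ket{\psi}+\langle v_1|\psi\rangle\,\bra{1}T\ket{\psi}=-\langle v_0|\psi\rangle\langle v_1|\psi\rangle+\langle v_1|\psi\rangle\langle v_0|\psi\rangle=0,
\]
so the constraint is satisfied.

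What remains is to dispose of the degenerate case $T\ket{\psi}=\vec{0}$: this forces $\langle v_0|\psi\rangle=\langle v_1|\psi\rangle=0$, and then \emph{every} normalized single-qubit state solves the constraint, so one simply sets $\chiket{m_k}=\ket{0}$. The proportionality sign in the statement is precisely the nonlinear normalization $\chiket{m_k}=T\ket{\psi}/\lVert T\ket{\psi}\rVert$ (valid when $T\ket{\psi}\neq\vec{0}$). I do not anticipate any real obstacle: the entire content is the observation that a single $1\times 2$ homogeneous system can be solved by a \emph{fixed} linear rule in its coefficients, which the explicit $T$ realizes. The only things to keep track of are the bookkeeping of the $m_k$ tensor factor and, if one wishes to be scrupulous, the fact that the bra $\langle v_a|$ carries complex conjugates of the $\phi^m$ — this conjugation lives inside the fixed covector $\langle v_a|$ and so does not spoil the linearity of $T$.
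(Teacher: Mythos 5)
Your proof is correct and follows essentially the same route as the paper: relabel so the selected qubit is $m_k$, unfold the constraint into a single homogeneous linear equation in the two amplitudes of $\chiket{m_k}$, and realize the solution by a fixed $2\times 2^{k-1}$ matrix (your $T=\ket{1}\bra{v_0}-\ket{0}\bra{v_1}$ is the paper's transfer matrix up to an overall sign, which the proportionality absorbs). The only genuine addition is your explicit treatment of the degenerate case $T\ket{\psi}=\vec 0$, which the paper leaves implicit.
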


\begin{proof}
For the ease of notations, without loss of generality, let $m_k$ be the selected variable node. 
 For the construction of the transfer matrix it is convenient to select the variable $m_k$ of constraint $m$. The input state now being $\chiket{m_1}\otimes \dots \otimes \chiket{m_{k-1}}$.
	Set $\chiket{m_j} = \alpha_j \zket + \beta_j \oket$.
	In order to satisfy the constraint, we want
	\begin{equation}\label{ksateqn}
		\braphi | \bc{\alpha_1\zket + \beta_1\oket}\tp\dots\tp\bc{\alpha_k\zket + \beta_k\oket}= 0.
	\end{equation}
 We expand this relation over the computational basis states $\vert i_1,\dots, i_k\rangle$, $i_j\in \{0, 1\}$.
	Defining 
 $$
 \gamma_j:=\begin{cases}
		\alpha_j \mbox{ if } i_j=0\\
		\beta_j \mbox{ if } i_j=1
	\end{cases}
 $$
	we can express \eqref{ksateqn} as follows
	\begin{equation}\label{prelin}
		\alpha_k\brk{\sum_{i_1,\dots,i_{k-1}\in\{0,1\}}\gamma_1\cdots\gamma_{k-1}\braphi|i_1\cdots i_{k-1}0\rangle} +\beta_k\brk{\sum_{i_1,\dots,i_{k-1}\in\{0,1\}}\gamma_1\cdots\gamma_{k-1}\braphi|i_1\cdots i_{k-1} 1\rangle} = 0.
	\end{equation}
	Therefore $\alpha_k$ and  $\beta_k$ can be found from the linear operation 	
	\begin{equation}\label{lin}
		\begin{bmatrix}
			\alpha_k\\
			\beta_k
		\end{bmatrix} \propto  \begin{bmatrix}
			\braphi|0\cdots 01\rangle & \cdots & \braphi|1\cdots 11\rangle \\
			- \braphi|0\cdots 00\rangle & \cdots & - \braphi|1\cdots 10\rangle
		\end{bmatrix} \bc{\begin{bmatrix}
				\alpha_1\\ \beta_1 
			\end{bmatrix} \tp \cdots \tp 
			\begin{bmatrix}
				\alpha_{k-1} \\ \beta_{k-1}
		\end{bmatrix}},
	\end{equation}
 and the resulting state can be normalized afterwards. 
The transfer matrix $T$ has size $2\times 2^{k-1}$ and is given by
\begin{equation}
T=	\begin{bmatrix}
	\braphi|0\cdots 01\rangle & \cdots & \braphi|1\cdots 11\rangle \\
	- \braphi|0\cdots 00\rangle & \cdots & - \braphi|1\cdots 10\rangle
\end{bmatrix}
\end{equation}
where the first (resp. second) row contains all $2^{k-1}$ `binary sequences' of the form $\vert i_1,\dots, i_{k-1}, 1\rangle$ (resp. $\vert i_1,\dots, i_{k-1}, 0\rangle$).
\end{proof}

We are now ready to explain the Algorithm (given in Table \ref{algo_rec_empty}) behind the proof of \ref{lr-pre-lemma}.
Let $D$ be the set of pairs $\cbc{v,a_v}$ of variables $v$ of degree one  removed in LR together with its unique adjacent clause $a_v$. We order $D$ chronologically, i.e., we let $$ D=\cbc{ \cbc{v_1,a_{v_1}},\cbc{v_2,a_{v_2}}, \ldots, \cbc{v_L,a_{v_L}}    }$$ where $v_i$ is the $i$-th removed degree-one variables. For $\alpha<\alphalr(k)$, LR ends \whp~with an empty 2-core and  
Algorithm \ref{algo_rec_empty} is a `reconstruction procedure' which yields a product state solution $\ket{\Psi}$. Without loss of generality, we can assume that the initial graph $G$ is connected, i.e., $G$ has no isolated variable nodes (as we can always assign an arbitrary qubit state to isolated variable nodes if they are present). In a nutshell, starting from the last deleted check node, Algorithm \ref{algo_rec_empty} recursively assigns values to the set of variables connected to a clause using the transfer matrix $T$ of Lemma \ref{lemma_T}. 
We use the notation, $T^{a_{v}}$ for the matrix $T$ corresponding to the projector $\vert\Phi^{a_v}\rangle\langle\Phi^{a_v}\vert$ associated to clause $a_v$.
We note that when a variable node $w$ connected to $a_v$ is already revealed in step $3$ of the algorithm, we only reveal the edge connecting $w$ and $a_v$.

\begin{algorithm}
	\caption{Reconstruction Algorithm} \label{algo_rec_empty}
	\KwIn{The ordered set $D$ } 
	\KwOut{A product state   $\ket{\Psi}= \ket{\chi_1} \otimes  \cdots \otimes \ket{\chi_{N}} \in \CC^{\otimes N}$}
	\Begin{
		\For{ i=N {\bf{to}} 1}{
			Reveal all $k$ variables connected to $a_{v_i}$\;
			\For{each variable $w \neq v_i$}{
			\If{ $w$ is not assigned any qubit }
			{Assign an arbitrary qubit $\ket{ \chi_w}$ to $w$ \;}
			\Else
			{Set $\ket{ \chi_w}$ to be the qubit corresponding to $w$ \;}
		}
		    Set $\ket{ \chi_{v_i}}=T^{a_{v_i}} \cdot \prod^{\otimes}_{w \neq {v_i}} \ket{\chi_w}$\;
	}
		
			}	
\end{algorithm}

\begin{proof}[Proof of Lemma \ref{lr-pre-lemma}]
By Lemma \ref{lemma_core}, for $\alpha < \alphalr(k)$ the LR ends with an empty core \whp~so the set $D$ contains all of the constraints nodes.  This ensures that the outputted product state $\ket{\Psi}$ of Algorithm \ref{algo_rec_empty}  has the correct length $N$, i.e., every variable has been assigned a qubit. Moreover,  Lemma \ref{lemma_T} ensures that at each step of the algorithm, the  projector $\vert\Phi^m\rangle\langle \Phi^m\vert$ is satisfied. While at the beginning of Algorithm \ref{algo_rec_empty} all the variables connected to the last constraint node in $D$ have not been assigned any value yet, as the algorithm runs we need to make sure that the variable $v$ connected to the check $a_v$ has no qubits states attached to it yet (otherwise we would almost certainly get a contradiction when using the transfer matrix). This is indeed the case as shown in Claim \ref{claim_leaf} below. Thus $\ket{\Psi}$ is a valid PRODSAT solution of the $k$-QSAT instance. Finally, as $ M = \alpha N$ and there are at most $k$  variables that are not assigned values for steps 3 to 8 of Algorithm \ref{algo_rec_empty}, the complexity of Algorithm \ref{algo_rec_empty} is $O(N)$.
\end{proof}

\begin{claim}\label{claim_leaf}
	For $\cbc{v,a_v} \in D$, the variable $v$ is assigned a qubit only when the check $a_v$ is revealed.
\end{claim}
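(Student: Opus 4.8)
The plan is to reduce the claim to a purely combinatorial statement about the leaf-removal order and the neighbourhoods in the \emph{original} graph $G$. Concretely, I would first establish the structural fact
\[
\partial_G v_i \;\subseteq\; \{a_{v_i}\}\cup\{a_{v_\ell}\,:\,\ell<i\},
\]
i.e.\ apart from $a_{v_i}$ itself, every clause incident to $v_i$ in $G$ is one that was removed \emph{strictly earlier} during LR. Granting this, the claim follows at once. Algorithm~\ref{algo_rec_empty} treats the pairs $\{v_i,a_{v_i}\}$ of $D$ in reverse order (largest $i$ first), and at the iteration handling $a_{v_j}$ the only variables it reveals --- hence the only variables to which it can assign a qubit --- are the $k$ neighbours $\partial_G a_{v_j}$ of $a_{v_j}$ in $G$ (recall that clause nodes keep all $k$ of their edges throughout LR). By the displayed inclusion, for every $j>i$ we have $a_{v_j}\notin\partial_G v_i$, equivalently $v_i\notin\partial_G a_{v_j}$, so $v_i$ is not even revealed at those iterations; in particular it remains unassigned when $a_{v_i}$ is processed, at which step it receives $\ket{\chi_{v_i}}=T^{a_{v_i}}\bigotimes_{w\neq v_i}\ket{\chi_w}$. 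Thus $v_i$ is assigned a qubit exactly when $a_{v_i}$ is revealed.

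\textbf{Key steps.}
I would carry this out in three steps. \textbf{(i)} Record the basic structure of LR: a clause node is deleted precisely at the moment it becomes the unique neighbour of the leaf being removed, and deletions of isolated (degree-zero) variables remove no clause and no edge incident to a surviving variable; consequently the clauses deleted during LR are exactly $a_{v_1},\dots,a_{v_L}$, they are pairwise distinct, and $a_{v_\ell}$ is deleted at step $\ell$. \textbf{(ii)} Fix $\{v_i,a_{v_i}\}\in D$. At the start of LR-step $i$ the variable $v_i$ has degree one with sole neighbour $a_{v_i}$, so every other edge incident to $v_i$ in $G$ was already deleted at some step $\ell<i$; since an edge leaves the graph only when the clause carrying it is deleted, and that clause is some $a_{v_\ell}$ with $\ell<i$, each such edge is $(v_i,a_{v_\ell})$ with $\ell<i$. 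This is exactly the inclusion above. \textbf{(iii)} Combine the inclusion with the description of iteration $j$ of Algorithm~\ref{algo_rec_empty} (which reveals precisely $\partial_G a_{v_j}$) to conclude, as in the first paragraph, that $v_i$ is untouched for all $j>i$ and is assigned exactly at iteration $i$.

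\textbf{Main obstacle.}
The only real content is step \textbf{(ii)}: the bookkeeping that no edge of $v_i$ other than $(v_i,a_{v_i})$ survives past LR-step $i$, together with the observation that every clause deleted by LR is of the form $a_{v_\ell}$ (a clause never simply ``disappears'' on its own in LR, since it has degree $k$ as long as it is present). A secondary point needing care is the interleaved deletion of isolated variables: one must note that it deletes neither a clause nor an edge incident to a variable that remains, so it cannot cause $v_i$ to be revealed or assigned prematurely. Once these are in place, the remainder is just bookkeeping of the loop order in Algorithm~\ref{algo_rec_empty}.
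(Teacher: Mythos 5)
Your proposal is correct and uses essentially the same idea as the paper's proof: both rest on the observation that any clause other than $a_{v_i}$ adjacent to $v_i$ in $G$ must have been removed at an earlier LR step (since $v_i$ has degree one at its own removal step), and is therefore revealed at a later iteration of the reconstruction. You phrase this as a direct structural inclusion $\partial_G v_i \subseteq \{a_{v_i}\}\cup\{a_{v_\ell}:\ell<i\}$, while the paper runs the identical argument by contradiction via the time-reversal correspondence between LR steps and reconstruction iterations; the mathematical content is the same.
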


\begin{proof}
	Suppose  $\lr$ ends at a time $T \geq 0$ and suppose in the reconstruction Algorithm \ref{algo_rec_empty}, we reveal $a_v$ at some time $0 \leq t \leq T$. Hence, during the LR, $v$ has degree one at time $T-t$. Now, assume that $v$ would already be assigned a qubit state at time $t$. So, $v$ must be connected to another clause $b$ which was already revealed before $a_v$ (during reconstruction), say at a time $s<t$. Thus, $b$ must have been removed in LR at time $T-s>T-t$. Therefore, at time $T-t$, $v$ is connected to both $a_v$ and $b$ and has degree at least $2$, a contradiction.
\end{proof}

\begin{remark}
	For $\alpha \geq \alphalr(k)$, LR ends with a non-empty core. As explained in the main text, if there is a zero energy product state $\vert\Psi^\prime\rangle$ on the core, we can apply a reconstruction procedure similar to Algorithm \ref{algo_rec_empty} to recover the full product state $\vert\Psi\rangle$. Steps 5 and 6 of Algorithm \ref{algo_rec_empty} must be adapted so that when $w$ belongs to the core then $\vert \chi_w\rangle$ is assigned the corresponding factor in $\vert\Psi^\prime\rangle$ and the for loop in step 2 will run for a number $L<N$ steps as the set $D$ does not contains all the constraints. This process can be used for $\alphalr(k) \leq \alpha<\alphadc(k)$. 
\end{remark}

    \section{Multivariate complex analysis results}\label{appcomplex}

To prove Proposition \ref{lemma_existsol} we rely on results from complex analysis of multivariate functions. Before stating these results we need to define {\it isolated} and {\it simple} zeros of mappings $f: \CC^n \to \CC^n$.

\begin{definition}
    A vector $\vec{a}$ is an {\it isolated} zero if it is the only solution of $f(\vz) = \vec{0}$ in some small enough neighborhood of $\vec{a}$.
\end{definition}
 For example the map $f_1(z_1, z_2) = z_1^2(z_2 - 1)$, $f_2(z_1, z_2) = z_2^2(z_1 - 3)$ has two isolated zeros $(z_1, z_2) = (0, 0)$, $(z_1, z_2) = (3, 1)$. On the other hand the mapping $f_1(z_1, z_2) = z_1^2 + z_2^2$, $f_2(z_1, z_2) = z_1 + i z_2$ has a family of zeros $(z_1, z_2) =(\mu, i \mu)$ for any $\mu\in \CC$, and none of these are isolated.

\begin{definition}
    A zero $\vec{a}$ is {\it simple} if it is isolated and if Jacobian satisfies $\det J_f(\vec{a}) \neq 0$.
\end{definition}
 For the first example above one checks that $\det J_f(z_1, z_2) = 4z_1z_2(z_1-3)(z_2-1) - z_1^2z_2^2$. Thus $(0,0)$ is not simple and $(3, 1)$ is simple. For the second example the zeros are not isolated and thus they are not simple. At the same time, the determinant of Jacobian vanishes. This is not a coincidence as one can show that a necessary condition to have $\det J_f(\vec{a}) \neq 0$ is that $\vec{a}$ is isolated. This follows from the local inverse theorem \ref{localinverse} below. This means that in the definition of a simple zero we can in fact drop the condition of being isolated. 

\begin{proposition}\cite[Proposition 2.1]{juzakov_book}
	\label{prop:juzakov}
	Suppose that the mapping $\vz\mapsto f(\vz)$
 is holomorphic in a domain $D \subset \CC^n$. 
	Suppose the closure of a neighborhood $U_{\vec{a}}\subset D$ of a zero $\vec{a}$ of the mapping does not contain other zeros (so $\vec{a}$ is isolated). Then there exists $\varphi>0$ such that for almost all $\zeta \in B(\vec{0},\varphi)$ (w.r.t Lebesgue measure), the mapping $\vz\mapsto f(\vz) -\zeta$
	has only simple zeros in $U_{\vec{a}}$. The number of simple zeros depends neither on $\zeta$ nor on the choice of the neighborhood $U_{\vec{a}}$.  
\end{proposition}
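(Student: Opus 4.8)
\medskip
\noindent\textbf{Proof proposal for Proposition \ref{prop:juzakov}.}
The plan is to combine homotopy invariance of the Brouwer topological degree with Sard's theorem, applied to $f$ viewed as a smooth map between open subsets of $\mathbb{R}^{2n}$. First I would localise: since $\overline{U_{\vec{a}}}$ contains no zero of $f$ besides $\vec{a}$, I fix a closed ball $\overline{B(\vec{a},r)}\subset U_{\vec{a}}$, put $\varphi:=\inf\{\,|f(\vz)|:\vz\in\overline{U_{\vec{a}}}\setminus B(\vec{a},r)\,\}$, which is strictly positive by compactness and the absence of other zeros, and set $\mu:=\deg\bigl(f,B(\vec{a},r),\vec{0}\,\bigr)$. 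Because $f$ is holomorphic and $\vec{a}$ is an isolated zero, $\mu$ is the local intersection multiplicity of $\vec{a}$, hence $\mu\ge 1$, and it is independent of $r$ by excision.

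Next I would carry out the perturbation step. For every $\zeta$ with $|\zeta|<\varphi$ the map $f-\zeta$ is nonzero on $\overline{U_{\vec{a}}}\setminus B(\vec{a},r)$, so all of its zeros in $U_{\vec{a}}$ lie in $B(\vec{a},r)$; moreover the straight-line homotopy $t\mapsto f-t\zeta$, $t\in[0,1]$, is admissible on $B(\vec{a},r)$, so $\deg\bigl(f-\zeta,B(\vec{a},r),\vec{0}\,\bigr)=\mu$. Hence $f-\zeta$ has exactly $\mu$ zeros in $U_{\vec{a}}$, counted with multiplicity, for every such $\zeta$.

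Then comes the genericity step. Regarded once more as a $C^{\infty}$ map between open subsets of $\mathbb{R}^{2n}$, $f$ has critical set exactly $Z:=\{\vz:\det J_f(\vz)=0\}$, since the real rank of its differential is twice the complex rank; so by Sard's theorem $f(Z)$ is Lebesgue-null. I then fix $\zeta\in B(\vec{0},\varphi)$ with $\zeta\notin f(Z)$, which excludes only a null set. Every zero $\vz_0$ of $f-\zeta$ in $U_{\vec{a}}$ now satisfies $\det J_{f-\zeta}(\vz_0)=\det J_f(\vz_0)\ne 0$, hence is simple by the local inverse theorem \ref{localinverse}; and since a holomorphic map is orientation preserving at a nondegenerate point — its real Jacobian determinant there equals $|\det J_f(\vz_0)|^{2}>0$ — each such zero contributes $+1$ to the degree. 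Therefore the number of simple zeros equals $\mu$, and being the intrinsic multiplicity of $\vec{a}$ it depends neither on $\zeta$ nor on the choice of $U_{\vec{a}}$.

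The step I expect to be the main obstacle is the bookkeeping linking the holomorphic structure to the exact count: one must justify $\mu\ge 1$ for a genuine isolated zero, the identity between the real and complex Jacobian determinants (so that all local indices are $+1$ and the degree literally counts simple zeros), and the precise hypotheses of Sard's theorem — note in particular that a priori $Z$ could be all of $B(\vec{a},r)$, but then $\mu\ge1$ together with the degree computation produces a contradiction, so $\det J_f\not\equiv0$ is automatic. The homotopy-invariance and excision properties of the degree are standard, and once these pieces are assembled they deliver both the existence of simple zeros and the claimed rigidity of their number.
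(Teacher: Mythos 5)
The paper does not give a proof of this proposition at all: it is imported verbatim, with a citation, from Aizenberg--Yuzhakov's book on integral representations and residues in multidimensional complex analysis. There is therefore no in-paper argument for you to match; the right yardstick is whether your argument is correct and self-contained, and on that score it is. Your route through the Brouwer degree, homotopy invariance, and Sard's theorem is a standard and valid alternative to the residue-theoretic proof in the cited book: you set $\mu=\deg(f,B(\vec a,r),\vec 0)$, show by homotopy that $\deg(f-\zeta,B(\vec a,r),\vec 0)=\mu$ for $|\zeta|<\varphi$, and then use Sard to discard the measure-zero set $f(Z)$ of critical values so that every zero of $f-\zeta$ has $\det J_{f}\neq 0$ and hence local index $+1$ (since the real Jacobian of a holomorphic map is $|\det J_f|^2\ge 0$). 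The independence of the count from $\zeta$ (homotopy) and from $U_{\vec a}$ (excision) follows as you say.

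Two small remarks. First, the step $\mu\geq 1$ is not needed for the proposition as stated (if $\mu=0$ the conclusion holds vacuously with zero simple zeros), and as written it is slightly circular: your justification that a holomorphic isolated zero has positive local degree itself ultimately rests on either the finiteness of the local ring $\mathcal O_{\vec a}/(f_1,\dots,f_n)$ or on the open-mapping property of finite holomorphic maps, which is essentially the nondegeneracy $\det J_f\not\equiv 0$ you later want to deduce from $\mu\ge 1$. Dropping the $\mu\ge 1$ discussion makes the proof cleaner and removes the circularity. Second, you implicitly assume $\overline{U_{\vec a}}$ is compact when taking the infimum defining $\varphi$; this is harmless because one can always pass to a smaller bounded neighborhood, but the sentence should say so explicitly since the statement's $U_{\vec a}$ is an arbitrary neighborhood.
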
 

\begin{definition}
The {\it multiplicity} of the zero $\vec{a}$ in the multivariate mapping $\vz\mapsto f(\vz)$ is defined as the number of zeros in $U_{\vec a}$ of the perturbed mapping $\vz\mapsto f(\vz) -\zeta$.  
\end{definition}

\begin{proposition}\cite[Proposition 2.2]{juzakov_book} 
\label{Juz2}
	The multiplicity of a simple zero is equal to 1.
\end{proposition}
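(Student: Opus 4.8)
The plan is to read the statement off the holomorphic inverse function theorem, using Proposition \ref{prop:juzakov} only to justify that the count defining the multiplicity may be evaluated on a neighborhood of our choosing. First, since $\vec{a}$ is a \emph{simple} zero we have $\det J_f(\vec{a})\neq 0$ by definition, so Theorem \ref{localinverse} supplies an open neighborhood $V\subset D$ of $\vec{a}$ such that $f|_V$ is biholomorphic onto the open set $f(V)$, which is a neighborhood of $f(\vec{a})=\vec{0}$. On $V$ the map $f$ is in particular injective, and by the chain rule $\det J_f$ is nonvanishing throughout $V$.

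Next I would fix $\varphi>0$ small enough that $B(\vec{0},\varphi)\subset f(V)$. For every $\zeta\in B(\vec{0},\varphi)$ the equation $f(\vz)=\zeta$ has, by injectivity of $f$ on $V$, exactly one solution $\vz_\zeta\in V$; and since $\det J_{f-\zeta}(\vz_\zeta)=\det J_f(\vz_\zeta)\neq 0$, this $\vz_\zeta$ is a simple, hence isolated, zero of $\vz\mapsto f(\vz)-\zeta$. So the perturbed mapping has precisely one zero in $V$, for every such $\zeta$.

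Finally, Proposition \ref{prop:juzakov} guarantees that the number of zeros of $\vz\mapsto f(\vz)-\zeta$ in a sufficiently small neighborhood of $\vec{a}$ is independent of the admissible neighborhood chosen and of $\zeta$ (for almost all small $\zeta$). Hence the multiplicity of $\vec{a}$ may be computed with $U_{\vec a}=V$, and the previous paragraph gives the value $1$.

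I do not anticipate a real obstacle here; the only subtlety is that the definition of multiplicity quantifies over an arbitrary admissible neighborhood $U_{\vec a}$, whereas the count above is performed on the particular biholomorphic neighborhood $V$. This is exactly the independence-of-neighborhood clause of Proposition \ref{prop:juzakov}, so the substitution is legitimate, and there is nothing further to check.
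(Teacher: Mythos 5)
The paper does not prove Proposition \ref{Juz2} itself --- it is cited verbatim from \cite[Proposition 2.2]{juzakov_book} --- so there is no internal proof to compare against. Your argument via Theorem \ref{localinverse} is the natural one and it is correct: the biholomorphism gives exactly one preimage of each $\zeta$ near $\vec{0}$, that preimage is a simple zero because $\det J_f$ is nonvanishing throughout $V$ (chain rule applied to $g\circ f=\mathrm{id}$), and the independence-of-neighborhood and independence-of-$\zeta$ clause in Proposition \ref{prop:juzakov} transfers this count to the multiplicity as defined. The only point worth tidying is that Proposition \ref{prop:juzakov} asks that the closure $\overline{U_{\vec a}}$ contain no zero of $f$ other than $\vec a$, whereas the $V$ supplied by Theorem \ref{localinverse} is a priori only guaranteed injective on its interior, so its boundary could in principle meet another zero. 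Since a simple zero is in particular isolated, this is easily repaired: replace $V$ by a smaller ball whose closure lies inside both $V$ and the zero-free punctured neighborhood of $\vec a$, and then shrink $\varphi$ so that $(f|_V)^{-1}(B(\vec 0,\varphi))$ lands in that ball. This is cosmetic and does not affect the conclusion.
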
 

Applying Propositions \ref{prop:juzakov}, \ref{Juz2} to the mapping $f_1(z_1, z_2) = z_1^2(z_2 - 1)$, $f_2(z_1, z_2) = z_2^2(z_1 - 3)$ we see that 
the simple zero $(3, 1)$ has multiplicity one since the perturbed mapping develops ``one branch" of simple zeros $(z_1, z_2)\approx (1+ \frac{\zeta_1}{9}, 3+ \zeta_2)$, whereas $(0,0)$ has higher multiplicity as many branches of simple zeros $(z_1, z_2)\approx (\pm i \zeta_1^{1/2}, \pm \frac{i}{\sqrt 3} \zeta_2^{1/2})$ appear.

In our application we need to show that $\vec{0}$ is an isolated zero of $F_{\rm sqr}$. For this we rely on the  following local inverse theorem:
\begin{theorem}\cite[Theorem 5.5]{laurent2010holomorphic}\label{localinverse}
	Consider an holomorphic map $f: U\subset \CC^n \mapsto f(U)\subset\CC^n$. Suppose $\vec{a}\in U$. Then $f$ is biholomorphic in some small enough neighborhood of $\vec{a}$ if and only if $\det J_f(\vec{a})\neq 0$. (Biholomorphic means that the  map $f: U\mapsto f(U)$ is a bijection and its inverse $f^{-1}: f(U)\mapsto U$ is also holomorphic.)
\end{theorem}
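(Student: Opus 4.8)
The plan is to reduce the statement to the real inverse function theorem applied to $f$ regarded as a map between open subsets of $\mathbb{R}^{2n}$, and then to upgrade the resulting smooth local inverse to a holomorphic one. Write $z_j = x_j + i y_j$ to identify $\CC^n$ with $\mathbb{R}^{2n}$. Since $f$ is holomorphic, its real differential $Df_{\vec a}$ at $\vec a$ is a $\CC$-linear map on $\CC^n$, namely multiplication by the complex Jacobian matrix $J_f(\vec a) = \bc{\partial f_i / \partial z_j}$. The elementary identity for a $\CC$-linear endomorphism of $\CC^n$ regarded as an $\mathbb{R}$-linear endomorphism of $\mathbb{R}^{2n}$ then gives
\begin{equation}\label{eq:holo-det}
\det\!\bc{Df_{\vec a}} = \abs{\det J_f(\vec a)}^{2},
\end{equation}
so that $\det J_f(\vec a)\neq 0$ holds if and only if the real differential $Df_{\vec a}$ is invertible.

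\emph{The easy implication.} Suppose $f$ is biholomorphic on some neighborhood of $\vec a$, with holomorphic inverse $g$ defined near $\vec b := f(\vec a)$. Applying the chain rule for complex Jacobians to $g\circ f = \mathrm{id}$ yields $J_g(\vec b)\, J_f(\vec a) = I$, hence $\det J_f(\vec a)\,\det J_g(\vec b) = 1$ and in particular $\det J_f(\vec a)\neq 0$.

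\emph{The substantive implication.} Assume now $\det J_f(\vec a)\neq 0$. By \eqref{eq:holo-det} the real differential $Df_{\vec a}$ is invertible, so the $C^\infty$ inverse function theorem over $\mathbb{R}$ supplies open sets $V\ni\vec a$ and $W\ni\vec b$ with $f\colon V\to W$ a $C^\infty$ diffeomorphism; write $g\colon W\to V$ for its inverse. It remains only to show $g$ is holomorphic. Shrinking $V$ using continuity of $\det J_f$, we may assume $\det J_f\neq 0$ on all of $V$, so that for each $\vec z\in V$ the differential $Df_{\vec z}$ is $\CC$-linear and invertible. Differentiating $f\circ g = \mathrm{id}$ gives $Dg_{\vec w} = \bc{Df_{g(\vec w)}}^{-1}$ for every $\vec w\in W$, which is the inverse of a $\CC$-linear isomorphism, hence again $\CC$-linear. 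A $C^1$ map between open subsets of $\CC^n$ whose real differential is $\CC$-linear at every point satisfies the Cauchy--Riemann equations, and is therefore holomorphic (this is precisely complex differentiability at each point; one may also invoke Osgood's lemma). Thus $f\colon V\to W$ is a holomorphic bijection with holomorphic inverse, i.e.\ biholomorphic, which proves the claim.

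\emph{Main obstacle.} The only point that is not purely formal is the holomorphy of the local inverse in the substantive implication; it rests on identity \eqref{eq:holo-det} and on the Cauchy--Riemann characterization of holomorphy, which are the genuine inputs. An alternative is to bypass the real inverse function theorem and instead apply the holomorphic implicit function theorem to $F(\vec z,\vec w) := f(\vec z) - \vec w$ at $(\vec a,\vec b)$, where $\partial F/\partial\vec z = J_f(\vec a)$ is invertible: this directly produces a holomorphic $g$ with $f(g(\vec w)) = \vec w$ near $\vec b$, and a short argument from invertibility of $Df_{\vec a}$ gives local injectivity of $f$, so that $g$ is a genuine two-sided inverse. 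I would nonetheless proceed via the real inverse function theorem, since the holomorphic implicit function theorem is itself customarily deduced from the real case.
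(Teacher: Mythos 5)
The paper does not prove this statement; it is quoted verbatim as Theorem 5.5 of the cited reference (laurent2010holomorphic), so there is no paper proof to compare against. Your argument is the standard one and is correct: the identity $\det(Df_{\vec a}) = \abs{\det J_f(\vec a)}^2$ for the real $2n\times 2n$ differential of a holomorphic map is right, the easy direction via the chain rule is fine, and in the substantive direction the real $C^\infty$ inverse function theorem followed by the observation that $Dg_{\vec w} = (Df_{g(\vec w)})^{-1}$ is $\CC$-linear (hence $g$ satisfies Cauchy--Riemann and is holomorphic) is exactly the usual route. One small point worth being explicit about: after shrinking $V$ so that $\det J_f \neq 0$ on $V$, you should also shrink $W$ to $f(V)$ so that $g(\vec w)\in V$ for every $\vec w\in W$ before invoking $\CC$-linearity of $Df_{g(\vec w)}$; this is implicit in your setup but deserves a word. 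Your alternative via the holomorphic implicit function theorem applied to $F(\vec z,\vec w) = f(\vec z)-\vec w$ is also standard and equally valid.
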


    \section{Definition and Properties of the Buchberger algorithm  \label{App:buchberger}}

To describe the Buchberger algorithm, we need to define operations on multivariate polynomials. We start with a multivariate polynomial division.  

\subsection{Multivariate Division Algorithm}

Let $f$ be a multivariate polynomials \\ in $K[X_1, \dots, X_n]$. 
We can write $f = \sum_\alpha a_\alpha x^\alpha$  with $\alpha = (\alpha_1,\dots,\alpha_n)$ in $  \ZZ_{\geq0}^n$ where $\alpha_i$ denotes the exponent of the $i$th variable such that $x^\alpha$ is the monomial $x^\alpha = X_1^{\alpha_1}\dots X_n^{\alpha_n}$ and the coefficients  $a_\alpha $ are in $K$. 

For univariate polynomials in $X$, the common ordering of the monomials is the degree ordering, 
\begin{equation}\label{eq:univariate_ordering}
	\dots >X^n>X^{n-1}> \dots > X > 1.
\end{equation}
This notion can be extended to multivariate monomials.
\begin{definition}[\cite{ideals}]
	A monomial ordering $>$ on $K[X_1, \dots, X_n]$ is a relation $>$ on the set of monomial $x^\alpha, \alpha \in \ZZ_{\geq0}^n$ satisfying 
	\begin{enumerate}
		\item $>$ is a total ordering on $\ZZ_{\geq0}^n$
		\item If $\alpha > \beta$ then for all $\gamma\in\ZZ_{\geq0}^n$, $\alpha + \gamma > \beta + \gamma$.
		\item $>$ is a well-ordering on $\ZZ_{\geq0}^n$. This means that every nonempty subset of $\ZZ_{\geq0}^n$ has a smallest element under $>$. 
	\end{enumerate}
\end{definition}

We also need some additional definitions to describe the multivariate division.
\begin{definition}
	Let $f,g$ be two multivariate polynomials in $K[X_1, \dots, X_n]$ and $<$ be a monomial ordering on $K[X_1, \dots, X_n]$.
	\begin{itemize}
		\item The  multidegree of $f$ is $\multideg(f) = \max(\alpha \in \mathbb{Z}^n_{\geq0} | a_\alpha \neq 0)$ (the maximum is taken with respect to $<$). 
		\item The leading coefficient of $f$ is $LC(f)=a_{\multideg(f)} \in K $.
		\item The leading monomial of $f$ is $LM( f ) = x^{\multideg( f )}$.
		\item The leading term of $f$ is $LT(f) = LC(f) \cdot LM(f)$.
		\item The least common multiple of the leading terms of $f$ and $g$ is denoted $LCM(f,g)$ and $$LCM(f,g) = a_{\multideg(f)} b_{\multideg(g)} x^{\gamma}$$ with $\gamma_i = \max(\multideg(f)_i,\multideg(g)_i)$.
		\item The $S$-polynomial of $f$ and $g$ is $$ S(f,g) = \frac{LCM(f,g)}{LT(f)}\cdot f - \frac{LCM(f,g)}{LT(g)}\cdot g.$$
	\end{itemize}
	
\end{definition}

\begin{theorem}[Multivariate Division Algorithm \ref{alg:div} in{ $K[X_1, \dots, X_n]$} \cite{ideals}]
	\label{th:mv_div}
	Let $>$ be a monomial ordering on $\ZZ^n_{\geq 0}$, and let $F = (f_1,\dots,f_s)$ be an ordered s-tuple of polynomials in $K[X_1, \dots, X_n]$. Then every $p \in K[X_1, \dots, X_n]$ can be written as
	\begin{equation}
		p = q_1 f_1 + \cdots + q_s f_s + r
	\end{equation}
	where $q_i, r \in K[X_1, \dots, X_n]$, and either $r = 0$ or $r$ is a linear combination, with coefficients in $K$, of monomials, none of which is divisible by any of $LT( f_1),\dots, LT( f_s)$. We call $r$ a remainder of $p$ on division by $F$. Furthermore, if $q_i f_i \neq 0$, then
	$$\multideg( f ) \geq \multideg(q_i f_i ).$$
	Algorithm  \ref{alg:div} makes it possible to compute such decomposition.
\end{theorem}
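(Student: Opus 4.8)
The plan is to prove the theorem by exhibiting the output of Algorithm \ref{alg:div} and verifying a loop invariant. I would introduce an auxiliary ``working dividend'' $h \in K[X_1,\dots,X_n]$, initialised to $h=p$, together with $q_1=\dots=q_s=0$ and $r=0$, and maintain throughout the execution the identity
\begin{equation*}
p = q_1 f_1 + \dots + q_s f_s + h + r,
\end{equation*}
along with the two side conditions that every monomial appearing in $r$ is divisible by none of $LT(f_1),\dots,LT(f_s)$, and that $\multideg(q_i f_i) \leq \multideg(p)$ whenever $q_i f_i \neq 0$. Initially all of these hold trivially, since $h = p$ and $q_i = r = 0$.

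Each iteration inspects $LT(h)$ and splits into two cases. First I would treat the \emph{division case}: if some $LT(f_i)$ divides $LT(h)$, pick the first such $i$, set $m := LT(h)/LT(f_i)$ (a monomial times an element of $K$), replace $q_i$ by $q_i + m$ and $h$ by $h - m f_i$. Here one uses property (2) of a monomial ordering to conclude $LT(m f_i) = m\, LT(f_i) = LT(h)$, so the leading terms cancel and $\multideg(h)$ strictly decreases with respect to $>$ (or $h$ becomes $0$); moreover $\multideg(m f_i) = \multideg(h) \leq \multideg(p)$, so the side condition on $q_i f_i$ is preserved. Otherwise we are in the \emph{remainder case}: no $LT(f_i)$ divides $LT(h)$, so we add the term $LT(h)$ to $r$ and subtract it from $h$; the monomial moved into $r$ is then by construction divisible by no $LT(f_i)$, and again $\multideg(h)$ strictly decreases. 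In both cases the displayed identity is visibly preserved, so the invariant is an invariant.

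Next I would argue termination. The successive values taken by $\multideg(h)$ (prior to the step at which $h$ becomes $0$) form a strictly decreasing sequence in the order $>$; since $>$ is a well-ordering on $\ZZ_{\geq 0}^n$ by property (3), no infinite strictly decreasing sequence exists, so after finitely many iterations $h = 0$ and the algorithm halts. At that point the loop invariant reads $p = q_1 f_1 + \dots + q_s f_s + r$ with $r$ either $0$ or a $K$-linear combination of monomials none of which is divisible by any $LT(f_i)$, and with $\multideg(q_i f_i) \leq \multideg(p)$ whenever $q_i f_i \neq 0$. This is exactly the assertion of the theorem, and the procedure just described is precisely the execution of Algorithm \ref{alg:div}.

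The only genuinely delicate point — the one I would flag as the crux — is the leading-term cancellation in the division step, namely the identity $LT(m f_i) = LT(h)$. This relies precisely on the compatibility axiom $\alpha > \beta \Rightarrow \alpha + \gamma > \beta + \gamma$, which ensures that multiplying a polynomial by a monomial shifts its multidegree additively, so that $LT$ is multiplicative on a monomial times a polynomial; without it, subtracting $m f_i$ might not lower $\multideg(h)$ and termination would fail. Everything else is routine bookkeeping, with the well-ordering property supplying termination and the construction of $r$ guaranteeing the non-divisibility condition automatically.
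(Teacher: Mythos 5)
The paper does not supply its own proof of this theorem: it is stated as a cited result from \cite{ideals} (Cox, Little, O'Shea), and Algorithm \ref{alg:div} is reproduced without a correctness argument. Your proof is the standard textbook argument and is correct: you correctly identify the loop invariant $p = q_1 f_1 + \cdots + q_s f_s + h + r$ (with $h$ the working dividend, which the pseudocode overwrites in place and even flags in a comment), you handle both branches, you invoke the compatibility axiom (2) precisely where it is needed to get $LT(m f_i) = m\,LT(f_i)$ and hence exact cancellation of leading terms, and you invoke well-ordering (3) for termination. Your justification that the degree bound $\multideg(q_i f_i) \leq \multideg(p)$ is preserved is also sound, since after the update $q_i f_i$ is a sum of two polynomials each of multidegree at most $\multideg(p)$. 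One cosmetic note on the statement itself (not your proof): the displayed inequality $\multideg(f) \geq \multideg(q_i f_i)$ has a typo inherited from the source, where $f$ should read $p$, the polynomial being divided.
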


The $r$ and $q_i$ polynomials depend on the ordering of the $(f_i)$ and on the monomial ordering. They are not uniquely characterized with the condition that $r$ is not divisible by $LT( f_1),\dots, LT( f_s)$ 

\begin{algorithm}
	\caption{Multivariate Division Algorithm} \label{alg:div}
	\KwIn{$f_1,\dots,f_s, p \in K[X_1, \dots, X_n]$ } 
	\KwOut{$q_1,\dots,q_s,r \in K[X_1, \dots, X_n]$}
	\Begin{
		$q_1 \leftarrow 0 , \dots, q_s \leftarrow 0, r \leftarrow 0 $\;
		\While {$p \neq 0$}{
			division$\_$occurred is false;\Comment{$p + f_1q_1+ \dots +f_sq_s + r $ is a loop invariant}\\
			\While{$i<s$ and not division$\_$occurred}{
				\If {$LT(f_i)$ divides $LT(p)$}{ 
					$q_i \leftarrow q_i + LT(p)/LT(f_i)$\;
					$p \leftarrow p - LT(p)/LT(f_i)\cdot f_i; $\Comment{Remove the leading monomial}
					division$\_$occurred is true\;
				}	
			}	
			\If {not division$\_$occurred}{
				$r \leftarrow r + LT(p)$\;
				$p \leftarrow p -  LT(p)$\;
			}
		}
	}	
\end{algorithm}

\begin{example}[Multivariate Division]\label{ex:mv_div}
	Let's divide $p = xy^2 + 1$ by $(f_1 = xy + 1, f_2 = y+1)$ using lexicographic ordering with $x > y$. 
	\begin{enumerate}
		\item  $LT(p)=xy^2$ which is divisible by $LT(f_1)=xy$. Then $q_1$ is updated to $q_1 = y $ and $p$ is updated to $p = xy^2+1 - xy^2 -y = -y +1$.
		\item  $LT(p) = -y$  which is only divisible by $LT(f_2) = y$. Then $q_2$ is updated to $q_2 = -1$ and $p$ is updated to $p = -y +1 + y + 1 = 2$.
		\item $LT(p) = 2 $ which is neither divisible by $LT(f_1)$ nor $LT(f_2)$ so $r$ is updated to $r=2 $ and $f= 0$. The algorithm ends with $q_1 = y,q_2  =1, r= 2.$
	\end{enumerate}
	If the division is performed by changing the order of the polynomial $(f_1 = y + 1, f_2 = xy+1)$, it gives another reminder: 
	\begin{enumerate}
		\item  $LT(p)=xy^2$ which is divisible by $LT(f_1)=y$. Then $q_1$ is updated to $q_1 = xy $ and $p$ is updated to $p = xy^2+1 - xy^2 -xy = -xy +1$.
		\item  $LT(p) = -xy$  which is divisible by $LT(f_1)$. Then $q_1$ is updated to $q_1 = xy-x$ and $p$ is updated to $p = -xy +1 + xy + x = x+1$.
		\item $LT(p) = x $ which is neither divisible by $LT(f_1)$ nor $LT(f_2)$ so $r$ is updated to $r=x+1 $ and $f= 0$. The algorithm ends with $q_1 = xy-x,q_2  =0,r= x+1.$
	\end{enumerate}
	
\end{example}

\subsection{Gröbner basis and Buchberger Algorithm}

An \textit{ideal} $I$ of a ring $R$ is an additive subgroup of the ring such that for all $x \in I, r \in R$, we have $rx \in I$.
The smallest ideal generated by a set $S$ of elements in $R$ is denoted as $\langle S \rangle = \{rx |  r \in R \text{ and } x \in S\}$. $S$ is  called a basis of $\langle S \rangle$.

\begin{definition}[Gröbner basis]
	Given an ideal $I$ of $K[X_1,\dots, X_n]$, $<$ a monomial ordering for $K[X_1,\dots, X_n]$ and a finite subset $G \subset I$, we say $G$ is a Gröbner basis for the ordering $<$ if
	\begin{equation*}
		\langle LT(G)\rangle = \langle LT(I)\rangle .
	\end{equation*}
\end{definition}
Here $\langle LT(G)\rangle$ means the ideal generated by the leading terms of the polynomials in $G$.

\begin{corollary}
	For a fixed monomial order, every ideal $I$ has a Gröbner basis. Furthermore, any Gröbner basis is a generating set of $I$.
\end{corollary}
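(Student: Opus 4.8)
The plan is to establish existence via Dickson's Lemma (equivalently, the Hilbert Basis Theorem applied to the monomial ideal of leading terms), and then to deduce the generating property from the multivariate division algorithm, Theorem \ref{th:mv_div}.

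\textbf{Existence.} First I would consider the ideal of leading terms $\langle LT(I)\rangle \subseteq K[X_1,\dots,X_n]$. This is a \emph{monomial} ideal: it is generated by the (a priori infinite) set of monomials $\{LM(f) : f\in I\setminus\{0\}\}$. By Dickson's Lemma, every monomial ideal is generated by finitely many monomials; hence there exist $g_1,\dots,g_s\in I$ with $\langle LT(I)\rangle = \langle LM(g_1),\dots,LM(g_s)\rangle = \langle LT(g_1),\dots,LT(g_s)\rangle$. By the very definition of a Gröbner basis, $G=\{g_1,\dots,g_s\}$ is then a Gröbner basis of $I$ for the chosen monomial order, which settles existence.

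\textbf{Generation.} Next I would show that \emph{any} Gröbner basis $G=\{g_1,\dots,g_s\}$ of $I$ (in particular the one just constructed) generates $I$. The inclusion $\langle G\rangle\subseteq I$ is immediate since each $g_i\in I$ and $I$ is an ideal. For the reverse inclusion, take any $f\in I$ and apply the multivariate division algorithm (Algorithm \ref{alg:div}) to divide $f$ by the tuple $(g_1,\dots,g_s)$, obtaining $f = q_1 g_1 + \dots + q_s g_s + r$ where either $r=0$ or no monomial of $r$ is divisible by any of $LT(g_1),\dots,LT(g_s)$. Since $f\in I$ and each $q_i g_i\in I$, the remainder $r = f - \sum_i q_i g_i$ also lies in $I$. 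If $r\neq 0$, then $LT(r)\in \langle LT(I)\rangle = \langle LT(g_1),\dots,LT(g_s)\rangle$; but a monomial lying in a monomial ideal must be divisible by one of its monomial generators, so $LT(r)$ is divisible by some $LT(g_i)$, contradicting the defining property of the remainder. Hence $r=0$ and $f=\sum_i q_i g_i\in\langle G\rangle$, so $I\subseteq\langle G\rangle$.

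The crux of the argument is Dickson's Lemma — the finite generation of monomial ideals — which is where the Noetherian nature of $K[X_1,\dots,X_n]$ enters; everything else is bookkeeping on top of the division algorithm. The only other point requiring care is the standard fact that for a monomial ideal $J$, a monomial $x^\alpha$ lies in $J$ if and only if $x^\alpha$ is divisible by one of the monomial generators of $J$; this is precisely what converts the membership ``$LT(r)\in\langle LT(I)\rangle$'' into an actual divisibility, and hence into the contradiction that forces $r=0$.
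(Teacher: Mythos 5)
Your proof is correct and complete. The paper states this corollary without proof, simply citing \cite{ideals} (Cox--Little--O'Shea), and the argument you give --- Dickson's Lemma to extract finitely many leading monomials $LM(g_1),\dots,LM(g_s)$ generating the monomial ideal $\langle LT(I)\rangle$, followed by the division algorithm plus the observation that a nonzero remainder $r\in I$ would have $LT(r)$ divisible by some $LT(g_i)$ --- is exactly the standard proof found in that reference. The one small point worth making explicit (which you do handle correctly) is that the paper's definition has $LT(f)=LC(f)\cdot LM(f)$, so $\langle LT(I)\rangle$ and the ideal generated by $\{LM(f): f\in I\setminus\{0\}\}$ coincide because they differ only by units; this justifies treating $\langle LT(I)\rangle$ as a monomial ideal and invoking Dickson's Lemma. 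No gaps.
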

Gröbner bases have nice properties. Running the multivariate division algorithm with a Gröbner basis will not change the remainder regardless of the chosen order of the polynomials in the basis.
Indeed, if the Buchberger outputs two different remainders $r$ and $r'$ for the division of a polynomial $f$ with a Gröbner basis $G = \{g_1,\dots,g_s\}$, then there exist $g,g' \in <G>$ such that $f = g + r = g' + r'$. Thus $r-r'= g'-g \in <G>$ so $LT(r-r') \in <LT(g_1),\dots,LT(g_s)>$ by the definition of a Gröbner basis but from Theorem \ref{th:mv_div} neither $r$ nor $r'$ has monomials divisible by any of the $LT(g_1),\dots,LT(g_s)$. This implies that $r=r'$.

The Buchberger algorithm (\ref{algo_buchberger}) returns a Gröbner basis.
\begin{algorithm}[h]
	\caption{Buchberger Algorithm \cite{ideals}} \label{algo_buchberger}
	\KwIn{$F = (f_1,\dots,f_s)$} 
	\KwOut{A Gröbner basis $G$}
	\Begin{
		$G \leftarrow F$ \;
		\Repeat {$G = G'$}{
			$G'  \leftarrow G$\;
			\For{ each pair $g_i,g_j \in G$ with $ i\neq j $}{
				$S_{i,j} \leftarrow S(g_i,g_j)$\;
				$r  \leftarrow$Multivariate Division Algorithm $(G,S_{i,j})$\;
				\If{$r \neq 0$}{
					$G  \leftarrow G \cup \{ r\}$\;
				}
			}
		}		
	}	
\end{algorithm}

\begin{example}
	Let us construct the Gröbner basis for the two polynomials of Example \ref{ex:mv_div}, $(f_1 = xy + 1, f_2 = y+1)$ with lexicographic ordering. 
	\begin{enumerate}
		\item $S(f_1,f_2) = xy +x -xy  - 1 = x -1$ whose leading term is divisible neither by $LT(f_1)$ nor $LT(f_2)$. So, $f_3 = x - 1$ can be added to the basis. 
		\item $S(f_1,f_3) = xy +1 -xy  +y  = y+1 = f_2 $. The remainder of the division is 0. 
		\item $S(f_2,f_3) = xy +x -xy  +y  = x - y = f_1 + f_3$. The remainder of the division is 0. All pairs have been examined and the algorithm ends.
	\end{enumerate}
	One can verify that running the algorithm with $(f_1,f_2,f_3)$ yields the same remainder regardless of the order of the polynomials.
\end{example}

Buchberger algorithm generates numerous intermediate polynomials with total degrees that can be pretty large. 
The basis output by Buchberger algorithm can be simplified.

\begin{definition}[Reduced Gröbner basis \cite{ideals}]
	For an ideal $I \subseteq K[X_1,\dots, X_n]$, a finite subset $G \subset I$ is a reduced Gröbner basis of $I$ for the order $<$ if 
	\begin{itemize}
		\item[-] $ LC(g) = 1$ for all $g \in G$,
		\item[-] for all $g \in G$ no monomials of g lie in $\langle LT(G \setminus \{g\})\rangle$.
	\end{itemize}
\end{definition}
\begin{theorem}[\cite{ideals}]
	Let $I \neq \cbc{0}$ be a polynomial ideal. Then, for a given monomial ordering, $I$ has a reduced Gröbner basis and the reduced Gröbner basis is unique.
\end{theorem}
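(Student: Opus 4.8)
The plan is to treat existence and uniqueness separately, reusing the Gröbner-basis machinery already developed in this appendix.

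\textbf{Existence.} I would start from \emph{any} Gröbner basis $G_0$ of $I$ for the fixed monomial order — one exists by the Corollary above, and Buchberger's Algorithm \ref{algo_buchberger} produces one explicitly — and then normalize it in three passes. First, rescale each element so that $LC = 1$. Second, prune to a \emph{minimal} Gröbner basis: as long as some $g \in G_0$ has $LT(g) \in \langle LT(G_0 \setminus \{g\})\rangle$, delete $g$; such a $g$ is redundant because $\langle LT(G_0 \setminus \{g\})\rangle$ still equals $\langle LT(I)\rangle$, so what remains is still a Gröbner basis, and the process stops because $G_0$ is finite. Third, for each surviving $g_i$, replace it by the remainder $g_i'$ of $g_i$ on division by the other elements (Algorithm \ref{alg:div}). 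Minimality guarantees that no other leading term divides $LT(g_i)$, so the leading term is untouched ($LT(g_i') = LT(g_i)$), the modified family is still a Gröbner basis with all $LC = 1$, and by Theorem \ref{th:mv_div} no monomial of $g_i'$ is divisible by any of the other leading terms — which is exactly the reducedness condition. I would carry out this third pass against the \emph{frozen} minimal basis rather than updating it mid-loop, so that reducing $g_i$ cannot spoil an already-reduced $g_j$.

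\textbf{Uniqueness.} Suppose $G$ and $\widetilde G$ are both reduced Gröbner bases of $I$ for the same monomial order. The first step is to show $LT(G) = LT(\widetilde G)$ as sets: for $g \in G$, $LT(g) \in \langle LT(I)\rangle = \langle LT(\widetilde G)\rangle$ forces some $LT(\widetilde g) \mid LT(g)$, and then $LT(\widetilde g) \in \langle LT(I)\rangle = \langle LT(G)\rangle$ forces some $LT(g^\sharp) \mid LT(\widetilde g) \mid LT(g)$; minimality of $G$ gives $g^\sharp = g$, hence $LT(g) = LT(\widetilde g)$. This matches $G$ and $\widetilde G$ bijectively by leading terms. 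The second step is to show matched elements coincide: for a matched pair $g, \widetilde g$ (both with $LC = 1$), set $h := g - \widetilde g \in I$. The leading terms cancel, so if $h \neq 0$ then $LM(h)$ is a non-leading monomial of $g$ or of $\widetilde g$; reducedness of $G$ (resp. $\widetilde G$), together with the fact that a non-leading monomial cannot be divisible by that element's own leading term, puts $LM(h) \notin \langle LT(G)\rangle = \langle LT(\widetilde G)\rangle = \langle LT(I)\rangle$. But $h \in I \setminus \{0\}$ would force $LM(h) \in \langle LT(I)\rangle$, a contradiction. Hence $h = 0$ and $g = \widetilde g$ for every matched pair, so $G = \widetilde G$.

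\textbf{Main obstacle.} The conceptually delicate point is the uniqueness argument — recognizing that the difference $g - \widetilde g$ of two matched elements must vanish because its would-be leading monomial is simultaneously forced to lie in and to avoid $\langle LT(I)\rangle$; this is precisely where the full strength of ``reduced'' (rather than merely ``minimal'') is exploited. The existence part is routine but has one bookkeeping trap: the three normalization passes must be performed in the right order, and in the reduction pass the divisions must be taken against the frozen minimal basis so that earlier reductions are not undone.
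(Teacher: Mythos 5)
Your proof is correct and complete. The paper itself does not prove this theorem --- it merely states it with a citation to \cite{ideals} --- so there is no paper proof to compare against; your argument is essentially the standard textbook one (minimal basis by pruning, then inter-reduction, then a leading-term matching argument for uniqueness), and it holds up.

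One thing worth flagging: the paper's informal remark immediately following the theorem statement sketches only the first two of your three normalization passes (scale to $LC=1$, then delete any $g$ with $LT(g) \in \langle LT(G\setminus\{g\})\rangle$) and then asserts that this already yields the \emph{reduced} basis. That is not quite right --- those two passes produce a \emph{minimal} Gröbner basis, which still leaves non-leading monomials of some $g$ potentially divisible by other leading terms. Your third pass (replace each $g_i$ by its remainder on division by the others, noting that minimality pins the leading term in place) is genuinely needed to reach reducedness, and you were right to include it and to be careful about whether to update the divisor set during that pass. Your uniqueness argument --- first matching leading terms via mutual divisibility and minimality, then showing $g-\widetilde g=0$ by trapping $LM(g-\widetilde g)$ both inside and outside $\langle LT(I)\rangle$ --- is exactly where the full strength of ``reduced'' is used and is correctly executed.
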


\begin{example}
	$(f_2,f_3)$ is the reduced basis for $(f_1,f_2,f_3)$.
\end{example}

We can construct a reduced Gröbner basis for a non-zero ideal by applying Buchberger algorithm. Then by adjusting the constants of the obtained basis $G$ to make all the leading coefficients equal to $1$ and removing any $g$ with $LT(g) \in \langle LT(G \setminus \{g\})\rangle$ from $G$, we can obtain the reduced basis because for any removed $g$, the resulting set $G \setminus \{g\}$ is also a Gröbner basis.

Even using a reduced Gröbner basis, Buchberger algorithm takes a huge amount of storage. The degree of the polynomials in the reduced Gröbner basis is bounded by $2(d^2/2+d)^{2^{n-2}}$ \cite{dube} where $d$ is the total degree of the $(f_i)$ for example $d= k$ for $k$-QSAT. 
Today Faugere's algorithms are the fastest algorithms to compute Gröbner basis \cite{F4,F5}. They use the same principle as Buchberger algorithm but use linear algebra to evaluate several pairs of polynomials and apply additional criteria to avoid evaluating $S$-polynomials that will reduce to $0$.

\subsection{Hilbert’s Nullstellensatz}

Thanks to Hilbert’s Nullstellensatz, Buchberger algorithm can determine if a system of complex multivariate polynomial equations admits a common zero. Let us recall that an ideal $I$ is a maximal ideal of a ring $R$ if there are no other ideals contained between $I$ and $R$, i.e. for any ideal $J$ such that $I \subsetneq J, J = R$ .

\begin{theorem}[Hilbert’s Nullstellensatz (Zeros Theorem)]\label{th:zeros}
	Let $K$ be an algebraically closed field. Then every maximal ideal in the polynomial ring $K[X_1,\dots,X_n]$ has the form $(X_1 - a_1,\dots,X_n -a_n)$ for some $a_1,\dots, a_n \in K$.
\end{theorem}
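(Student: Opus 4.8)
The plan is to deduce this form of the Nullstellensatz from \emph{Zariski's lemma}: if a field $L$ is finitely generated as an algebra over a field $K$, then $L$ is a finite (in particular algebraic) extension of $K$. Assuming this, let $\mathfrak{m}$ be a maximal ideal of $K[X_1,\dots,X_n]$ and set $L := K[X_1,\dots,X_n]/\mathfrak{m}$. Then $L$ is a field, generated as a $K$-algebra by the residue classes $x_i := X_i + \mathfrak{m}$, so Zariski's lemma applies and $L$ is algebraic over $K$. Since $K$ is algebraically closed, the natural map $K \to L$ is an isomorphism, hence each $x_i$ is the image of a unique $a_i \in K$; equivalently $X_i - a_i \in \mathfrak{m}$ for every $i$. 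This gives $(X_1 - a_1, \dots, X_n - a_n) \subseteq \mathfrak{m}$, and since the ideal on the left is already maximal (the quotient of $K[X_1,\dots,X_n]$ by it is $K$, via evaluation at $(a_1,\dots,a_n)$), the two ideals coincide. So the whole proof hinges on Zariski's lemma.

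To prove Zariski's lemma I would use induction on the number $n$ of algebra generators, writing $L = K[x_1,\dots,x_n]$. For $n = 1$, if $x_1$ were transcendental then $K[x_1]$ would be a polynomial ring, which is not a field; hence $x_1$ is algebraic over $K$ and $L = K(x_1)$ is finite over $K$. For the inductive step, put $K_1 := K(x_1) \subseteq L$; then $L = K_1[x_2,\dots,x_n]$ is a field generated by $n-1$ elements over $K_1$, so by induction $L$ is finite algebraic over $K_1$. It remains to show $x_1$ is algebraic over $K$. Suppose not. Each $x_i$ with $i \ge 2$ satisfies a monic equation with coefficients in $K(x_1)$; clearing denominators across all these equations produces a single nonzero $f \in K[x_1]$ such that every $x_i$ is integral over the localization $K[x_1]_f$. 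Then $L$, being generated over $K[x_1]_f$ by integral elements, is integral over $K[x_1]_f$, and since $L$ is a field and $K[x_1]_f \subseteq L$ is an integral extension of integral domains, $K[x_1]_f$ must itself be a field. But $K[x_1] \cong K[T]$ has infinitely many pairwise non-associate irreducibles (Euclid's argument works over any field), so there is an irreducible $g$ not dividing $f$; then $g$ is a nonzero non-unit of $K[x_1]_f$, a contradiction. Hence $x_1$ is algebraic over $K$, so $K_1/K$ is finite, and the tower $K \subseteq K_1 \subseteq L$ shows $L/K$ is finite.

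The main obstacle is the inductive step of Zariski's lemma, and specifically the two standard ring-theoretic inputs it rests on: (i) in an integral extension of integral domains, the larger ring is a field if and only if the smaller one is; and (ii) a one-variable polynomial ring over a field has infinitely many pairwise non-associate irreducible elements, so no localization at a single element turns it into a field. Both facts are elementary but must be set up carefully; by contrast, the reduction from maximal ideals to the residue field and the use of algebraic closedness are routine. An alternative and essentially equivalent route would replace Zariski's lemma by Noether normalization, exhibiting $L$ as a finite module over a polynomial subring $K[y_1,\dots,y_d]$ and arguing $d = 0$ because $L$ is a field, but the inductive argument above keeps the prerequisites to a minimum.
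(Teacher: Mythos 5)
The paper does not actually prove this theorem: it is stated without proof in Appendix~\ref{App:buchberger} as a standard result of commutative algebra, and the paper's contribution is the corollaries drawn from it (via reduced Gr\"obner bases), not the Nullstellensatz itself. So there is no in-paper argument to compare yours against.

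That said, your proof is correct and complete. The reduction from the maximal-ideal statement to Zariski's lemma is the standard one, and your inductive proof of Zariski's lemma is the Artin--Tate style argument: for $n=1$ a transcendental generator would make $K[x_1]$ a polynomial ring and not a field; for the inductive step you correctly pass to $K_1 = K(x_1)$, invoke the inductive hypothesis to get $L$ finite over $K_1$, and then argue that $x_1$ transcendental over $K$ would force the localization $K[x_1]_f$ to be a field once you clear denominators in the integral equations for $x_2,\dots,x_n$ and use that a domain integral over a subdomain which is a field must itself be a field, contradicting the existence of infinitely many pairwise non-associate irreducibles in $K[x_1]\cong K[T]$. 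The two ring-theoretic inputs you flag, namely the ``field below iff field above'' fact for integral extensions of domains and Euclid's argument for infinitude of irreducibles in $K[T]$, are exactly the right ones and are both elementary. Your remark that Noether normalization gives an alternative route (with $d=\dim=0$ since $L$ is a field) is also accurate; the Zariski-lemma route you chose has the advantage of needing strictly fewer prerequisites, which is appropriate here since the paper treats the theorem as background.
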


\begin{corollary}
	As a consequence, a family of polynomials in $K[X_1,\dots, X_n]$ with no common zeros generates the unit ideal.
\end{corollary}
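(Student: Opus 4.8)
The plan is to prove the corollary by contraposition, using Theorem \ref{th:zeros} as the only nontrivial input. Let $\{f_\alpha\}$ be the given family of polynomials in $K[X_1,\dots,X_n]$ and let $I=\langle\{f_\alpha\}\rangle$ be the ideal they generate. I will show that if $I\neq K[X_1,\dots,X_n]$, then the family has a common zero in $K^n$; the corollary is exactly the contrapositive of this implication together with the observation that $I=K[X_1,\dots,X_n]$ means $1\in I$, i.e.\ the family generates the unit ideal.

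First I would invoke the standard fact that a proper ideal of a commutative ring with unit is contained in some maximal ideal (by Zorn's lemma; for the polynomial ring one may alternatively appeal to Noetherianity via the Hilbert basis theorem). Applying this to $I\subsetneq K[X_1,\dots,X_n]$ yields a maximal ideal $\mathfrak{m}$ with $I\subseteq\mathfrak{m}\subsetneq K[X_1,\dots,X_n]$. Since $K$ is algebraically closed, Theorem \ref{th:zeros} tells us that $\mathfrak{m}$ has the form $\mathfrak{m}=\langle X_1-a_1,\dots,X_n-a_n\rangle$ for some point $a=(a_1,\dots,a_n)\in K^n$.

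Next I would observe that every element of $\mathfrak{m}$ vanishes at $a$: writing a general element as a finite combination $\sum_{i=1}^n g_i(X)\,(X_i-a_i)$ with $g_i\in K[X_1,\dots,X_n]$, evaluation at $a$ kills each factor $X_i-a_i$, hence the whole expression. In particular every $f_\alpha\in I\subseteq\mathfrak{m}$ satisfies $f_\alpha(a)=0$, so $a$ is a common zero of the family. This contradicts the hypothesis of the corollary, and therefore $I=K[X_1,\dots,X_n]$, i.e.\ the family generates the unit ideal.

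I do not expect a genuine obstacle here: the argument is a short deduction from Theorem \ref{th:zeros}. The only points requiring a modicum of care are the "proper ideal $\subseteq$ maximal ideal" step (purely formal) and the passage from a possibly infinite family to the finiteness inherent in writing $1$ as a combination of the $f_\alpha$; the latter is harmless, since $1\in I$ already means $1$ is a finite $K[X_1,\dots,X_n]$-combination of finitely many of the $f_\alpha$, so a finite subfamily — hence the whole family — generates the unit ideal.
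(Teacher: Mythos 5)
Your argument is correct and matches the paper's proof in all essentials: both apply Theorem~\ref{th:zeros} to a hypothetical maximal ideal containing $I$, observe that the associated point $(a_1,\dots,a_n)$ would be a common zero, and conclude $I$ must be the whole ring. You simply make explicit a step the paper leaves implicit (that a proper ideal lies in some maximal ideal) and spell out why every element of $\mathfrak{m}$ vanishes at $a$.
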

\begin{proof}
	Let $I$ be an ideal generated by  $f_1, \dots f_m \in K[X_1,\dots,X_n]$ with no common zeros. 
	If $I$ is contained in a maximal ideal $ M = (X_1 - a_1,\dots, X_n - a_n)$ by Theorem \ref{th:zeros}, then $(a_1, \dots, a_n) \in K^n$ is a common root of elements of $I$, in contradiction with the hypothesis. 
	Since $I$ does not lie in any maximal ideal, it must be $K[X_1,\dots, X_n]$. 
\end{proof}
\begin{remark}
	Conversely, any family of polynomials $f_1, \dots f_m$ in $K[X_1,\dots, X_n]$ that generates the unit ideal has no common zeros. Indeed if $\langle f_1, \dots f_m \rangle = K[X_1,\dots,X_n]$ there exist $g_1,\dots,g_m \in K[X_1,\dots,X_n]$ such that $g_1f_1+ \cdots + g_mf_m = 1$. If $f_1, \dots f_m$ have a common zero, it contradicts the equality.
\end{remark}

\begin{corollary}\label{cor:null}
	A set of polynomials in an algebraically closed field has no common zeros  if and only if the reduced Gröbner basis is $\{1\}$.
\end{corollary}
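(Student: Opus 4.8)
The plan is to read this off from the corollary to Hilbert's Nullstellensatz established just above (no common zeros $\iff$ the polynomials generate the unit ideal) together with the uniqueness of the reduced Gröbner basis recalled from \cite{ideals}. Throughout, write $I=\langle f_1,\dots,f_m\rangle$ for the ideal generated by the given polynomials in $K[X_1,\dots,X_n]$, with $K$ algebraically closed.

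For the forward implication I would start from the assumption that $f_1,\dots,f_m$ have no common zeros. By the corollary to Theorem \ref{th:zeros}, $I$ lies in no maximal ideal, hence $I=K[X_1,\dots,X_n]$ and in particular $1\in I$. Next I would verify that $\{1\}$ is itself a reduced Gröbner basis of $I$ for every monomial ordering: indeed $\langle LT(\{1\})\rangle=\langle 1\rangle=K[X_1,\dots,X_n]=\langle LT(I)\rangle$, so $\{1\}$ is a Gröbner basis, and the two conditions in the definition of a reduced Gröbner basis hold trivially, since $LC(1)=1$ and the ``no monomial of $g$ lies in $\langle LT(G\setminus\{g\})\rangle$'' condition is vacuous for a singleton. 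Since $I\neq\{0\}$, the uniqueness theorem then forces the reduced Gröbner basis of $I$ to be exactly $\{1\}$, so that Buchberger's algorithm followed by the reduction step outputs $\{1\}$.

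For the converse I would assume the reduced Gröbner basis of $I$ is $\{1\}$. Since any Gröbner basis generates its ideal, $1\in I$, so there exist $g_1,\dots,g_m\in K[X_1,\dots,X_n]$ with $g_1f_1+\cdots+g_mf_m=1$; evaluating this identity at any hypothetical common zero $a$ of the $f_i$ would give $0=1$, a contradiction, so the $f_i$ have no common zeros. This is precisely the direction recorded in the remark after Theorem \ref{th:zeros}.

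I expect no genuine obstacle here: essentially all of the mathematical weight sits in the Nullstellensatz corollary and in the uniqueness of the reduced Gröbner basis, both already in hand. The one point that deserves to be stated rather than waved at is the explicit check that $\{1\}$ qualifies as a reduced Gröbner basis of the unit ideal, since that is what lets uniqueness pin the algorithm's output down to $\{1\}$ and not merely to some Gröbner basis containing a nonzero constant.
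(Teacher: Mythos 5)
Your proof is correct and follows essentially the same route as the paper's: translate "no common zeros" into "the ideal is the unit ideal" via the Nullstellensatz corollary and the accompanying remark, and then pass from $1\in I$ to the reduced Gröbner basis being $\{1\}$. The paper compresses the second step into a one-line remark (``$1$ belongs to an ideal iff $1$ belongs to the Gröbner basis, because $LT(1)=1$''), whereas you make it precise by explicitly checking that $\{1\}$ satisfies the definition of a reduced Gröbner basis for the unit ideal and invoking the uniqueness theorem; that extra care is worth keeping, since an arbitrary Gröbner basis of the unit ideal need only contain some nonzero constant, and it is the normalization and reduction steps that force the answer to be exactly $\{1\}$.
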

\begin{proof}
	A set of polynomials has no common zeros if and only if it generates the unit ideal and $1$ belongs to an ideal if and only if $1$ belongs to the Gröbner basis of the ideal for any monomial ordering (because $LT(1) = 1$) and thus belongs to the reduced Gröbner basis.
\end{proof}

From Corollary \ref{cor:null}, the Gröbner basis output by Buchberger algorithm on input $f_1,\dots,f_m$ is reduced to $\{1\}$ if and only if  $f_1,\dots,f_m$ do not have a common zero.

    \section{Evaluating the dimension of the kernel }\label{app:pattern}

In Table \ref{tab:pattern}, we give recurrence relations for $\dimKerH$ with a specific interaction graph. Here we give the proof of the recurrence relations for the loose chain and the cycle (when the projectors are separable) which are not found in the literature to the best of our knowledge. These recurrences in Lemmas \ref{lem:loose_chain_sep} and \ref{lem:loose_cycle_sep} yield upper bounds since we prove them only for separable projectors. However, we observe with numerical tests that they coincide with the `generic' values.

\begin{lemma}
    \label{lem:loose_chain_sep}
    For an instance of $k$-QSAT with $m$ separable projectors, the dimension of the kernel space for the loose chain interaction graphs, satisfies the recurrence relation 
    \begin{equation}
    \label{rec_chain_loose}
        r_m = 2^{k-1} r_{m-1} - 2^{k-2}r_{m-2}.
    \end{equation} 
    The initial conditions are $r_1=7$, $r_2=24$. 
\end{lemma}

The initial conditions for $m = 1,2$ are found by considering a special case of the $\vec{d}$-nosegay described in \cite{bounds} where $\vec{d}=(0,\dots,0)$ for $m = 1$ and $\vec{d}= (1,0,\dots,0)$ for $m = 2$. The proof of the recurrence relation is based on the arguments similar to those in \cite[Lemma 5]{bounds}.

\begin{proof}
    If the projectors are separable, then we can find a basis of the Hilbert space to decompose the two projectors at the ends of the chain as $ \ket{\alpha_m} \otimes \zket^{\otimes k-1}$ and the interior projectors as $\ket{\beta^0_m} \otimes \ket{\beta^1_m} \otimes \zket^{\otimes k-2}$ where $\ket{\alpha_m}$ and $\ket{\beta^i_m}, i\in \{0,1\}$ are the states of the qubits that appear in two clauses.

    We can construct  a basis for the solution space of the form 
    \begin{equation}
        \ket{\vec{b}} = \bigotimes_i \ket{b_i} \otimes \ket{v_{\vec{b}}}
    \end{equation}
    where $\ket{b_i}$ is the state of a unary qubit (a qubit whose vertex is unary) and  $\ket{v_{\vec{b}}}$ is the state of the remaining qubits. The solutions are constructed by satisfying the clauses from one end to the other of the loose chain. 
    
    The first clause, at the beginning of the chain, can be satisfied if one of its $k-1$ unary qubits has a state equal to $\oket$. In this situation, there are  $2^{k-1}-1$ possible ways to satisfy the first clause. The remaining degree-two qubit is left unassigned. What remains to satisfy is a loose chain with $m-1$ clauses. The subspace satisfying the new pattern is of dimension $r_{m-1}$.
    
    If the states of all the unary qubits of the first clause are set to $\zket^{\otimes k-1}$, then the last qubit is constrained to satisfy the clause. 
    The Schmidt decomposition of $\ket{v_{\vec{b}}}$ gives  $\ket{v_{\vec{b}}} = \ket{q_1} \otimes \ket{v_1} + \ket{q_2} \otimes \ket{v_2}$. Since $\ket{q_1}$ is orthogonal to $\ket{q_2}$ and $ \langle q_1| \alpha_1\rangle = 0 = \langle q_2 | \alpha_1 \rangle$ (because the first clause is satisfied), $\ket{v_{\vec{b}}}$ is separable and $\ket{q_1} = \ket{q_2} = \ket{\alpha_1^\perp}$. To extend this partial assignment to a full solution, we can apply the same method recursively for the interior clauses where there are only $k-2$ unary qubits. 
    For the last exterior clause, the degree 2 qubit is fixed so the dimension is $2^{k-1} - 1$.
    This gives in total 
    \begin{equation}\label{intotal}
       r_m = (2^{k-1}-1) r_{m-1} + (2^{k-2}-1) \sum_{i = 2}^{m-2} r_{i} + 2^{k-1} - 1.
    \end{equation}

    To obtain the announced relation 
    \eqref{rec_chain_loose}, a few algebraic manipulations are necessary. Applying the last result to $r_{m-1}$, we obtain
    \begin{align}
       r_{m-1} - (2^{k-1}-1) r_{m-2} & = (2^{k-2}-1) \sum_{i = 2}^{m-3} r_{i} + 2^{k-1} - 1 
    \end{align}
    and combining with \eqref{intotal}, we find
    \begin{align}
        r_m &= (2^{k-1}-1) r_{m-1} + (2^{k-2}-1)r_{m-2}  + r_{m-1} - (2^{k-1}-1) r_{m-2}
        \nonumber\\
        &= 2^{k-1} r_{m-1} + (2^{k-2} - 2^{k-1})r_{m-2}
        \nonumber\\
        &
        = 2^{k-1} r_{m-1} - 2^{k-2}r_{m-2}.
    \end{align}
\end{proof}

\begin{lemma}\label{lem:loose_cycle_sep}
    For an instance of $k$-QSAT with $m$ separable projectors, the dimension of the kernel space for the loose cycle interaction graphs, satisfies the recurrence relation 
    \begin{equation}\label{rec_cycle_loose}
        s_m = 2^{k-1} s_{m-1} - 2^{k-2}s_{m-2}.
    \end{equation} 
    The initial conditions are $s_2=12$, $s_3=40$.
\end{lemma}

It is more convenient to use  the notation $s_m$ (instead of $r_m$ used in Table \ref{tab:pattern})    because the proof will use the previous lemma. We also need the following lemma.

\begin{lemma}\label{lem:3_then_2}
For an instance of QSAT with $m + 1$ separable projectors whose interaction graph is a 2-qubit chain starting with a $k$-qubit clause and $k\geq 2$, the dimension of the kernel space satisfies the recurrence relation
 \begin{equation}\label{rec_tree_loose}
        t_m =  (2^{k-1} - 1)m + 2^{k} - 1.
    \end{equation} 
Here $m$ is the number of $2$-qubit clauses.
\end{lemma}

\begin{proof}
    Recall that the dimension of the kernel space for a chain of $m$ $2$-qubit clauses is $m+2$ \cite{QSATLaumann}. 
    The qubits of the first clause are labeled from $1$ to $k$ starting with the qubit both in the chain and in the $k$-qubit clause.
    We can find a basis of the Hilbert space where the projector of the first clause can be decomposed into $\ket{\phi} \otimes \ket{0}^{\otimes k-2}$ where $\ket{\phi}$ is a state of the first two qubits. 
    We can construct a basis for the solution space of the form 
    \begin{equation}
        \ket{\vec{b}} = \bigotimes_{i=3}^{k} \ket{b_i} \otimes \ket{v_{\vec{b}}}
    \end{equation}
    where $\ket{b_i}$ is the state of qubit $q_i$ and $\ket{v_{\vec{b}}}$ is the state of the remaining qubits. 
    If any of the $\ket{b_i}$ are $\oket$, the first clause is satisfied. What remains to satisfy is a $2$-qubit chain of $m$ clauses. The unassigned qubit $q_2$ accounts for 2  degrees of freedom.  There are $2^{k-2}-1$ possibilities to satisfy the first clause this way. 
    If the qubits $q_2,\dots,q_k$ are assigned to $\ket{0}^{\otimes k-2}$, then it remains to satisfy a $2$-qubit chain of $m+1$ clauses. Summing all contributions we find 
    \begin{equation}
        t_m = 2(2^{k-2}-1)(m+2) + (m+3) = (2^{k-1}-1) m + 2^{k} - 1.
    \end{equation}
\end{proof}

\begin{proof}[Proof of Lemma \ref{lem:loose_cycle_sep}] 
    We start by looking at the interaction graphs that are loose chains composed of $m$ clauses of $k$-qubit and ending with $p$ $2$-qubit clauses.  Let us denote $r_{m,p}$ the dimension of the kernel space of the instances with these interaction graphs.
    We can show the following recurrence relation over $m$
    \begin{equation} \label{eq:mixed}
        r_{m,p} = 2^{k-1}r_{m-1,p} - 2^{k-2} r_{m-2,p}
    \end{equation}
    with the same argument of the proof of Lemma \ref{lem:loose_chain_sep} and with the initialization given by $r_{m,0} = m+2$ (chain only composed of $2$-qubit clauses) and $r_{m,1} = t_m$ from Lemma \ref{lem:3_then_2}.

    Regarding the loose cycle, we remark that fixing the value of a unary qubit in a clause $c$ breaks the cycle into a loose chain with $m-1$ clauses if this assignment satisfies $c$. There are $2^{k-2} -1$ ways to satisfy a clause with unary qubits. If, after assigning a value to all unary qubits in $c$, the clause is still unsatisfied, the resulting interaction graph is a loose cycle with one 2-qubit clause. We can repeat the procedure and assign unary qubits of the next clause $c+1$ in the cycle. If clause $c+1$ is satisfied, the new interaction graph is a loose chain ending with one $2$-qubit clause. If $c+1$ is unsatisfied, the cycle graph now contains two $2$-qubit clauses. We can iterate this procedure. At step $1 \leq p\leq m$, if the clause $c+p$ is satisfied, the cycle is broken into a loose chain ending with $p-1$ $2$-qubit clauses and if the clause $c+p$ is unsatisfied the new interaction graph is a loose cycle with $p$ $2$-qubit clauses. 

    After $m$ steps, if all unary qubits are assigned and do not satisfy any of the clauses, the interaction graph is a cycle composed of only $2$-qubit clauses, and the dimension of the kernel space for this graph is $2$ \cite{QSATLaumann}. Summing all contributions gives 
    \begin{equation}
        s_m =  (2^{k-2} -1 ) \sum_{i=0}^{m-1}  r_{i,m-1-i} + 2.
    \end{equation}
    With some algebraic manipulations, we can obtain the desired relation, i.e,
    \begin{align}
        \nonumber
	    s_m &= (2^{k-2} -1) \sum_{i=2}^{m-1}  r_{i,m-1-i}  +  (2^{k-2} -1 ) ( r_{0,m-1} + r_{1,m-2}) + 2\\ \label{eq:37}
         &= (2^{k-2} -1) \sum_{i=2}^{m-1}  \left( 2^{k-1} r_{i-1,m-1-i} - 2^{k-2} r_{i-2,m-1-i} \right ) + (2^{k-2} -1 ) ( r_{0,m-1} + r_{1,m-2}) + 2\\ \label{eq:38}
	    &= 2^{k-1} s_{m-1} - 2^{k-2} s_{m-2}  + (2^{k-2} -1 )( r_{0,m-1}  + r_{1,m-2} - 2^{k-1} r_{0,m-2})  - 2^{k-1} + 2  \\
	    &= 2^{k-1} s_{m-1} - 2^{k-2} s_{m-2}. \label{eq:39}
    \end{align}
 Equation (\ref{eq:37}) is obtained by applying (\ref{eq:mixed}) to each $r_{i,m-1-i}$. In (\ref{eq:38}), we bring out $s_{m-1}$ and $s_{m-2}$. Finally, we replace in (\ref{eq:38}) $r_{0,m-1},r_{0,m-2}$ and $r_{1,m-2}$ by their algebraic expressions to obtain (\ref{eq:39}).
\end{proof}

\end{appendix}

\newpage

\printbibliography

\end{document}